\documentclass[A4paper,notitlepage,superscriptaddress,longbibliography,aps,twocolumn]{revtex4-1}
\usepackage{graphicx}
\graphicspath{{figures/}}
\usepackage{amsfonts}
\usepackage{amssymb}
\usepackage{mathrsfs}
\usepackage{soul}
\usepackage[usenames, dvipsnames]{color}
\definecolor{darkdarkgray}{gray}{0.1}
\usepackage[colorlinks=true,citecolor=blue,urlcolor=purple,linkcolor=darkdarkgray]{hyperref}

\usepackage{tikz}
\usepackage{pdfpages}
\makeatletter
\AtBeginDocument{\let\LS@rot\@undefined}
\makeatother

\usepackage[mathcal]{euscript}

\newcommand{\mbb}[1]{\mathbb{#1}}
\newcommand{\mcl}[1]{\mathcal{#1}}

\usepackage{amsmath}

\renewcommand{\sup}{\operatorname{sup}}

\renewcommand{\d}{\mathrm{d}}
\newcommand{\Tr}{\operatorname{Tr}}

\newcommand{\GL}{GL}

\newcommand{\ket}[1]{\left| #1 \right\rangle}

\newcommand{\ketbra}[2]{| #1 \rangle \langle #2 |}

\newcommand{\dprod}[2]{\left\langle #1, #2\right\rangle}

\newcommand{\abs}[1]{\left| #1\right|}

\newcommand{\norm}[1]{\left\| #1 \right\|}

\newcommand{\kommentar}[1]{}

\usepackage{amsthm}
\newtheorem{theorem}{Theorem}
\newtheorem{lemma}[theorem]{Lemma}
\newtheorem{proposition}[theorem]{Proposition}
\newtheorem{corollary}[theorem]{Corollary}

\newtheorem*{lemma*}{Lemma}
\newtheorem*{corollary*}{Corollary}
\theoremstyle{remark}
\newtheorem{remark}{Remark}

\usepackage{graphicx}

\usepackage{float}
\usepackage{color}
\definecolor{npurple}{rgb}{0.3,0,0.6}

\renewcommand{\S}{\mathcal{S}}

\newcommand{\M}{\mcl{M}}

\newcommand{\II}{\openone}
\newcommand{\I}{\openone}

\newcommand{\RR}{\mbb{R}}
\newcommand{\CC}{\mbb{C}}

\renewcommand{\P}{\mcl{P}}

\newcommand{\K}{\mcl{K}}

\renewcommand{\H}{\mcl{H}}

\newcommand{\B}{\mcl{B}}

\newcommand{\n}{\pmb{n}}
\newcommand{\x}{\pmb{x}}

\renewcommand{\a}{\pmb{a}}
\renewcommand{\b}{\pmb{b}}
\renewcommand{\c}{\pmb{c}}
\newcommand{\s}{\pmb{s}}
\renewcommand{\r}{\pmb{r}}
\newcommand{\vv}{\pmb{v}}

\newcommand{\G}{\mathcal{G}}

\newcommand{\U}{\operatorname{U}}
\renewcommand{\GL}{\operatorname{GL}}

\newcommand{\red}[1]{{\color{black}#1}}

\newcommand{\blue}[1]{{\color{black}#1}}

\definecolor{mygray}{gray}{0.6}

\newcommand{\new}[1]{{\color{black} #1}}


\begin{document}

\setstcolor{red}

\title{The geometry of Einstein-Podolsky-Rosen correlations}
\date{\today}

\author{H. Chau Nguyen}
\email{chau.nguyen@uni-siegen.de}
\affiliation{Naturwissenschaftlich-Technische Fakult\"at, Universit\"at Siegen,
Walter-Flex-Stra{\ss}e 3, 57068 Siegen, Germany}

\author{Huy-Viet Nguyen}
\email{nhviet@iop.vast.ac.vn}
\affiliation{Institute of Physics, Vietnam Academy of Science and Technology, 
10 Dao Tan, Hanoi, Vietnam}

\author{Otfried G\"{u}hne}
\email{otfried.guehne@uni-siegen.de}
\affiliation{Naturwissenschaftlich-Technische Fakult\"at, Universit\"at Siegen,
Walter-Flex-Stra{\ss}e 3, 57068 Siegen, Germany}

\begin{abstract}
Correlations between distant particles are central to many puzzles and
paradoxes of quantum mechanics and, at the same time, underpin various 
applications such as quantum cryptography and metrology. Originally in 1935, 
Einstein, Podolsky and Rosen (EPR) used these correlations to argue against 
the completeness of quantum mechanics. To formalise their argument, 
Schr\"odinger subsequently introduced the notion of quantum steering. Still, 
the question which quantum states can be used for EPR steering and which not 
remained open. Here we show that quantum steering can be viewed as an 
inclusion problem in convex geometry. For the case of two spin-$\frac{1}{2}$ 
particles, this approach completely characterises the set of states leading 
to EPR steering. \red{In addition, we discuss the generalisation to higher-dimensional 
systems as well as generalised measurements.} Our results find applications in 
various protocols in quantum information processing, and moreover they are 
linked to quantum mechanical phenomena such as uncertainty relations and the 
question which observables in quantum mechanics are jointly measurable.
\end{abstract}

\maketitle

In the simplest setting, the argument can be explained with two 
spin-$\frac{1}{2}$ particles, also called qubits, which are controlled 
by Alice and Bob at different locations \cite{epr, bohm}. The particles 
are in the singlet state, 
\begin{equation}
\ket{\psi}_{AB} = \frac{1}{\sqrt{2}}(\ket{01}-\ket{10}),
\label{eq-singlet}
\end{equation}
where $\ket{0}= \ket{\uparrow}_z$ and $\ket{1}= \ket{\downarrow}_z$
denote the two possible spin orientations in the $z$-direction. If Alice 
measures the spin of her particle in the $z$-direction, then, depending 
on the obtained result, Bob's state will be either in state $\ket{0}$ or 
state $\ket{1}$, due to the perfect anti-correlations of the singlet state. 
On the other hand, if Alice rotates her measurement device to measure the 
spin in the $x$-direction, Bob's conditional states are accordingly rotated 
to states $\ket{\uparrow}_x = \frac{1}{\sqrt{2}} (\ket{0}+\ket{1})$ or 
$\ket{\downarrow}_x = \frac{1}{\sqrt{2}}(\ket{0}-\ket{1})$ (see Figure~\ref{fig:steering}). 
So, by choosing her measurement, Alice can predict with certainty both the 
values of $z$- and $x$-measurements on Bob's side. According to EPR, this means 
that both observables must correspond to ``elements of reality''. As the quantum 
mechanical formalism does not allow one to assign simultaneously definite values to 
these observables, EPR concluded that quantum mechanics is incomplete.
As Schr\"odinger noted, Alice cannot transfer any information to Bob by choosing
her measurement directions, but she can determine whether the wave function on his 
side is in an eigenstate of the Pauli matrix $\sigma_x$ or $\sigma_z$. This {\it 
steering} of the wave function is, in Schr\"odinger's own words, ``magic'', as it 
forces Bob to believe that Alice can influence his particle from a 
distance~\cite{schroedingerletter, schroedingerpaper}. 

\begin{figure}[t!]
\begin{center}
\includegraphics[width=0.35\textwidth]{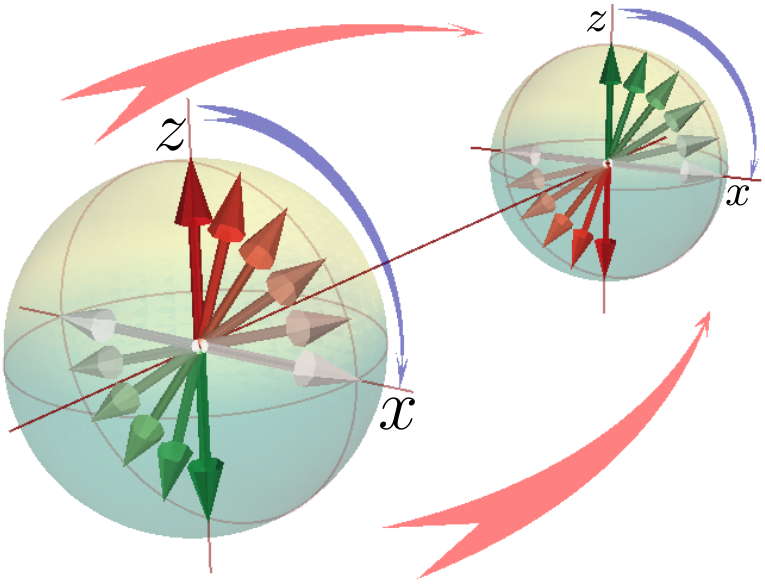}
\end{center}
\caption{Visualisation of the steering phenomenon: Alice (in the 
forefront) measures the spin of her particle in an arbitrary direction. 
Due to the quantum correlations of the singlet state, Bob's state 
(in the back) is projected onto the opposite direction. Bob cannot explain
this phenomenon by assuming pre-existing states at his location, 
so he has to believe that Alice can influence his state from a distance.
}
\label{fig:steering}
\end{figure}

The situation for general quantum states other than the singlet state
can be formalised as follows \cite{wiseman1}: Alice and Bob share a bipartite  
quantum state $\varrho_{AB}$ and Alice performs different measurements. For 
each of Alice's measurement setting $s$ and result $r$, Bob remains with a 
conditional state $\varrho_{r|s}$. These conditional states obey the condition 
$\sum_r \varrho_{r|s} = \varrho_B$, meaning that the reduced state 
$\varrho_B=\Tr_A(\varrho_{AB})$ on Bob's side is independent of Alice's choice 
of measurements. However, after characterising the states $\varrho_{r|s}$, Bob may try to 
explain their appearance as follows: He assumes that initially his particle was 
in some states $\sigma_\lambda$ with probability $p(\lambda)$, parametrised by 
some parameter $\lambda$. Then, Alice's measurement and result just gave him 
additional information on the probability of the states. This leads to states of 
the form \cite{wiseman1}
\begin{equation}
\varrho_{r|s} 
= p(r|s) \!\!\int\!\!d\lambda p(\lambda|r,s) \sigma_\lambda. 
\label{eq-lhsmodel}
\end{equation}
This can be interpreted as if the probability distribution $p(\lambda)$ is 
just updated to $p(\lambda|r,s)$, depending on the classical information about 
the result and setting $r,s$. If a representation as in equation~(\ref{eq-lhsmodel})
exists, Bob does not need to assume any kind of action at a distance to
explain the post-measurement states $\varrho_{r|s}$. Consequently, he does not need 
to believe that Alice can steer his state by her measurements and one also says that 
the state $\varrho_{AB}$ is  \emph{unsteerable} or has a local hidden state (LHS) 
model. If such a model does not exist, Bob is required to believe that Alice can steer 
the state in his laboratory by some action at a distance. In this case, the state 
is said to be \emph{steerable}.

So far, EPR steering has been observed in several experiments
\cite{experiment1, experiment2, experiment3, experiment4, experiment5, 
experiment6, experiment7, experiment8}, but the question which states 
can be used for EPR steering and which not remained, despite considerable 
theoretical effort \cite{wiseman2,quintinoinequivalence, pusey2013, gallegoresource, chaujpa,
steeringsdp, brunnerlhsrecent, bowles2014, chauepl, chaujpa, Jevtic2015a, chaupra, paulsdp, 
gdansk, yu1, yu2}, open. It is known that the set 
of steerable quantum states is strictly smaller than the set of entangled 
states and strictly larger than the set of states leading to a Bell inequality
violation. But both entanglement and Bell nonlocality are not
well understood \cite{brunnerreview, horodeckireview}; only for the case 
of small dimensions \red{or special families of states} the famous Peres-Horodecki 
criterion provides an exact characterisation of the entangled states \cite{peresppt, horodeckippt}. 
In this paper we present a solution to the problem of steerability for 
the case of projective measurements carried out on two qubits.
\red{The generalisation of the technique to higher-dimensional systems 
as well as taking into account generalised measurements is possible.}

{\it Conditional states and LHS models.---}
Let us characterise the conditional states and possible LHS models. 
For the former, we note that any bipartite quantum state $\varrho_{AB}$ 
defines a map $\Lambda$ from operators on Alice's space to operators on 
Bob's space via
\begin{equation}
\Lambda(X_A) = \Tr_A(\varrho_{AB} X_A \otimes \openone_B). 
\end{equation}
This map characterises the conditional states as follows: A result of 
a measurement setting is described by an effect $E_{r|s}$ which is an 
operator with positive eigenvalues not larger than one. The conditional 
state is then just given by 
$\varrho_{r|s}= \Tr_A(\varrho_{AB} E_{r|s} \red{\otimes \openone_B}) = \Lambda(E_{r|s}).$ 

For our approach it is important that $\Lambda$ has a clear geometrical meaning 
(see Figure~\ref{fig:geometry1}). The set of measurement effects on Alice's side, 
denoted by $\M_A=\{E_{r|s}\;|\;0\leq E_{r|s} \leq \openone_A \}$, is a four-dimensional
double cone, where $0$ and $\openone_A$ correspond to the south- and north pole, and the 
pure effects of the form $E_{r|s}=\ketbra{\psi}{\psi}$ constitute the equator, which is 
nothing but Alice's Bloch sphere. The map $\Lambda$ is linear and maps this
double cone to a  smaller double cone, denoted by $\Lambda(\M_A)$, which we
call the {\it set of steering outcomes}~\cite{chaupra}. For our purposes, we can
assume without loss of generality that the map $\Lambda$ is invertible; the 
proof of this and all forthcoming mathematical statements, can be found in 
the Appendix \cite{appremark}.

\begin{figure}[t]
\begin{center}
\includegraphics[width=0.5\textwidth]{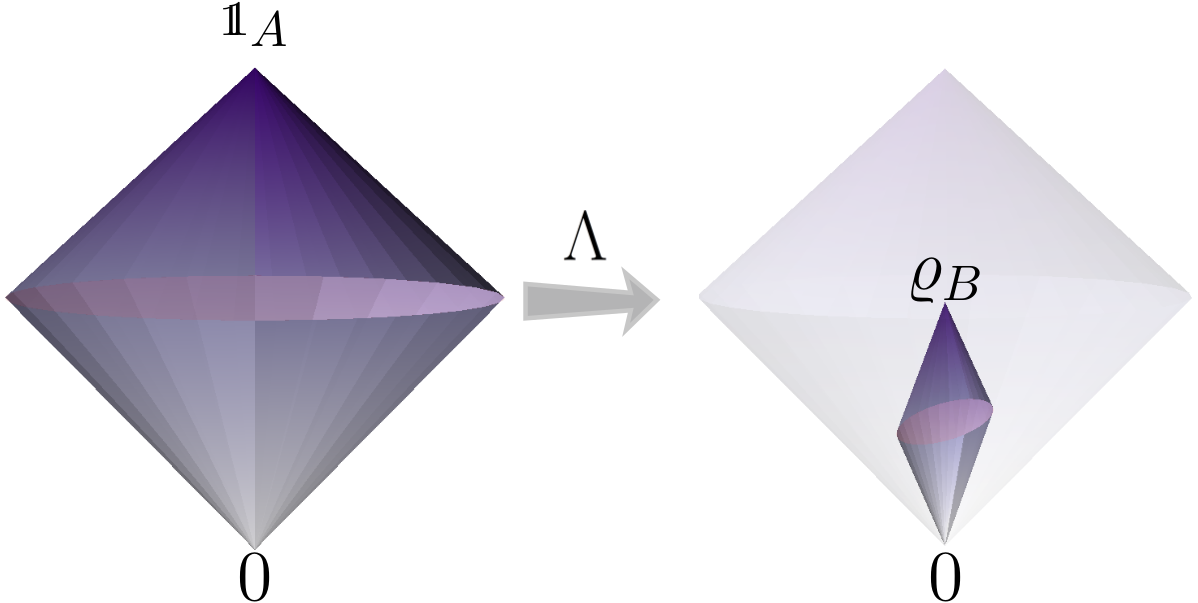}
\end{center}
\caption{Geometrical view of the map $\Lambda$: The set 
of measurement effects $\M_A$ on Alice's side is a four-dimensional 
double cone, where $0$ and $\openone_A$ correspond to the south- and north 
pole and the equator is formed by the Bloch sphere. Under the action of 
the linear map $\Lambda$ this double cone is mapped onto a subset of itself, 
with $\Lambda(0)=0$ and $\Lambda(\openone_A)=\varrho_B$. The resulting set of 
steering outcomes is completely characterised by $\varrho_B$ and the 
image of the equator under $\Lambda$.
}
\label{fig:geometry1}
\end{figure}

Let us now characterise the set of all possible LHS models. 
We first restrict our attention to projective measurements on two qubits, 
later we discuss the general case. Projective 
measurements are described by two orthogonal projectors 
$E_{+|s}$ and $E_{-|s}$ summing up to the identity, 
$E_{+|s} + E_{-|s} = \openone_A.$ It is known that the LHS 
model~\eqref{eq-lhsmodel} can be rewritten as \cite{chaujpa}
\begin{equation}
\varrho_{\pm|s} 
= \Lambda(E_{\pm|s}) 
= \int_{\sigma \in \mathcal{B}_B}  \!\!\!\!\! \d\mu(\sigma) G_{\pm|s}( \sigma) \sigma,
\end{equation}
with an integration over a probability distribution $\mu$ over all pure and mixed 
states in Bob's Bloch ball $\mathcal{B}_B$. The so-called response functions 
$G_{\pm|s}(\sigma)$ are positive and normalised as $G_{+|s}+G_{-|s}=1$, which
implies that they always have to obey the minimal requirement
\begin{equation}
\varrho_B = \Lambda(\openone_A) = \int_{\sigma \in \mathcal{B}_B} \!\!\!\!\!\d \mu (\sigma) \sigma. 
\label{eq:minimal_requirement}
\end{equation}
In this scenario the set of all conditional states $\varrho_{\pm|s}$ 
that can be modelled with an LHS model is characterised by the probability 
distribution $\mu$ only. We call this set the {\it capacity of $\mu$}
and denote it by~\cite{chaupra,chaujpa} 
\begin{equation}
\K(\mu) = \Big\{ K = \int_{\sigma \in \mathcal{B}_B} \!\!\!\!\! \d \mu (\sigma) g(\sigma) \sigma : 
0 \le g(\sigma) \le 1\Big\}.
\end{equation}
%

{\it The geometric approach.---}
In order to decide steerability, one has to compare the set of steering 
outcomes  with the possible capacities. If one
finds an LHS ensemble $\mu$ for which $\Lambda(\M_A)$ is a subset of $\K(\mu)$, 
then $\varrho_{AB}$ is not steerable. On the other hand, if $\K(\mu)$ does not 
cover $\Lambda(\M_A)$ for all $\mu$, then $\varrho_{AB}$ is steerable.

\begin{figure}[t]
\begin{center}
\includegraphics[width=0.45\textwidth]{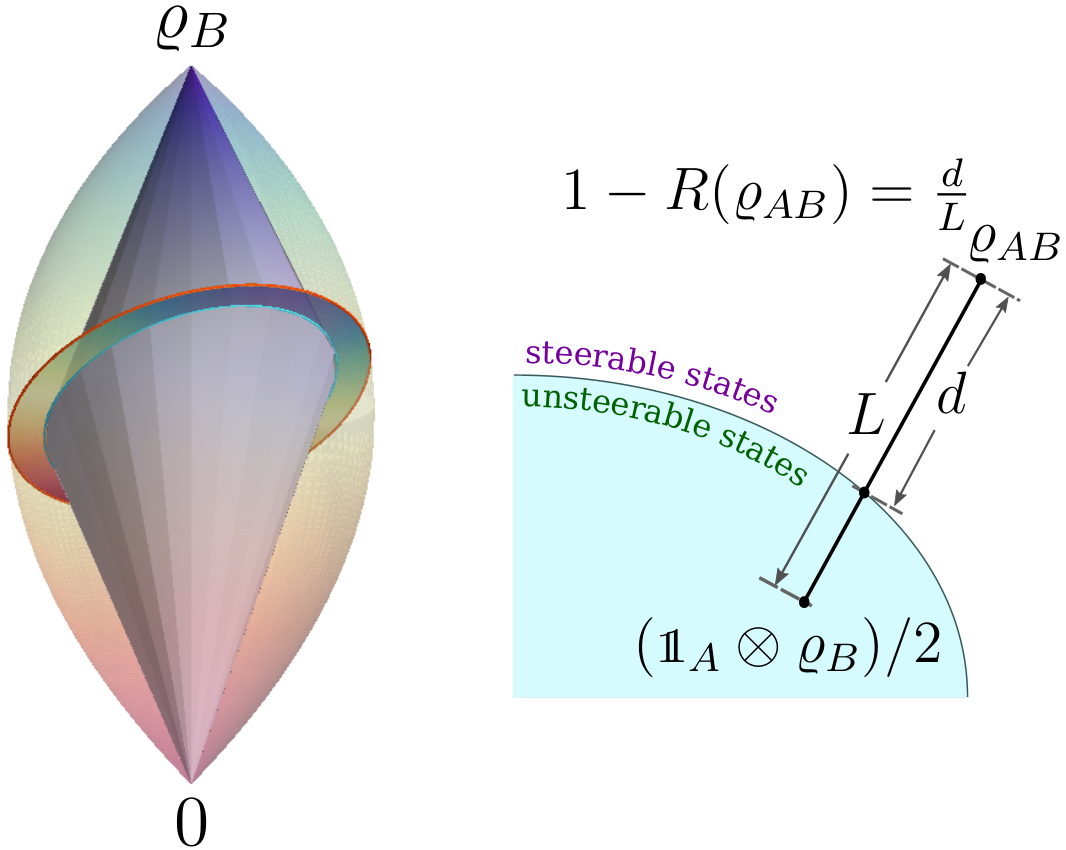}
\end{center}
\caption{(left) The geometrical interpretation of the critical radius:
The capacity $\K(\mu)$ is a convex set containing $0$ and $\varrho_B$. The double
cone {$\Lambda(\M_A)$} has $0$
and $\varrho_B$ as south- and north
pole, so {$\Lambda(\M_A)$} is contained in
$\K(\mu)$ if and only if its equator (cyan)
is in $\K(\mu)$. This can be checked by computing the radius of $\K(\mu)$
in the appropriate plane and metric (red). (right) Operational 
meaning of the critical radius: $1-R(\varrho_{AB})$ measures the 
distance from $\varrho$ to the surface of unsteerable/steerable states 
relatively to $({\openone_A} \otimes \varrho_B)/2$.}
\label{fig:geometry2}
\end{figure}

Checking the inclusion relation between these sets
is simplified by geometry, see Figure~\ref{fig:geometry2}. 
$\K(\mu)$ is a convex set which contains $0$ and $\varrho_B$. 
The double cone $\Lambda(\M_A)$ is contained in this set if and 
only if its equator is contained in $\K(\mu)$. If we choose the metric 
appropriately, the equator of $\Lambda(\M_A)$ is a ball of radius one. 
Whether $\K(\mu)$ contains the ball or not, can thus be determined by 
calculating the {\it principal radius}, defined as the minimal distance 
from the boundary of $\K(\mu)$ in the equator hyperplane to the centre 
of the ball~\cite{chauepl}.

Our first main result is that the principal radius for a given 
probability distribution can be computed as a simple optimisation 
problem, given by
\begin{equation}
r(\varrho_{AB},\mu)= \min_{C} \frac{1}{\sqrt{2}\Vert{\Tr_B[\bar{\varrho} (\openone_A \otimes C)]}\Vert}
\int_{\sigma \in \mathcal{B}_B} \!\!\!\!\!\!\!\!\!\!\!\!\! \d \mu (\sigma)  |\Tr_B({C}{\sigma})|,
\label{eq:simple_r-maintext}
\end{equation}
where $\bar{\varrho}=\varrho_{AB}-(\openone_{A} \otimes \varrho_B)/2$,
the norm is given by $\norm{X}=\sqrt{\Tr(X^\dagger X)}$, and  the
minimisation runs over all single-qubit observables 
$C$ on Bob's space. \red{The proof of Eq.~\eqref{eq:simple_r-maintext}
relies on the Bloch representation and is given in the Appendix 
\cite{appremark}.}

Equation~\eqref{eq:simple_r-maintext} allows us to compute the principal radius 
for a given distribution $\mu$ over states in Bob's Bloch ball. It remains to 
maximise this over all possible probability distributions. This leads to the 
{\it critical radius} 
\begin{equation}
R(\varrho_{AB}) = \max_{\mu} r(\varrho_{AB},\mu).
\label{eq:rcrit-def}
\end{equation}
In this way, we have reduced the characterisation of steering to the
computation of the critical radius and we can formulate: {\it A two-qubit state
can be used for EPR steering, if and only if the critical radius is smaller
than one.} All that remains to be done is to characterise the critical radius and 
to provide efficient methods for computing it. \red{Showing the existence of the 
maximum in Eq.~\eqref{eq:rcrit-def} requires careful continuity arguments as explained 
in the Appendix \cite{appremark}.} 

{\it Properties of the critical radius.---}
The
first interesting property of the critical radius is its {\it scaling}. Given 
a two-qubit state, we can consider a family of states by mixing 
it with a special kind of separable noise,
\begin{equation}
\varrho_\alpha^{\rm noise} = \alpha \varrho_{AB} 
+ (1-\alpha) \frac{\openone_A}{2} \otimes \varrho_B,
\end{equation}
where $0 \leq \alpha \leq 1$. For these states, we can show that
\begin{equation}
R(\varrho_\alpha^{\rm noise}) = \frac{1}{\alpha} R(\varrho_{AB}).
\end{equation}
This implies that computing the critical radius for $\varrho_{AB}$ also gives 
its values on the entire line in the state space parametrised by 
$\varrho_\alpha^{\rm noise}$. 
This scaling sheds light on the operational meaning of the critical radius: 
$1-R(\varrho_{AB})$ measures the distance from $\varrho_{AB}$ along this line to the border 
between steerable and unsteerable states relatively to 
$({\openone_A}\otimes \varrho_B)/2$.

The second important property is the {\it symmetry} of the critical radius. 
Given a state $\varrho_{AB}$, we consider the family of states
\begin{equation}
\tilde{\varrho}_{AB} = \frac{1}{\mathcal{N}}
(U_A \otimes V_B) \varrho_{AB} (U^\dagger_A \otimes V^\dagger_B), 
\label{eq:localsymm}
\end{equation}
where $U_A$ is a unitary matrix on Alice's side, $V_B$ is an invertible
matrix on Bob's side, and $\mathcal{N}$ denotes
the normalisation. For this family of states one can show that 
$R(\varrho_{AB})=R(\tilde{\varrho}_{AB})$. This symmetry of the critical 
radius thus generalises and formalises quantitatively the early observation 
that the existence of an LHS model is invariant under Alice's local unitary 
and Bob's local filtering {operations}~\cite{gallegoresource, quintinoinequivalence, 
roope1}.
One may ask to which extent a mixed two-qubit state can be simplified with 
transformations as in equation~(\ref{eq:localsymm}). The answer is that any 
entangled state can be brought into a canonical form without changing its 
critical radius. In the canonical form, $\varrho_B = {\openone_B}/{2}$ is 
maximally mixed and, in addition, all two-body correlations  vanish, up to the diagonal ones, 
$s_i = \Tr(\varrho_{AB}\sigma_i \otimes \sigma_i)$ for $i=x,y,z$. So the critical 
radius of a state is uniquely determined by six parameters, coming from the reduced 
state of Alice, parametrised by $a_i=\Tr(\varrho_{AB}\sigma_i \otimes \openone_B)$ 
and by a diagonal $3 \times 3$-matrix $T$.

Some facts about steering follow directly from the two properties mentioned 
above. First, as any pure entangled state $\ket{\psi}$ is equivalent to a Bell state in the sense of equation~(\ref{eq:localsymm}), one can easily show that $R(\ketbra{\psi}{\psi})={1}/{2}$. Second, 
the previous properties allow for characterising the convex sets 
$Q_t= \{\varrho_{AB}: R(\varrho_{AB}) \ge t \}$ and one can, for some cases, compute the
tangent hyperplanes, resulting in optimal steering inequalities. Finally, generalising 
equation~(\ref{eq:localsymm}), $R$ is also invariant under the inversion of the Bloch 
sphere of either of the parties. This is rather surprising as entanglement of two-qubit 
states is equivalent to the occurrence of negative eigenvalues after partial transposition 
\cite{peresppt,horodeckippt}, which can be seen as a local inversion of
the Bloch sphere. So, entanglement and quantum steering are, in fact, types of quantum 
correlations with fundamentally different mathematical structures.

\begin{figure}[t]
\begin{center}
\includegraphics[width=0.45\textwidth]{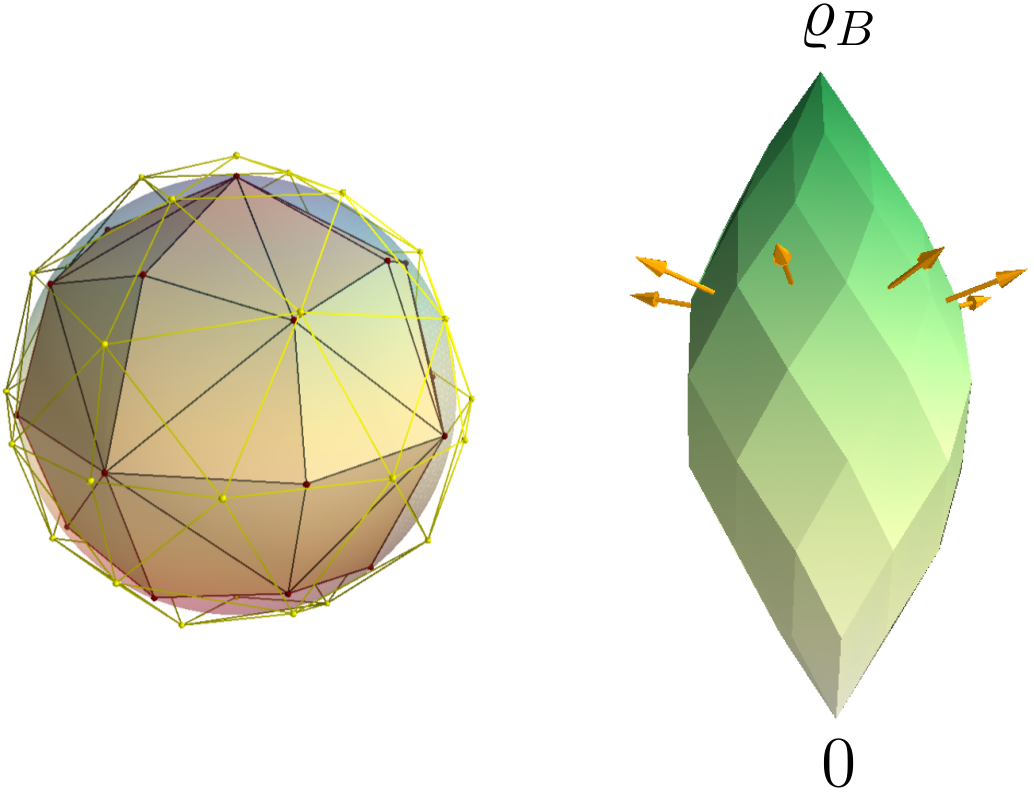}
\end{center}
\caption{(left) 
In order to characterise all probability distributions 
on the Bloch sphere, one can use inner and outer approximations of the sphere 
by polytopes. For the polytopes and the optimisation problem in
equation~(\ref{eq:simple_r-maintext}) it suffices to consider probability 
distributions supported at the extremal points. 
(right) For a given polytope, the capacity $\K(\mu)$ is a polytope
again. Consequently, when computing the principal radius it suffices
to consider the (finite) set of directions corresponding to the 
faces of the capacity polytope. 
}
\label{fig:geometry3}
\end{figure}

{\it Computation of the critical radius.---}
For practical convenience, the calculation of the critical radius of a generic 
state is carried out starting from its canonical form. Then, in order to evaluate 
equation~(\ref{eq:rcrit-def}) one needs to characterise the possible distributions 
$\mu$. Instead of maximising over all probability distributions on the Bloch ball, 
we approximate the ball by inner or outer polytopes as illustrated in 
Figure~\ref{fig:geometry3}. 
Crucially, for the special function in equation~(\ref{eq:simple_r-maintext}) one 
can show that optimising over probability distributions supported at the {vertices} 
of the outer (inner) polytope leads to an upper (lower) bound $R_{\rm out}$ ($R_{\rm in}$) for the critical radius. 
One may even simplify the calculation: If the inner polytope is chosen to have inversion 
symmetry, one has $R_{\rm in} \leq R(\varrho_{AB}) \leq R_{\rm in}/r_{\rm in}$, 
where $r_{\mathrm{in}}$ is the inscribed radius of the polytope. Then the relative
difference between the bounds depends on the polytope only and not on details of 
the state. This bound also shows that as $r_{\mathrm{in}}$ converges to $1$ one obtains 
an asymptotically exact value for $R(\varrho_{AB})$.

For a given polytope with $N$ vertices, the calculation of the critical 
radius proceeds as follows: The capacity $\mathcal{K}(\mu)$ is a polytope in the
four-dimensional space with $O(N^3)$ facets. When computing the critical
radius, it suffices to consider the finite set of operators $C$ that
correspond to normal vectors of these facets, and these operators do 
not depend on the probability distribution on the polytope. As a consequence, 
the optimisation over probability distributions is formulated as a linear
program of finite size.

To illustrate the power of the method, we show examples of two-dimensional 
random cross-sections of the set of two-qubit states, see 
Figure~\ref{fig:cross_sections}. We observe that the computed upper and lower 
bounds for the critical radius are very tight even when a polytope with 
$252$ vertices was used. A detailed
discussion including further examples of states is given in the Appendix \cite{appremark}.

Prior to our work, certain necessary and sufficient conditions for steering were proposed~\cite{yu1,yu2}, however their computability cannot be generally illustrated. 
There have been also attempts in estimating the boundary of 
the set of unsteerable states for special families of states with 
semidefinite programming (SDP) \cite{paulsdp, cavalcantisdp, steeringsdp, 
brunnerlhsrecent}. However, the SDP size increases exponentially 
with the number of measurements used to approximate the set of all measurements. 
This limitation hinders the accurate locating of the boundary even for special choices of 
cross-sections. Contrary to that, here we obtained a linear program, 
of which the size increases cubically with the number of approximated points. 
Both lower bound and upper bound with a pre-defined difference less that $1\%$ 
for the critical radius of a generic state can be easily obtained in
a reasonable computational time. Our implementation is available at a public repository~\cite{gitlab}.

\begin{figure}
\begin{center}
\includegraphics[width=0.21\textwidth]{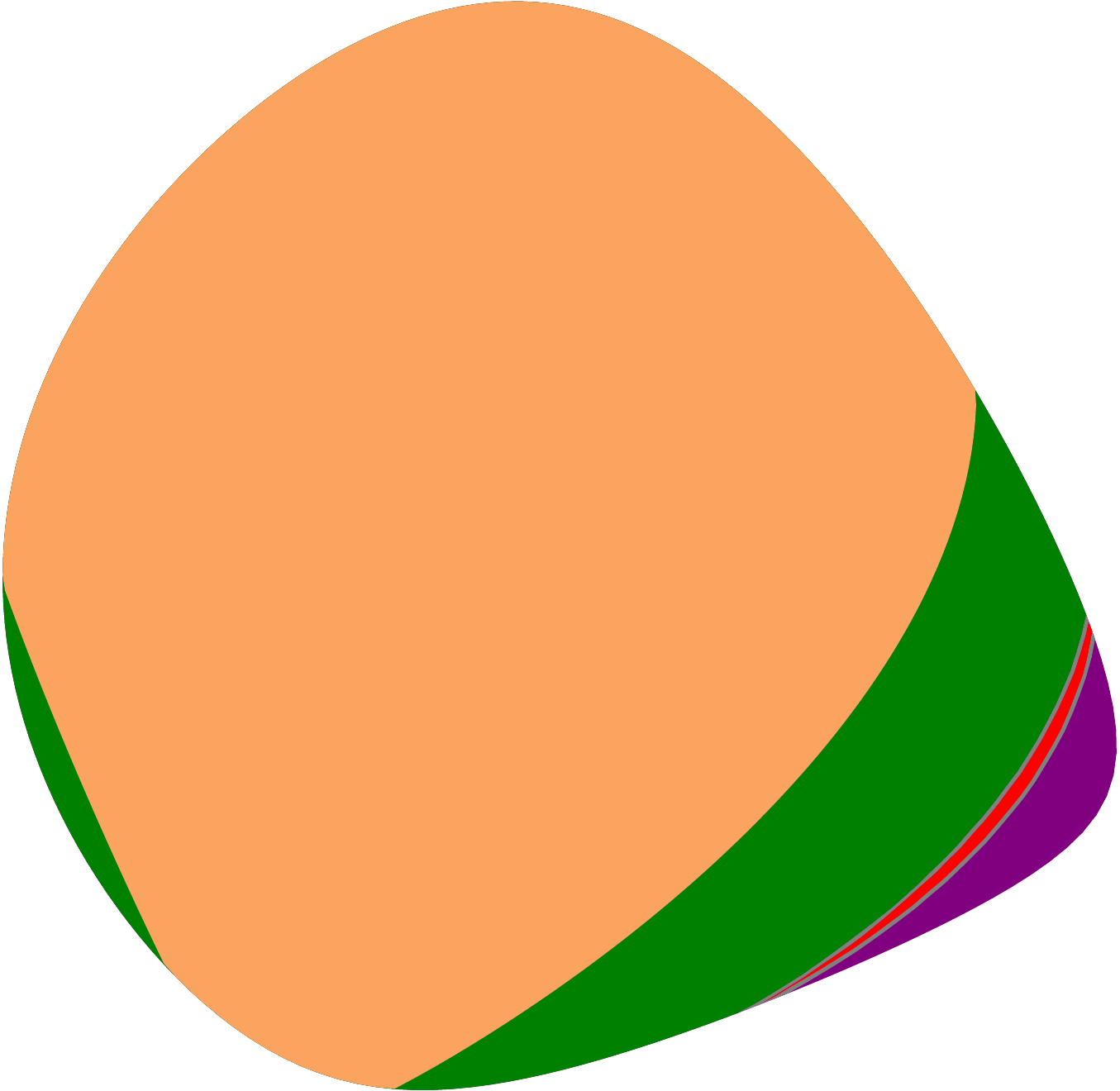}
\hspace{0.3cm}
\includegraphics[width=0.21\textwidth]{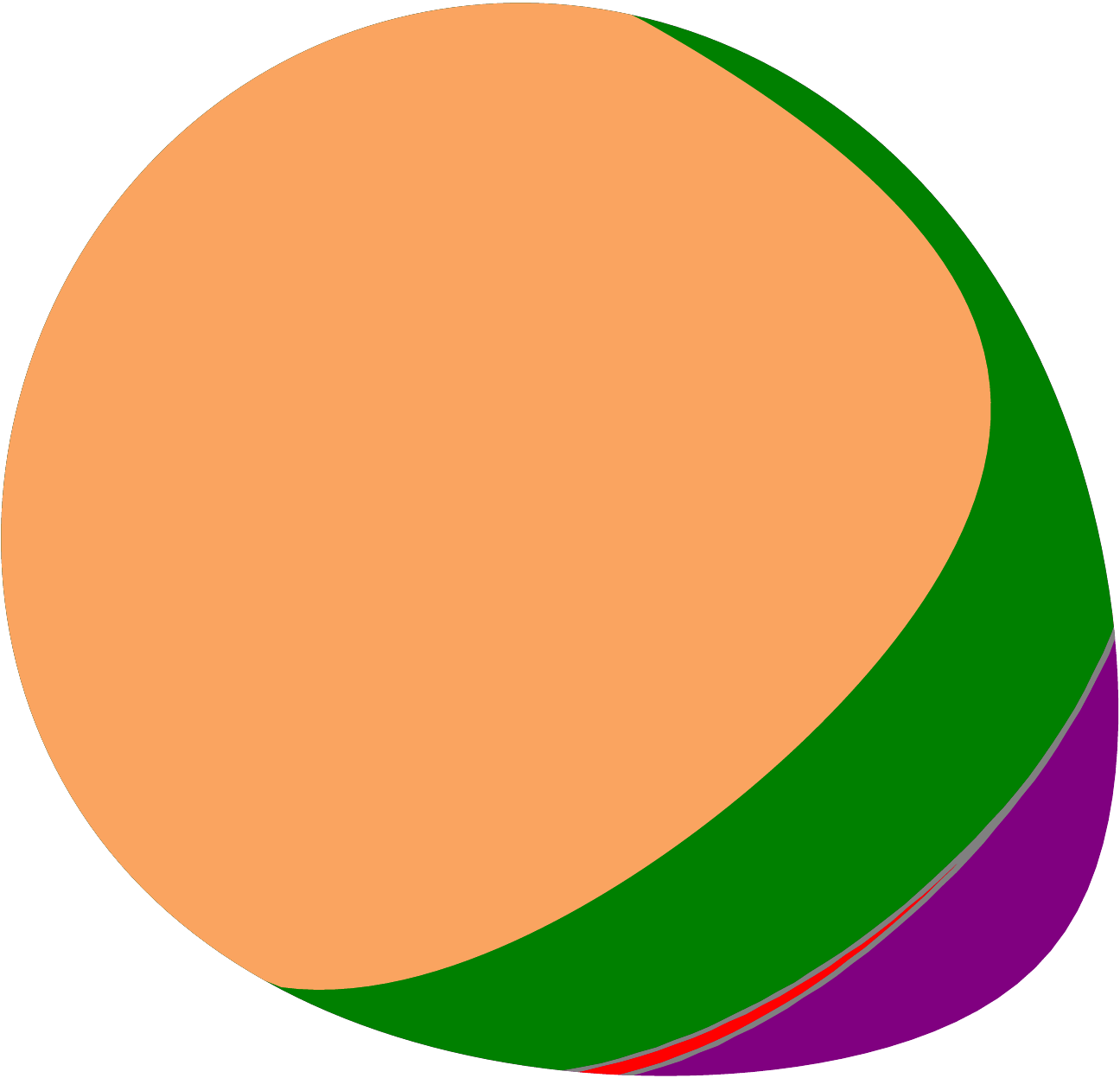}
\\ {\ } \\
\includegraphics[width=0.47\textwidth]{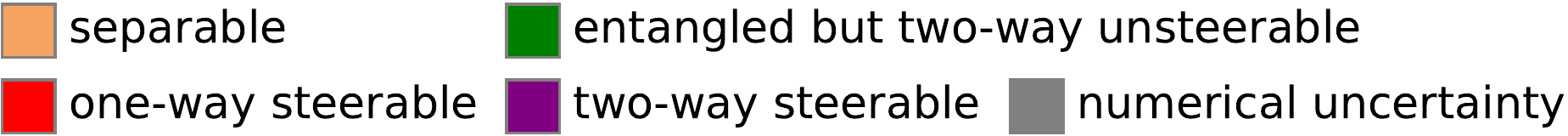}
\end{center}
\caption{Two two{-dimensional} random cross-sections of the set 
of all two-qubit states. As EPR steering is not symmetric under 
the exchange of Alice and Bob, one can distinguish different classes
of steerable states. The colours denote the set of separable states
(characterised by the partial transposition \cite{peresppt,horodeckippt}), 
entangled states that are unsteerable, one-way steerable states (Alice to 
Bob or vice versa), and two-way steerable states (Alice to Bob and vice 
versa). The very thin grey areas denote the states where the used numerical precision
was not sufficient to make an unambiguous decision.}
\label{fig:cross_sections}
\end{figure}

Finally, we note that certain analytical bounds for the critical radius 
can also be derived from our approach. For example, for a state in the 
canonical form, it can be shown that 
\begin{equation}
2 \pi N_T \abs{\det(T)} \ge R(\varrho_{AB}) \ge \frac{2 \pi N_T \abs{\det(T)}}{1+ \norm{T^{-1}\vec{a}}},
\end{equation} 
where $\vec a = (a_x, a_y, a_z)$ is the Bloch vector of Alice's reduced state,
$N_T^{\red{-1}}=\int \d S(\vec{n}) [\vec{n}^T T^{-2} \vec{n}]^{-2}$ and the integration runs 
over the surface of the unit sphere. If $\vec{a}=0$, these bounds {recover}
the exact formula for the critical radius of Bell diagonal states \cite{Jevtic2015a,chauepl}.

\red{
{\it Generalised measurements and higher-dimensional systems.---}
A similar formula for the principal and critical radius can be derived 
for generalised measurements (i.~e., positive operator-valued measures--POVMs) 
and higher-dimensional systems, despite their more complicated geometry.  
As we explain in the Appendix \cite{appremark}, many properties of the critical 
radius, such as its scaling and its symmetry still hold. The fundamental question 
arises whether generalised measurements are more useful for steering than the 
standard projective measurements considered so far. For two qubits, numerical 
estimation of the principal radii for POVMs provides clear evidence that, 
for a generic probability distribution $\mu$, the principal radius for POVMs 
is the same as that for projective measurements. This encourages us to conjecture 
that POVMs do not give any advantage in EPR steering for the case of two-qubit 
states.}

{\it Discussion.---}
EPR steering is an asymmetric phenomenon where Bob, contrary to Alice, has 
well characterised measurements. Consequently the underlying correlations 
find applications in non-symmetric scenarios of quantum information processing, 
such as one-sided device-independent quantum key distribution \cite{branciard} 
or sub-channel discrimination \cite{piani}. Clearly, our solution to the steering 
problem helps to understand and optimise these applications and their experimental 
realisations.

In addition, there are far-ranging consequences. First, it has 
been established that steering is in one-to-one correspondence with 
the  question which measurements in quantum mechanics can be jointly 
measured \cite{quintinojm, roope1, roope2, heinosaari}. Second, recent works 
established close connections between quantum steering and entropic uncertainty 
relations \cite{anaentropic, frowisrecent}. Joint measurability and entropic 
uncertainty relations are central for many applications of quantum physics, such
as the security of quantum key distribution \cite{colesuncertainty}. We expect 
that our results and methods presented here may shed new light on these topics 
in the near future. 

{\it Acknowledgements.---}
We are grateful to Fabian Bernards, Francesco Buscemi, Shuming Cheng, Ana C.~S.~Costa, Michael J.~W. Hall, 
Teiko Heinosaari, C. Jebaratnam, Sania Jevtic, X.~Thanh Le, Antony Milne, V.~Anh Nguyen, Q.~Dieu Nguyen, 
Jiangwei Shang, Roope Uola, Howard M.~Wiseman, Xiao-Dong Yu, and particularly N.~Duc Le for helpful comments and discussions. 
We thank the authors of Ref.~\cite{brunnerlhsrecent} for kindly providing 
us with their SDP data.  
This work was supported by the DFG and the ERC (Consolidator Grant 683107/TempoQ).
\red{CN also acknowledges the support by the Vietnam National Foundation for 
Science and Technology Development (NAFOSTED) under grant number 103.02-2015.48.
HVN acknowledges the financial support of the International Centre of Physics at the Institute of Physics, Vietnam Academy of Science and Technology.}


\let\oldaddcontentsline\addcontentsline
\renewcommand{\addcontentsline}[3]{}

\let\addcontentsline\oldaddcontentsline

\clearpage
\newpage

\appendix

\begin{center}
\textbf{ \Large Supplementary Material}
\end{center}
\tableofcontents


\section{The geometry of the state space}
\label{sec:EPR_map}

To fix the notation, we consider a state of two qubits $AB$, that is, a 
positive (semi-definite)  unit-trace operator $\varrho$ over $\H_A \otimes
\H_B$, where $\H_A$ and $\H_B$ are $2$-dimensional (2D) Hilbert spaces. The
spaces of hermitian operators acting on $\H_A$ and $\H_B$ are denoted by
$B^H(\H_A)$ and $B^{H}(\H_B)$, respectively, with the identity operators $\II_A$
and $\II_B$. 

Note that $B^H(\H_A)$ is a $4$-dimensional (4D) Euclidean space with the
Hilbert-Schmidt inner product, $\dprod{X}{Y}=\Tr(XY)$ for $X,Y \in B^H(\H_A)$.
If one chooses an orthonormal basis for $\H_A$, and uses the Pauli matrices
$\{\sigma^A_i\}_{i=0}^{3}$ (with $\sigma^A_0= \II_A$) as the basis of
$B^H(\H_A)$, any operator $X$ of $B^H(\H_A)$ can be written as 
\begin{equation}
X=\frac{1}{2}\sum_{i=0}^{3} x_i \sigma_i^A,
\end{equation}
where $x_i = \Tr (X \sigma_i^A)$. We will refer to this basis as the Pauli basis.

One can also use the Pauli basis for $B^H(\H_B)$. With these two coordinate
systems, a density operator $\varrho$ can then be written in terms of the Bloch
tensor, 
\begin{equation}
\varrho= \frac{1}{4}\sum_{i,j=0}^{3} \Theta_{ij}  \sigma_i^A \otimes \sigma_j^B,
\end{equation}
where $\Theta_{ij} =\Tr [\varrho (\sigma_i^A \otimes \sigma_j^B) ]$.
The Bloch tensor is usually written as a matrix
\begin{equation}
\Theta= 
\begin{pmatrix}
1 & \b^T \\
\a & T
\end{pmatrix},
\label{eq:bloch_tensor}
\end{equation}
where $\a$ and $\b$ are Alice's and Bob's Bloch vectors, and $T$  is their
correlation matrix.

The map $\Lambda$ from Alice's side, $\Lambda: B^H(\H_A) \to B^H(\H_B)$, is defined by 
\begin{equation}
\Lambda (X)=\Tr_A \left[ \varrho (X \otimes \II_B)\right]
\end{equation}
for $X \in B^H(\H_A)$. 
The Bloch tensor also allows for a direct representation of Alice's map
$\Lambda$ as a $(4 \times 4)$ matrix,
\begin{equation}
\Lambda \equiv \frac{1}{2} \Theta^T.
\end{equation}

We say a state $\varrho$ is \emph{degenerate} if the map $\Lambda$ is
degenerate, i.~e., non invertible. Otherwise it is said to be
\emph{non-degenerate}. We note that degenerate states are zero-measured in the
set of all states. Moreover, they are separable~\cite{saniaemail} (see
Section~\ref{sec:computation_degenerate}). As separable states cannot be used
for steering, we can, without loss of generality, always assume states to be
non-degenerate. We do often make side remarks on how to cope with degenerate
states for completeness.  
\section{Measurement outcomes and steering outcomes}
\label{sec:outcomes}
The set of Alice's \emph{measurement outcomes} is defined by $\M_A= \{X \in
B^H(\H_A): 0 \le X \le \II_A\}$. Under the map $\Lambda$, Alice's measurement
outcome set is mapped to the set of Alice's \emph{steering outcomes},
$\Lambda(\M_A) \subseteq B^H(\H_B)$. \blue{Note that our considerations start
with a given state $\varrho_{AB}$, so that it is clear that the assemblage
of steering outcomes can be generated by a suitable set of measurements on
Alice's side. In general, any non-signalling assemblage can be realised by
suitable measurements on a suitable state \cite{erwin, nicolas, william}.}

For convenience, we will also consider
Alice's Bloch hyperplane, $\P_A= \{X  \in B^H(\H_A) : \Tr (X)=1\}$, and Alice's
Bloch ball, $\B_A= \P_A \cap \M_A$. The boundary of Alice's Bloch ball is
referred to as Alice's Bloch sphere, denoted by $\S_A$. The same notations with
super/subscripts $B$ apply to Bob's side. 

In the Pauli coordinates, the positive cone is presented as the forward light
cone at the origin. The set of measurement outcomes $\M_A$ is a double cone,
formed by intersecting the positive cone at $0$ with the negative cone at
$\II_A$; see Figure 2 (left) in the main text. The double cone $\M_A$ has two
vertices, $0$ and $\II_A$, and an `equator' of extreme points, which is the
Bloch sphere $\S_A$. 

Note that the steering outcome set $\Lambda(\M_A)$ is simply a linear image of
$\M_A$, thus just a deformed double cone; see Figure 2 (right) in the main text.
The set of steering outcomes has two vertices at $0$  and
$\varrho_B=\Lambda(\II_A)$. It also has an equator which is the image of the
Bloch sphere $\Lambda(\S_A)$. Being a linear image of $\S_A$, this equator is in
fact an ellipsoid  if $\Lambda$ is non-degenerate. 

\section{The capacity of an LHS ensemble}
\label{sec:capacity}
An LHS ensemble $\mu$ is a probability measure on Bob's Bloch ball. For a LHS
ensemble, we define its capacity as the set of conditional states that Alice can
simulate,
\begin{equation}
\K(\mu) = \left\{K= \int \d \mu (\sigma) g(\sigma) \sigma :  0 \le g(\sigma) \le 1 \right\}.
\end{equation}
Note that for the case of
two qubits this simplified capacity is sufficient for studying steering with
projective measurement and with positive operator valued measures of $2$ outcomes ($2$-POVM) as well since
they are equivalent. For steering with more general POVMs or steering of
systems in higher dimension, one would need the concept of $n$-capacity of
$\mu$; see Ref.~\cite{chaujpa} for more details.

Now it is clear that a state $\varrho$ is unsteerable with $2$-POVMs 
(hereafter always considered from $A$ to $B$, unless stated otherwise) if any
only if there exists an LHS ensemble $\mu$ such that $\Lambda(\M_A) \subseteq
\K(\mu)$~\cite{chaupra,chauepl,chaujpa}. 
\section{The minimal requirement and the principal radius}
\label{sec:capacity}  
Fixing a choice of LHS ensemble $\mu$, we can find an easy criterion for this
nesting problem. Indeed, for $\K(\mu)$ to contain $\Lambda (\M_A)$, it is
sufficient for it to contain all extreme points of $\Lambda (\M_A)$. If we impose the \emph{minimal requirement} for the LHS ensemble
\begin{equation}
\varrho_B= \int \d \mu(\sigma) \sigma,
\label{eq:minimal_requirement}
\end{equation}
then two vertices $0$ and $\varrho_B$ are automatically contained in $\K(\mu)$. 

As described in the main text, it is left to check the inclusion in
$\K(\mu)$ of the equator of the steering outcomes $\Lambda (\S_A)$. Recall that
$\Lambda$ is assumed to be invertible, so instead of working in Bob's space as
described in the main text we can reverse the transformation to work in Alice's
space; see Figure~\ref{fig:capacity}. More precisely, the inclusion of 
$\Lambda(\S_A)$ in $\K(\mu)$ is equivalent to the condition $\S_A \subseteq
\Lambda^{-1} [\K(\mu)]$.
The principal radius $r[\varrho,\mu]$ is then the minimal distance (in the
normal Euclidean metric) from the centre of the Bloch sphere to the boundary of
$\Lambda^{-1} [\K(\mu)]$ constrained to the Bloch hyperplane. Then $\S_A
\subseteq \Lambda^{-1} [\K(\mu)]$ if and only if $r[\varrho,\mu] \ge 1$. 


\begin{figure}[t]
\includegraphics[width=0.47\textwidth]{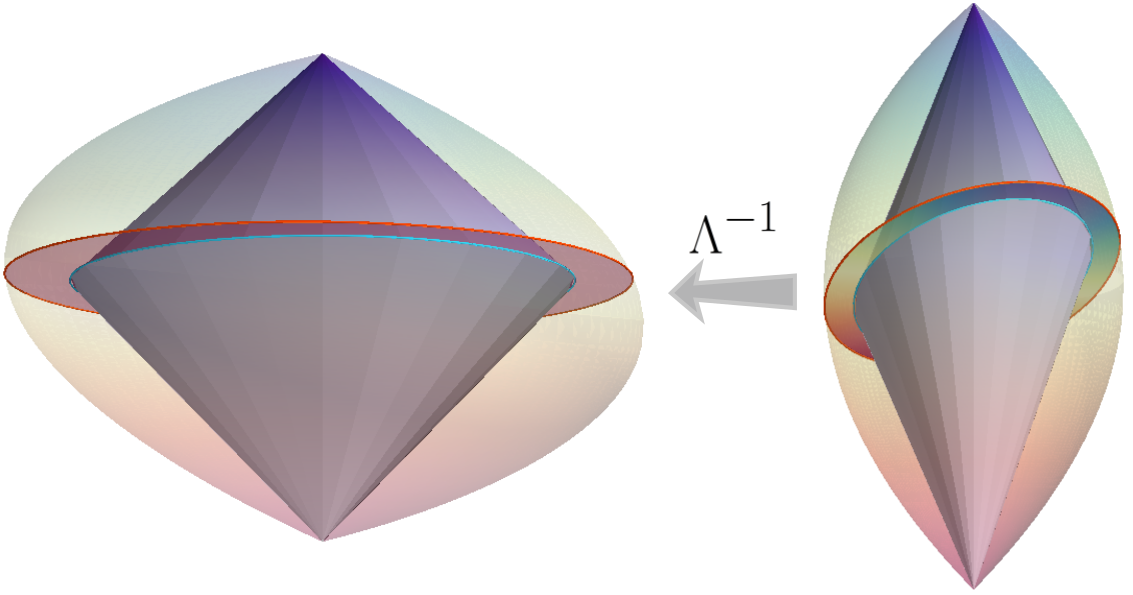}
\caption{Schematic representation of a capacity $\K(\mu)$ that contains the set
of steering outcomes in Bob's space (left) and their images in Alice's
space (right) via the action of $\Lambda^{-1}$.}
\label{fig:capacity}
\end{figure}

\section{A simple formula for the critical radius of two qubits}
\label{sec:geometrical_critical_radius}
In this section, with geometrical description of the principal radius above as the starting point, we give a proof for the formula equation (7) in the main text for the principal radius. 

\begin{theorem}
For a given (non-degenerate) state $\varrho$ and for a given LHS ensemble $\mu$ satisfying the minimal requirement, $\varrho_B = \int \d \mu(\sigma) \sigma$, the principal radius is given by
\begin{equation}
r[\varrho,\mu]= \inf_{C} \frac{\int \d \mu (\sigma)  |\Tr(C\sigma)|}{\sqrt{2}\norm{\Tr_B[\bar{\varrho} (\II_A \otimes C)]}}
\label{eq:simple_r}
\end{equation}
where $\bar{\varrho}= \varrho - \frac{\II_A}{2}  \otimes \varrho_B$ and the minimisation is taken over all operators $C$ on Bob's space.
\label{th:simple_r}
\end{theorem}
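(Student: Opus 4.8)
The plan is to translate the purely geometric definition of $r[\varrho,\mu]$ into convex-analytic language via support functions, and then to dualise. Throughout I work in centred coordinates: since the minimal requirement gives $\Lambda(\II_A/2)=\varrho_B/2$, I set $\K_0 := \K(\mu)-\tfrac{1}{2}\varrho_B$ and, writing $\tilde X = X-\tfrac{1}{2}\II_A$ for the traceless shift of $X\in\P_A$, I put $Q_0 := \{\tilde X : \Tr\tilde X=0,\ \Lambda(\tilde X)\in\K_0\}$, which is exactly $\Lambda^{-1}[\K(\mu)]\cap\P_A$ recentred at the Bloch-ball centre. For a convex body with the origin in its interior, the Euclidean distance from the origin to the boundary equals the minimum over unit outward normals of the support function $h_S(A):=\sup_{Y\in S}\Tr(AY)$. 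Since the principal radius is measured in the Bloch-vector metric, which is $\sqrt 2$ times the Hilbert--Schmidt metric on the traceless subspace, this first step yields $r[\varrho,\mu]=\inf_{A}\sqrt 2\, h_{Q_0}(A)/\norm{A}$, the infimum running over non-zero traceless $A$ on Alice's side.

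Next I would compute the two ingredients explicitly. Writing a generic element of the centred capacity as $\int\d\mu(\sigma)\,h(\sigma)\sigma$ with $|h(\sigma)|\le\tfrac12$ and optimising the weight pointwise gives the support function of the capacity, $h_{\K_0}(C)=\tfrac12\int\d\mu(\sigma)\,|\Tr(C\sigma)|$. The link to Bob's side is the adjoint of $\Lambda$: a short computation from $\Tr[\Lambda(X)C]=\Tr[\varrho(X\otimes C)]$ gives $\Lambda^{*}(C)=\Tr_B[\varrho(\II_A\otimes C)]$, and subtracting its trace part shows that its traceless component is precisely $\Lambda^{*}(C)-\tfrac12\Tr[\Lambda^{*}(C)]\II_A=\Tr_B[\bar{\varrho}(\II_A\otimes C)]$, which is the operator appearing in the denominator of the claimed formula.

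The heart of the argument is to compute $h_{Q_0}$. Here I would rewrite the membership $\Lambda(\tilde X)\in\K_0$ as the family of linear constraints $\Tr[\Lambda^{*}(C)\,\tilde X]\le h_{\K_0}(C)$ for all $C$; because $\tilde X$ is traceless, only the traceless part $\Tr_B[\bar\varrho(\II_A\otimes C)]$ of $\Lambda^{*}(C)$ enters. Thus $Q_0$ is an intersection of half-spaces with normals $\Tr_B[\bar\varrho(\II_A\otimes C)]$ and offsets $h_{\K_0}(C)$, and dualising the resulting (semi-infinite) linear program --- using the sublinearity of $h_{\K_0}$ to collapse any mixture of constraints to a single multiplier $C$ --- gives $h_{Q_0}(A)=\inf\{\,h_{\K_0}(C) : \Tr_B[\bar\varrho(\II_A\otimes C)]=A\,\}$. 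Non-degeneracy of $\varrho$ (so that $\Lambda$, hence $\Lambda^{*}$, is invertible) together with compactness of $\K_0$ is what guarantees the origin lies in the interior of $Q_0$ and that strong duality holds with no gap; this absence of a duality gap is the step I expect to require the most care.

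Finally I would assemble the pieces. Substituting the dual expression for $h_{Q_0}$ into $r=\inf_A\sqrt2\,h_{Q_0}(A)/\norm{A}$ and reorganising the two nested infima into a single infimum over $C$ --- grouping the outer variable $A$ with the constraint $\Tr_B[\bar\varrho(\II_A\otimes C)]=A$, so that $\norm{A}=\norm{\Tr_B[\bar\varrho(\II_A\otimes C)]}$ --- turns the whole quantity into $\inf_C\sqrt2\,h_{\K_0}(C)/\norm{\Tr_B[\bar\varrho(\II_A\otimes C)]}$. Inserting $h_{\K_0}(C)=\tfrac12\int\d\mu(\sigma)|\Tr(C\sigma)|$ and noting $\sqrt2\cdot\tfrac12=1/\sqrt2$ reproduces the claimed formula~\eqref{eq:simple_r}. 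I would close by checking the normalisation constants once more and remarking that the infimum is attained (the ratio is continuous on the compact sphere $\norm{\Tr_B[\bar\varrho(\II_A\otimes C)]}=1$), and that the regularity needed is exactly the non-degeneracy already in force.
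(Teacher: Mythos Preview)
Your argument is correct and is essentially the paper's proof recast in coordinate-free convex-analytic language. The paper works directly in Pauli coordinates: it writes down the half-space description of $\K(\mu)$, pulls it back through $\Lambda$ by the substitution $Y=\Lambda(X)$, restricts to the Bloch hyperplane $x_0=1$, and then uses the minimal requirement to collapse $2\max\{\cdot,0\}-(\cdot)$ into $|\cdot|$; your centring $g\mapsto h=g-\tfrac12$ achieves that same simplification at the outset, and the ``strong duality'' you invoke is nothing more than the observation---immediate in the paper's explicit substitution---that every supporting half-space of $Q_0$ is the pullback via $\Lambda^{*}$ of one for $\K_0$.
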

We refer to the function under the infimum~\eqref{eq:simple_r} as the \emph{fraction function} (inspired by the gap function in Ref.~\cite{chaujpa}),
\begin{equation}
F[\varrho,\mu,C]=\frac{\int \d \mu (\sigma)  |\dprod{C}{\sigma}|}{\sqrt{2}\norm{\Tr_B[\bar{\varrho} (\II_A \otimes C)]}},
\label{eq:fraction_function}
\end{equation}
where we also use the Hilbert-Schmidt product notation $\dprod{C}{\sigma}= \Tr(C \sigma)$.
The fraction function is defined with the \emph{denominator-dominated} convention, namely it is $+\infty$ whenever the denominator vanishes, regardless of the numerator. 
Using the Pauli basis defined in equation~\eqref{eq:bloch_tensor}, Section~\ref{sec:EPR_map}, we represent operators by $4$-vectors, $C \equiv \begin{pmatrix}c_0 \\ \c\end{pmatrix}$, $\sigma \equiv \begin{pmatrix} 1 \\ {\vv} \end{pmatrix}$, and the bipartite state $\rho$ by its Pauli tensor, $\rho \equiv \begin{pmatrix}1 & \b^T \\ \a & T \end{pmatrix}$. In these coordinates, the fraction function can be written expressively, 
\begin{equation}
F[\varrho,\mu,C]= \frac{\int \d \mu ({\vv})  |c_0 + \c^T {\vv}|}{\norm{c_0 \a +  T \c}},
\label{eq:fraction_function_coordinate}
\end{equation}
where $\vv$ runs over vectors in Bob's Bloch ball. 
Being explicit, this formula is very convenient for direct computation. We will refer to both definitions~\eqref{eq:fraction_function} and~\eqref{eq:fraction_function_coordinate} interchangeably.

\begin{proof}

To derive the formula~\eqref{eq:simple_r}, we proceed as follows. As $\K(\mu)$ is a compact convex object in the $4$D space of Bob's operators, we can define it by a set of linear inequalities, which are easy to determine. Transforming it back to Alice's operator space, we obtain a set of inequalities that define $\Lambda^{-1} [\K(\mu)]$. Constraining this set of inequalities to the Bloch hyperplane $x_0=1$, we obtain a set of inequalities that define the cross-section of  $\Lambda^{-1} [\K(\mu)]$ at $x_0=1$. Note that each inequality in this set corresponds to a $3$D half-space, and the principal radius as described in Section~\ref{sec:capacity} is simply the minimal distance from the corresponding separating $2$D planes to the origin.

We start with finding the set of inequalities that define $\K(\mu)$. These inequalities can be found rather easily~\cite{chaupra,chaujpa}. Let $Y$ be a point in the set $\K(\mu)$, then for any operator $C$
\begin{equation}
\dprod{C}{Y} \le \max_{K \in \K(\mu)} \dprod{C}{K}.
\label{eq:bob_linequalities_symbolic}
\end{equation}
The left-hand side can be solved rather easily,
\begin{align}
\max_{K \in \K(u)} \dprod{C}{K} &= \max_{0 \le g(\sigma) \le 1} \int \d \mu (\sigma) g(\sigma) \dprod{C}{\sigma} \nonumber \\
&= \int \d \mu (\sigma) \max \{ \dprod{C}{\sigma},0\}.
\label{eq:bob_linequalities}
\end{align}
This should be viewed as a family in inequalities parametrised by $C$ that defines $\K(\mu)$.

The inequalities that define $\Lambda^{-1} [\K(\mu)]$  can be found by replacing the operator $Y \in \K(\mu)$ by $Y=\Lambda (X)$ for $X \in \Lambda^{-1} [\K(\mu)]$ in~\eqref{eq:bob_linequalities_symbolic}. Using the explicit coordinates, $C \equiv \begin{pmatrix}c_0 \\ \c\end{pmatrix}$, $\sigma \equiv \begin{pmatrix} 1 \\ {\vv} \end{pmatrix}$, $X\equiv \begin{pmatrix} x_0 \\ \x \end{pmatrix}$, $\Lambda \equiv \frac{1}{2}\begin{pmatrix}1 & \a^{T} \\ \b & T^{T} \end{pmatrix}$, these inequalities can be written as
\begin{equation}
\frac{1}{2}
\begin{pmatrix}c_0 & \c^T\end{pmatrix}
\begin{pmatrix}1 & \a^{T} \\ \b & T^{T} \end{pmatrix}
\begin{pmatrix}x_0 \\ \x \end{pmatrix} 
\le
\int \d \mu ({\vv})  \max \{c_0 + \c^T {\vv},0\}.
\end{equation}
More explicitly, we have 
\begin{equation}
(c_0 \a^T + \c^T T^T) \x \le 2 \int \d \mu ({\vv})  \max \{c_0 + \c^T {\vv},0\} - x_0 (c_0 + \c^T \b).
\label{eq:alice_inequality}
\end{equation}
This should be viewed as a family of inequalities parametrised by $(c_0,\c)$ that defines $\Lambda^{-1} [\K(\mu)]$ consisting of points $X\equiv \begin{pmatrix} x_0 \\ \x \end{pmatrix}$. 

To check if $\Lambda^{-1} [\K(\mu)]$ contains $\M_A$, we only need to check the condition at the equator $\S_A$ (since $\mu$ satisfies the minimal requirement). Since $\S_A$ belongs to the Bloch hyperplane $\P_A$, we can fix $x_0=1$ and~\eqref{eq:alice_inequality} becomes
\begin{equation}
(c_0 \a^T + \c^T T^T) \x \le \int \d \mu ({\vv})  \abs{c_0 + \c^T {\vv}},
\label{eq:alice_linequalities}
\end{equation}
where we have also used the minimal requirement $\b= \int \d \mu (\vv) \vv$ to simplify the right-hand side. 
Then~\eqref{eq:alice_linequalities} is a family of $3$D half-spaces with normal vectors $(c_0 \a^T + \c^T T^T)$ and offsets $\int \d \mu ({\vv})  \abs{c_0 + \c^T {\vv}}$. The distance of each of the separating planes to the origin is
\begin{equation}
\frac{\int \d \mu ({\vv}) \abs{c_0 + \c^T {\vv}}}{\norm{c_0 \a + T \c}}.
\end{equation}
By definition, 
\begin{equation}
r[\varrho,\mu]= \inf_{c_0,\c} \frac{\int \d \mu ({\vv})  |c_0 + \c^T {\vv}|}{\norm{c_0 \a +  T \c}}.
\end{equation}
This is precisely the formula for the principal radius in the coordinate form equation~\eqref{eq:fraction_function_coordinate}. 
\end{proof}
\section{Relaxing the non-degeneracy condition of the state}
Strictly, the definition for the principal radius applies only to non-degenerate states.  
One however can take equation~\eqref{eq:simple_r} as the primary definition of the principal radius, which works also for degenerate states. The above proof can be easily adapted to show that a state is unsteerable with a specific choice of LHS ensemble $\mu$ if and only if $r[\varrho,\mu] \ge 1$ with the principal radius as defined by equation~\eqref{eq:simple_r}. 
\section{Defining domain of the principal radius}
\label{sec:defining_domain}
From the formula~\eqref{eq:simple_r}, one can easily see that the principal radius is well-defined even when $\varrho$ is not a proper state. In the following, when referring to a \emph{state}, we do not impose positivity on it. When imposing positivity on a state, we refer to it as a \emph{proper state}. 

For the principal radius to be well-defined, it is prerequisite that $\varrho_B$ is inside Bob's Bloch ball. This is to guarantee that the minimal requirement does not result in an empty-set of probability measures. It is easy to see that the set of states that have Bob's reduced states inside Bob's Bloch ball is convex and closed. This set is the (most general) defining domain we consider.

\section{Concavity of the principal radius}
\begin{proposition}
The principal radius $r[\varrho,\mu]$ is concave in $\mu$.
\label{th:principal_concavity}
\end{proposition}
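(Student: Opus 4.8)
The plan is to read off the concavity directly from the coordinate form of the fraction function~\eqref{eq:fraction_function_coordinate}, exploiting the fact that the LHS ensemble $\mu$ enters the formula in a very restricted way. The key observation I would make first is a separation of variables: for a fixed test operator $C \equiv \begin{pmatrix} c_0 \\ \c \end{pmatrix}$, the denominator $\norm{c_0 \a + T \c}$ depends only on the state $\varrho$ (through its Bloch data $\a$ and $T$) and carries no dependence on $\mu$ whatsoever. The numerator $\int \d\mu(\vv)\,\abs{c_0 + \c^T \vv}$, on the other hand, is the integral of a fixed function against $\mu$, hence linear in $\mu$. Consequently, for each fixed $C$ the fraction function $F[\varrho,\mu,C]$ is an affine (indeed linear) function of $\mu$.

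With this observation the statement reduces to the standard fact that a pointwise infimum of affine functions is concave. Concretely, I would take two admissible ensembles $\mu_1,\mu_2$ and $\lambda \in [0,1]$, set $\mu = \lambda \mu_1 + (1-\lambda)\mu_2$, and use linearity of the integral to write $F[\varrho,\mu,C] = \lambda F[\varrho,\mu_1,C] + (1-\lambda) F[\varrho,\mu_2,C]$ for every $C$. Taking the infimum over $C$ and using superadditivity of the infimum, $\inf_C(f+g) \ge \inf_C f + \inf_C g$, together with $\inf_C \lambda f = \lambda \inf_C f$ for $\lambda \ge 0$, then yields $r[\varrho,\mu] \ge \lambda\, r[\varrho,\mu_1] + (1-\lambda)\, r[\varrho,\mu_2]$, which is exactly concavity.

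Two small points require care before the argument is watertight. First, concavity is only meaningful once the domain of $\mu$ is convex: here $\mu$ ranges over probability measures on Bob's Bloch ball subject to the minimal requirement $\varrho_B = \int \d\mu(\sigma)\,\sigma$, which is an affine constraint, so a convex combination of admissible ensembles is again admissible and the domain is indeed convex. Second, one must respect the denominator-dominated convention, under which $F[\varrho,\mu,C] = +\infty$ when $\norm{c_0 \a + T \c} = 0$. Since the denominator is independent of $\mu$, this degeneracy occurs for the same operators $C$ irrespective of the ensemble; such $C$ contribute $+\infty$ to all three infima and may simply be dropped from the index set without affecting the inequality (and if every $C$ has vanishing denominator then $r \equiv +\infty$, which is trivially concave).

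I do not anticipate any genuine obstacle here: the entire content is the separation of variables in~\eqref{eq:fraction_function_coordinate}, namely that $\mu$ appears only through a $\mu$-linear numerator while the denominator is $\mu$-independent. The only thing to be careful about is the bookkeeping around the $+\infty$ values introduced by the convention, which is dispatched by the remark above.
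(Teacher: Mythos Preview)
Your argument is correct and is essentially the same as the paper's: the fraction function is linear in $\mu$ for each fixed $C$ (numerator linear, denominator $\mu$-independent), and the infimum of a family of concave functions is concave. Your additional remarks on the convexity of the domain and the handling of the $+\infty$ convention are sound elaborations that the paper leaves implicit.
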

\begin{proof}
Since $r[\varrho,\mu]$ is an infimum of a family of linear, thus concave, functions in $\mu$, $r[\varrho,\mu]$ must be itself concave in $\mu$. 
\end{proof}
Although not mandatory in the following, it is worth noting that the convexity of $r^{-1}[\varrho,\mu]$ is somewhat better behaved.
\begin{proposition}
The inverse principal radius $r^{-1}[\varrho,\mu]$ is convex either in $\mu$ or $\varrho$, if $\varrho$ is constrained by $\Tr_A[\varrho]= \varrho_B$.
\label{th:inverse_principal_concavity}
\end{proposition}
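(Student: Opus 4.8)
The plan is to exploit that the inverse principal radius is a pointwise supremum, $r^{-1}[\varrho,\mu] = \sup_{C} F^{-1}[\varrho,\mu,C]$ with $F^{-1}[\varrho,\mu,C]=\frac{\sqrt{2}\,\norm{\Tr_B[\bar{\varrho}(\II_A\otimes C)]}}{\int\d\mu(\sigma)\,\abs{\dprod{C}{\sigma}}}$, together with the elementary fact that a pointwise supremum of convex functions is convex. It therefore suffices to show that for each fixed $C$ the map $\mu\mapsto F^{-1}[\varrho,\mu,C]$ (respectively $\varrho\mapsto F^{-1}[\varrho,\mu,C]$) is convex. The structural observation that makes both claims work is that in $F^{-1}$ the numerator depends only on the pair $(\varrho,C)$ while the denominator depends only on $(\mu,C)$, so the two arguments decouple completely, and in each variable one factor is constant while the other is affine.

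For convexity in $\mu$ I would fix $\varrho$. Then the numerator $N_C=\sqrt{2}\,\norm{\Tr_B[\bar{\varrho}(\II_A\otimes C)]}$ is a nonnegative constant, while the denominator $D_C(\mu)=\int\d\mu(\sigma)\,\abs{\dprod{C}{\sigma}}$ is a nonnegative \emph{linear} functional of $\mu$, being an integral against $\mu$. Thus $F^{-1}=N_C/D_C(\mu)$ is a nonnegative constant divided by a nonnegative affine functional, which is convex as the composition of the convex map $t\mapsto 1/t$ on $(0,\infty)$ with an affine map. Taking the supremum over $C$ then yields convexity of $r^{-1}$ on the convex set of admissible ensembles, namely those with the fixed barycenter $\varrho_B=\int\d\mu(\sigma)\,\sigma$ dictated by the minimal requirement.

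For convexity in $\varrho$ I would instead fix $\mu$ and restrict to states with $\Tr_A[\varrho]=\varrho_B$ held fixed. This affine constraint is exactly what keeps the admissible set convex and simultaneously preserves the minimal requirement, so that the formula continues to compute the principal radius; it also renders $D_C$ independent of $\varrho$, hence a positive constant. Now $\bar{\varrho}=\varrho-\frac{\II_A}{2}\otimes\varrho_B$ is affine in $\varrho$, so $\Tr_B[\bar{\varrho}(\II_A\otimes C)]$ is affine in $\varrho$ and its Hilbert--Schmidt norm $N_C(\varrho)$ is convex, being a norm precomposed with an affine map. Dividing the convex $N_C(\varrho)$ by the positive constant $D_C$ preserves convexity, and again the supremum over $C$ gives convexity of $r^{-1}$ in $\varrho$.

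The one point requiring care, and the main obstacle, is the denominator-dominated convention, under which $F^{-1}$ takes values in the extended reals $[0,+\infty]$ and convexity must be verified in that sense. The delicate configurations are precisely those where a denominator vanishes: where $N_C=0$ one has $F^{-1}\equiv 0$, harmlessly constant; where $D_C=0$ but $N_C>0$ one has $F^{-1}=+\infty$. In the $\varrho$-variable this latter case assigns $F^{-1}$ the value $0$ on the affine subspace $\{\varrho:\Tr_B[\bar{\varrho}(\II_A\otimes C)]=0\}$ and $+\infty$ off it, i.e.\ the convex indicator of an affine set, which is still convex; the $\mu$-variable case is analogous. Checking that the extended-real-valued suprema remain convex in all these boundary cases is routine but is where the bookkeeping must be done honestly. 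Finally, I would emphasise that the hypothesis $\Tr_A[\varrho]=\varrho_B$ is genuinely needed: it is the condition that keeps the minimal requirement in force throughout the variation of $\varrho$, without which the formula~\eqref{eq:simple_r} would no longer represent the principal radius.
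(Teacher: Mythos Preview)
Your proof is correct and follows essentially the same approach as the paper: write $r^{-1}[\varrho,\mu]=\sup_C F^{-1}[\varrho,\mu,C]$, observe that for fixed $C$ the inverse fraction function is convex separately in $\mu$ (constant over affine) and in $\varrho$ under the constraint $\Tr_A[\varrho]=\varrho_B$ (norm of an affine map over a constant), and conclude by the stability of convexity under pointwise suprema. You supply more care than the paper does on the extended-real boundary cases where the denominator vanishes, but the argument is the same in substance.
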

\begin{proof}
The convexity in $\varrho$ is limited to decompositions which respect the (affine) constraint $\Tr_A[\varrho]=\varrho_B$ (so that $\mu$ satisfies the minimal requirement for all states under consideration). 
We write
\begin{equation}
r^{-1}[\varrho,\mu]=\sup_C \frac{\sqrt{2}\norm{\bar{\varrho} (\II_A \otimes C)}}{\int \d \mu (P) \abs{\dprod{C}{P}}}.
\end{equation}
Now the function under the supremum is convex either in $\mu$ or $\varrho$. Therefore $r^{-1}[\varrho,\mu]$ is convex either in $\mu$ or $\varrho$.
\end{proof}
\section{Upper-semicontinuity of the principal radius}
To study in detail the topological properties of the principal radius, we need a weaker notion of continuity, namely semicontinuity.

Recall that $\bar{\RR}=\RR \cup \{- \infty\} \cup \{+\infty\}$. Consider a sequence $\{u_n\}_{n=1}^{+\infty}$ in $\bar{\RR}$. The limit of a subsequence of $\{u_n\}$ is called an accumulation point. The set of accumulation points is closed; its maximum is called the limit superior of $\{u_n\}$, denoted by $\overline{\lim}_{n \to \infty } {u_n}$, and the minimum is called the limit inferior of $\{u_n\}$, denoted by $\underline{\lim}_{n \to \infty } {u_n}$. 

Below we assume that $X$ is a metric space. 

A function $f:X \to \bar{\RR}$ is said to be upper-semicontinuous at $x \in X$ if for any sequence $\{x_n\} \to x$, one has $\overline{\lim}_{n \to \infty} f(x_n) \le f(x)$. An upper-semicontinuous function on a compact metric space attains its maximum. 

Similarly, a function $f:X \to \bar{\RR}$ is said to be lower-semicontinuous at $x \in X$ if for any sequence $\{x_n\} \to x$, one has $\underline{\lim}_{n \to \infty} f(x_n) \ge f(x)$. A lower-semicontinuous function on a compact metric space attains its minimum. 

A function $f:X \to \bar{\RR}$ is continuous at $x \in X$ if and only if it is both upper-semicontinuous and lower-semicontinuous at $x$. 

To study the upper-semicontinuity of $r[\varrho,\mu]$, we need the following lemma. 
\begin{lemma}
Consider $f:X \times Y \to \bar{\RR}$ where $X$ is a metric space and $Y$ is an arbitrary set. Define $g:X \to \bar{\RR}$ by $g(x)= \inf_{y \in Y} f(x,y)$. Suppose all the functions $f(\cdot,y): X \to \bar{\RR}$ with $y\in Y$ are upper-semicontinuous at a certain $x$, then $g$ is upper-semicontinuous at $x$.
\label{lem:upper_semicontinuity}
\end{lemma}
\begin{proof}
We would like to show that for any converging sequence $\{x_n \} \to x$, we have 
\begin{equation} 
\overline{\lim}_{n \to \infty} g(x_n) \le g(x).
\label{eq:to_be_proved}
\end{equation}

Because $g(x)=\inf_{y \in Y} f(x,y) $, there exists $y_0 \in Y$ such that
\begin{equation}
 f(x,y_0) \le g(x).
\end{equation}
Now because for all $x_n$,
\begin{equation}
g(x_n)= \inf_{y \in Y} f(x_n,y) \le f(x_n,y_0),
\end{equation}
we have that
\begin{equation}
\overline{\lim}_{n \to \infty} g(x_n) \le \overline{\lim}_{n \to \infty} f(x_n,y_0).
\end{equation}
And because $f(\cdot,y_0)$ is upper-semicontinuous at $x$, 
\begin{equation}
\overline{\lim}_{n \to \infty} f(x_n,y_0) \le f(x,y_0) \le g(x).
\end{equation}
So we indeed have~\eqref{eq:to_be_proved}.
\end{proof}
The space of probabilistic Borel measures over the Bloch sphere is metrizable and weakly compact; see, e.g., Ref.~\cite[Theorem 6.3.5, 7.2.2, 8.3.2, 8.9.3, 8.9.4]{Bogachev2007a}. Then its intersection with the minimal constraint (which is weakly closed) is also metrizable and weakly compact. From Theorem~\ref{th:simple_r} and Lemma~\ref{lem:upper_semicontinuity}, we can then easily prove the following  proposition.
\begin{proposition}
The principal radius $r[\varrho,\mu]$ is weakly upper-semicontinuous in $\mu$.
\label{th:principal_upper_semiconitinuity}
\end{proposition}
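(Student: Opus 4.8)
The plan is to read off from Theorem~\ref{th:simple_r} that the principal radius is an infimum over $C$ of the fraction function,
\begin{equation}
r[\varrho,\mu]=\inf_C F[\varrho,\mu,C],
\end{equation}
and then to apply Lemma~\ref{lem:upper_semicontinuity} directly. Concretely, I would take $X$ to be the space of probabilistic Borel measures on the Bloch sphere satisfying the minimal constraint, equipped with the weak topology (which, as noted, is metrizable and weakly compact), $Y$ to be the set of operators $C$ on Bob's space, and $f(\mu,C)=F[\varrho,\mu,C]$. By the lemma, weak upper-semicontinuity of $g(\mu)=\inf_C f(\mu,C)=r[\varrho,\mu]$ follows as soon as each partial map $\mu\mapsto F[\varrho,\mu,C]$, with $C$ fixed, is weakly upper-semicontinuous; I would in fact establish the stronger statement that it is weakly \emph{continuous}.

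To verify this, fix $C$ and observe that the denominator $\sqrt{2}\norm{\Tr_B[\bar{\varrho}(\II_A\otimes C)]}$ depends only on $\varrho$, not on $\mu$: within the admissible set the barycentre $\varrho_B=\int\d\mu(\sigma)\,\sigma$ is pinned to $\Tr_A[\varrho]$, so $\bar{\varrho}$ is fixed. If this denominator vanishes, then $F[\varrho,\mu,C]=+\infty$ for every $\mu$ by the denominator-dominated convention, and the partial map is constant, hence trivially continuous. If it is nonzero, then $F[\varrho,\mu,C]$ equals a fixed positive factor times the numerator $\int\d\mu(\sigma)\,\abs{\dprod{C}{\sigma}}$. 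The integrand $\sigma\mapsto\abs{\dprod{C}{\sigma}}$ is the absolute value of a linear functional, hence a bounded continuous function on the compact Bloch sphere. By the very definition of weak convergence of measures, $\mu_n\to\mu$ weakly forces $\int\abs{\dprod{C}{\sigma}}\,\d\mu_n\to\int\abs{\dprod{C}{\sigma}}\,\d\mu$, so the numerator, and therefore $F[\varrho,\mu,C]$, is weakly continuous in $\mu$.

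Since continuity implies upper-semicontinuity, the hypothesis of Lemma~\ref{lem:upper_semicontinuity} holds for every $C$, and the proposition follows immediately. The only point requiring care is the bookkeeping with the extended-real values of $F$: one must check that the vanishing-denominator branch is compatible with the lemma, which it is precisely because that branch yields a constant ($+\infty$) function. The sole analytic ingredient is the boundedness and continuity of the test function $\abs{\dprod{C}{\sigma}}$ on the \emph{compact} Bloch sphere, which is exactly what licenses interchanging the weak limit with the integral; I expect this to be routine rather than a genuine obstacle, so that the substance of the proposition is really already contained in Theorem~\ref{th:simple_r} and Lemma~\ref{lem:upper_semicontinuity}.
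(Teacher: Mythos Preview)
Your proposal is correct and follows essentially the same route as the paper's own proof: both invoke Theorem~\ref{th:simple_r} to write $r[\varrho,\mu]=\inf_C F[\varrho,\mu,C]$, verify that each partial map $\mu\mapsto F[\varrho,\mu,C]$ is weakly upper-semicontinuous by splitting on whether the denominator vanishes (constant $+\infty$) or not (continuous via weak convergence against the bounded continuous test function $\abs{\dprod{C}{\sigma}}$), and then apply Lemma~\ref{lem:upper_semicontinuity}. Your write-up is in fact slightly more explicit than the paper's in justifying the continuity of the numerator, but the argument is the same.
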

\begin{proof}
It is easy to check that for fixed $C$, the fraction function $F[\varrho,\mu,C]$ in equation~\eqref{eq:fraction_function} is upper-semicontinuous in $\mu$ with respect to the weak topology. To be more precise, for all $C$ such that $\norm{\bar{\varrho} (\II_A \otimes C)} \ne 0$, $F[\varrho,\mu,C]$ is continuous in $\mu$. For $\norm{\bar{\varrho} (\II_A \otimes C)} = 0$, $F[\varrho,\mu,C]$ is $+\infty$, thus trivially upper-semicontinuous. Therefore $r[\varrho,\mu]$ is weakly upper-semicontinuous as a consequence of Lemma~\ref{lem:upper_semicontinuity}.
\end{proof}
In contrast to upper-semicontinuity, the lower-semicontinuity of the principal radius is rather subtle. We postpone this study until we have discussed the canonical form of a state; see Section~\ref{sec:principal_lowersemicontiunuity}.
\section{Existence of an optimal LHS ensemble}
As the principal radius $r[\varrho,\mu]$ is upper-semicontinuous over the compact space of probabilistic Borel measures $\mu$ satisfying the minimal requirement, it attains its maximum.
The critical radius of $\varrho$ is then defined by
\begin{equation}
R[\varrho]= \max_\mu r[\varrho,\mu],
\label{eq:critical_radius}
\end{equation}
where the maximum is taken over probabilistic Borel measures satisfying the minimal requirement~\eqref{eq:minimal_requirement}.   
Physically, we have proved the following statement:

\begin{theorem}[Existence of optimal LHS ensemble]
For any two-qubit state $\varrho$, there exists an optimal LHS ensemble for steering given by $\mu^{\ast}=\arg \max r[\varrho,\mu]$.
\end{theorem}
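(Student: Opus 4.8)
The plan is to obtain the statement as an immediate application of the extreme value principle for upper-semicontinuous functions on compact metric spaces, invoking the two structural facts already assembled above: the weak compactness and metrizability of the admissible set of LHS ensembles, and the weak upper-semicontinuity of $r[\varrho,\cdot]$. First I would fix the domain of optimisation: let $\mathcal{D}$ denote the set of Borel probability measures $\mu$ on Bob's Bloch sphere $\S_B$ that satisfy the minimal requirement $\varrho_B = \int \d \mu (\sigma) \sigma$. Before anything else one must verify that $\mathcal{D}$ is non-empty, for otherwise the maximum would be vacuous. This is where the defining-domain hypothesis enters: since $\varrho$ lies in the admissible domain, its reduced state $\varrho_B$ sits inside Bob's Bloch ball, and every point of the ball is the barycentre of at least one probability measure supported on the sphere (take, for instance, the two antipodal points whose connecting chord passes through $\varrho_B$, or a Dirac measure if $\varrho_B$ already lies on the sphere). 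Hence $\mathcal{D} \ne \emptyset$.

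Next I would record the topology. As noted above, the space of Borel probability measures over the Bloch sphere is metrizable and weakly compact, while the affine barycentre constraint defining the minimal requirement is weakly closed. Therefore $\mathcal{D}$, being a weakly closed subset of a weakly compact metrizable space, is itself weakly compact and metrizable.

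Then I would invoke Proposition~\ref{th:principal_upper_semiconitinuity}, which guarantees that $r[\varrho,\cdot]$ is weakly upper-semicontinuous on $\mathcal{D}$. Since an upper-semicontinuous function on a compact metric space attains its maximum (as recalled above), the supremum $R[\varrho] = \sup_{\mu \in \mathcal{D}} r[\varrho,\mu]$ is attained at some $\mu^\ast \in \mathcal{D}$; equivalently, the set $\arg \max_\mu r[\varrho,\mu]$ is non-empty, and any element of it is the claimed optimal LHS ensemble.

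The bulk of the work has in fact already been carried out in the preceding propositions, so I do not anticipate any serious obstacle in this final step. The only points requiring genuine care are the non-emptiness of $\mathcal{D}$ and the weak closedness of the minimal-requirement constraint, both of which are elementary. A minor subtlety worth flagging is that $r[\varrho,\cdot]$ takes values in $\bar{\RR}$ rather than $\RR$; one should therefore confirm that the extreme value principle applies verbatim in the extended reals, which it does, so that the conclusion holds even in the (non-physical) degenerate case where the maximum were $+\infty$.
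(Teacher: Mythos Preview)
Your proposal is correct and follows essentially the same route as the paper: the paper's argument is the one-line observation that an upper-semicontinuous function on a compact metric space attains its maximum, invoking Proposition~\ref{th:principal_upper_semiconitinuity} and the compactness/metrizability facts cited just before it. You have simply unpacked this more carefully, making explicit the non-emptiness of the feasible set $\mathcal{D}$, the weak closedness of the barycentre constraint, and the $\bar{\RR}$-valued caveat---points the paper leaves implicit.
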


Note that this concept of optimal LHS ensemble is similar (but not identical) to that of optimal LHS \emph{model} defined in the original paper by Wiseman {\it et al.}~\cite{wiseman1}. There, an optimal LHS model consists of an LHS ensemble \emph{and} a choice of response functions which is also optimal in a certain sense. It is still unknown whether or not one can construct optimal response functions. Here we prove that an optimal choice of LHS ensemble does exist. The existence of an optimal LHS ensemble changes the perspective on the problem of determining the steerability of a state. Now, instead of checking every LHS ensemble, we search for a specific LHS ensemble. Moreover, instead of checking all possible choices of response functions, the single value of the critical radius is enough to tell about the steerability of the state. All is then about how to compute the critical radius. We discuss the practical computation of the critical radius in Section~\ref{sec:computation}. 

\section{Implication of symmetry on the optimal LHS ensemble}
We say a state $\varrho$ is $(\G,U,V)$-symmetric with a compact group $\G$ with its two actions $U$ on $\H_A$ and $V$ on $\H_B$ if $U(g) \otimes V(g) \varrho U^{\dagger} (g) \otimes V^{\dagger}(g)$ for all $g \in \G$. Recall that the action $V$ on $\H_B$ induces an action on the measures on $\B_B$, defined by $R_V(g)[\mu](X)= \mu[V(g) X V^{\dagger}(g)]$ for all measurable subsets $X$ of $\B_B$. 
\begin{theorem}[Symmetry of LHS ensemble]
If $\varrho$ is $(\G,U,V)$-symmetric, then there exists an optimal ensemble $\mu^{\ast}$ which is $(\G,V)$-invariant, $R_V(g) [\mu^{\ast}] = \mu^{\ast}$.
\label{th:symmetry_LHS}
\end{theorem}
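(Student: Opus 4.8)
The plan is to produce a $(\G,V)$-invariant optimizer by averaging an arbitrary optimal ensemble over the group. Since $\G$ is compact it carries a normalized, bi-invariant Haar measure $\d g$, and the existence theorem proved above supplies some optimal ensemble $\mu^{\ast}$ with $r[\varrho,\mu^{\ast}]=R[\varrho]$. I would then set
\begin{equation}
\bar\mu = \int_{\G} \d g \, R_V(g)[\mu^{\ast}],
\end{equation}
and show that $\bar\mu$ is again a feasible optimal ensemble which is moreover $(\G,V)$-invariant.

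The technical engine is a covariance identity for the two pieces of the fraction function. First I would record that symmetry of $\varrho$ forces $\varrho_B$ to be $V$-invariant (apply $\Tr_A$ to $\varrho=(U(g)\otimes V(g))\varrho(U^{\dagger}(g)\otimes V^{\dagger}(g))$ and use cyclicity of the partial trace over $A$), hence $\bar\varrho$ is itself $(U\otimes V)$-invariant because $\II_A$ commutes with $U(g)$. Writing $\Phi(C)=\Tr_B[\bar\varrho(\II_A\otimes C)]$, a short partial-trace computation using the invariance of $\bar\varrho$ and cyclicity of $\Tr_B$ over operators on $B$ then yields
\begin{equation}
\Phi\bigl(V(g)\,C\,V^{\dagger}(g)\bigr)=U(g)\,\Phi(C)\,U^{\dagger}(g),
\end{equation}
so that the denominator $\sqrt2\,\norm{\Phi(C)}$ is unchanged under $C\mapsto V(g)CV^{\dagger}(g)$, conjugation by $U(g)$ preserving the norm. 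For the numerator, a change of variables in the pushforward $R_V(g)[\mu]$ turns $\int \d\mu\,\abs{\dprod{C}{\sigma}}$ evaluated at the transformed measure into the same integral for $\mu$ with $C$ replaced by $V(g)CV^{\dagger}(g)$. Combining the two and re-indexing the infimum by the bijection $C\mapsto V(g)CV^{\dagger}(g)$ gives the invariance $r[\varrho,R_V(g)[\mu]]=r[\varrho,\mu]$ for every $g$ and every $\mu$.

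With invariance in hand the remainder is an averaging argument. I would check that $\bar\mu$ is a genuine probability measure satisfying the minimal requirement: linearity of $\sigma\mapsto\sigma$ gives $\int \d(R_V(g)[\mu^{\ast}])\,\sigma=V^{\dagger}(g)\varrho_B V(g)=\varrho_B$ for each $g$ by the $V$-invariance of $\varrho_B$, and integrating over $\G$ preserves this. Invariance of $\bar\mu$ follows from the homomorphism property $R_V(h)R_V(g)=R_V(gh)$ together with right-invariance of the Haar measure. Finally, optimality follows from Proposition~\ref{th:principal_concavity}: $r[\varrho,\cdot]$ is concave, so the integral form of Jensen's inequality gives $r[\varrho,\bar\mu]\ge \int_{\G}\d g\,r[\varrho,R_V(g)[\mu^{\ast}]]=R[\varrho]$, while maximality of $R[\varrho]$ gives the reverse inequality; hence $r[\varrho,\bar\mu]=R[\varrho]$ and $\mu^{\ast}:=\bar\mu$ is the desired invariant optimizer.

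The main obstacle I anticipate is not the covariance computation but the measure-theoretic justification of the Haar average: one must argue that $g\mapsto R_V(g)[\mu^{\ast}]$ is weakly measurable and that its average against $\d g$ lands inside the weakly compact, convex feasible set of measures satisfying the minimal requirement, and that the integral form of Jensen's inequality applies to the concave, upper-semicontinuous functional $r[\varrho,\cdot]$. A clean way to sidestep the integral Jensen is to observe that the set of optimizers $\{\mu: r[\varrho,\mu]=R[\varrho]\}$ is convex, as a superlevel set of a concave function, and weakly closed, by the upper-semicontinuity established earlier; the closed convex hull of the orbit $\{R_V(g)[\mu^{\ast}]\}_{g\in\G}$, which contains $\bar\mu$, then consists entirely of optimizers.
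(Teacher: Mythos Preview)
Your proposal is correct and follows essentially the same route as the paper: take an optimal ensemble, Haar-average its $V$-orbit, use the invariance $r[\varrho,R_V(g)[\mu]]=r[\varrho,\mu]$ together with concavity of $r[\varrho,\cdot]$ in $\mu$ to conclude that the averaged measure is again optimal and manifestly $(\G,V)$-invariant. The paper leaves the covariance identity and the feasibility/invariance checks as ``easily verified'', whereas you spell out the partial-trace computation for the denominator, the change-of-variables for the numerator, the minimal-requirement check, and the measure-theoretic caveats around the Haar average and Jensen's inequality; these are useful additions but do not constitute a different argument.
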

\begin{proof}
This theorem is a simple consequence of the concavity of $r[\varrho,\mu]$ in $\mu$. We will only sketch the proof. 
Let $\mu^{\ast}$ be an optimal LHS ensemble for $\varrho$, namely, $R[\varrho] = r[\varrho,\mu^\ast]$.
From the formula of the principal radius, and with the symmetry of $\varrho$, one can easily verify that also $r[\varrho,R_V(g)[\mu^\ast]]=R[\varrho]$. 
Define a measure $\bar{\mu}^\ast$ on $\B_B$ by $\bar{\mu}^{\ast}= \int \d \omega (g) R_V(g) [\mu^\ast]$, where $\omega$ is the Haar measure of $\G$. It is easy to see that $\mu^{\ast}$ is invariant under $\G$. We show that it is an optimal LHS ensemble. Due to the concavity of $r[\varrho,\mu]$ in $\mu$, we have $r[\varrho,\bar{\mu}^\ast] \ge \int \d \omega (g) r[\varrho,R_V(g)[\mu^\ast]]=R[\varrho]$. On the other hand, by the definition of the critical radius, $r[\varrho,\bar{\mu}^\ast] \le R[\varrho]$. We therefore have $r[\varrho,\bar{\mu}^\ast] = R[\varrho]$, or $\bar{\mu}^\ast$ is an optimal LHS ensemble.
\end{proof}
One may observe from the above proof that the symmetry of LHS ensembles is determined only by the action $V$ (and not $U$). In fact, the notion of the $(\G,U,V)$-symmetric state seems a bit stronger than necessary. This is indeed the case. In fact, the theorem can be formulated as: when the set of steering outcomes $\Lambda (\M_A)$ is $(\G,V)$-symmetric, then LHS ensembles can be assumed to be $(\G,V)$-symmetric.

\section{Scaling of the critical radius}
\begin{theorem}[Scaling of the critical radius]
For any state $\varrho$ and any $\lambda \ge 0$, we have
\begin{equation}
R[\varrho]= \lambda R[\lambda \varrho + (1- \lambda) \frac{\I_A}{2} \otimes \varrho_B].
\label{eq:scaling}
\end{equation}
Note that the theorem applies as well if $\lambda \varrho + (1- \lambda) \frac{\I_A}{2} \otimes \varrho_B$ is not a proper state.
\label{th:scaling}
\end{theorem}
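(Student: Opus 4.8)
The plan is to push the affine map $\varrho \mapsto \varrho_\lambda := \lambda\varrho + (1-\lambda)\frac{\II_A}{2}\otimes\varrho_B$ through the closed-form expression~\eqref{eq:simple_r} of Theorem~\ref{th:simple_r}, which I take as the primary definition of the principal radius (so that neither positivity nor non-degeneracy is needed). The entire statement will then reduce to two elementary facts—that $\varrho_\lambda$ leaves Bob's reduced state invariant, and that it rescales the centred operator $\bar\varrho$ by $\lambda$—after which the scaling is transported through the maximisation~\eqref{eq:critical_radius}.

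First I would verify that the feasible set of LHS ensembles is unchanged. Since $\Tr_A[\frac{\II_A}{2}\otimes\varrho_B]=\varrho_B$, one gets $\Tr_A[\varrho_\lambda]=\lambda\varrho_B+(1-\lambda)\varrho_B=\varrho_B$, so $\varrho_\lambda$ and $\varrho$ share the same marginal on Bob. Consequently the minimal requirement $\varrho_B=\int\d\mu(\sigma)\sigma$ selects exactly the same set of admissible measures for both states; this is the point on which the transfer through the maximum depends, since~\eqref{eq:critical_radius} ranges over that set. As a by-product $\varrho_B$ still lies inside Bob's Bloch ball, so $\varrho_\lambda$ remains in the defining domain of Section~\ref{sec:defining_domain} even when it is not a proper state, which settles the final remark of the theorem.

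The key algebraic step is the identity $\varrho_\lambda-\frac{\II_A}{2}\otimes\varrho_B=\lambda\bar\varrho$, obtained because the affine shift cancels the $(1-\lambda)$ term and leaves $\lambda(\varrho-\frac{\II_A}{2}\otimes\varrho_B)$. Feeding this into the fraction function~\eqref{eq:fraction_function}, the numerator $\int\d\mu(\sigma)|\dprod{C}{\sigma}|$ is untouched (it never involves the state), while the denominator $\sqrt{2}\norm{\Tr_B[(\varrho_\lambda-\frac{\II_A}{2}\otimes\varrho_B)(\II_A\otimes C)]}$ equals $\lambda$ times the corresponding denominator for $\varrho$, by linearity of $\Tr_B$ and absolute homogeneity of the norm. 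Hence $F[\varrho_\lambda,\mu,C]=\lambda^{-1}F[\varrho,\mu,C]$ for every $C$ when $\lambda>0$, and taking the infimum over $C$ yields $r[\varrho_\lambda,\mu]=\lambda^{-1}r[\varrho,\mu]$. Maximising over the common feasible set and pulling the positive scalar out of the supremum then gives $R[\varrho]=\max_\mu r[\varrho,\mu]=\lambda\max_\mu r[\varrho_\lambda,\mu]=\lambda R[\varrho_\lambda]$, which is~\eqref{eq:scaling}; the identity persists in the extended reals when $R[\varrho]=+\infty$.

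I do not expect a genuine obstacle for $\lambda>0$; the only delicate point is the boundary value $\lambda=0$. There $\varrho_0=\frac{\II_A}{2}\otimes\varrho_B$ makes the centred operator $\varrho_0-\frac{\II_A}{2}\otimes\varrho_B$ vanish, the denominator of~\eqref{eq:simple_r} vanishes identically, and the denominator-dominated convention forces $r[\varrho_0,\mu]=+\infty$, so $\lambda R[\varrho_0]$ is the indeterminate product $0\cdot(+\infty)$. I would therefore read $\lambda=0$ as the limit $\lambda\to 0^+$ of $\lambda R[\varrho_\lambda]=\lambda\cdot\lambda^{-1}R[\varrho]=R[\varrho]$, rather than as a literal substitution, and state this convention explicitly. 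This is the only place where the argument requires more than the two direct computations above.
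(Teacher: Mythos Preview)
Your proposal is correct and follows exactly the paper's approach: the paper's proof is a one-liner (``insert $\lambda\varrho+(1-\lambda)\frac{\II_A}{2}\otimes\varrho_B$ into~\eqref{eq:simple_r} to find $r[\varrho_\lambda,\mu]=\lambda^{-1}r[\varrho,\mu]$''), and you have simply spelled out the two observations that make this work---invariance of $\varrho_B$ and the identity $\bar{\varrho}_\lambda=\lambda\bar{\varrho}$. Your handling of $\lambda=0$ is in fact more careful than the paper's, which silently reads the indeterminate $0\cdot(+\infty)$ as giving $R[\frac{\II_A}{2}\otimes\varrho_B]=+\infty$; your limit interpretation (or, equivalently, a direct appeal to the denominator-dominated convention) is the right way to make that precise.
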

\begin{proof}
The proof is trivial given formula~\eqref{eq:simple_r}. One simply inserts $\lambda \varrho + (1- \lambda) \frac{\I_A}{2} \otimes \varrho_B$ to find that $r[\lambda \varrho + (1- \lambda) \frac{\I_A}{2} \otimes \varrho_B,\mu]=\frac{1}{\lambda} r[\varrho,\mu]$, which implies equation~\eqref{eq:scaling}.

\end{proof}

Note that by setting $\lambda=0$, we find $R[\frac{\I_A}{2} \otimes \varrho_B]=+\infty$. Thus the critical radius can be infinite; we will see below that it is infinite only at states of this form. Geometrically, along the scaling line, the equator of the steering outcomes $\Lambda(\S_A)$ is  uniformly rescaled by the factor $\lambda$.  At $\lambda=0$, the equator degenerates to a single point.

\section{Continuous symmetry of the critical radius}

The Bloch hyperplane for two qubits, denoted by $\P$, is the linear manifold of hermitian trace-$1$ operators acting on $\H_A \otimes \H_B$. For $U \in \U(2) $, $V \in \GL(2)$, consider the affine transformation from the Bloch hyperplane of the joint system into itself $\varphi_{(U,V)}:\P \to \P$, defined by 
\begin{equation}
\varphi_{(U,V)} (X) = \frac{(U \otimes V) X (U^\dagger \otimes V^\dagger)}{\Tr [(U \otimes V) X (U^\dagger \otimes V^\dagger)]}.
\end{equation}
for $X \in \P$. Note that this is a group action of $\U(2) \times \GL(2)$ on $\P$. Moreover $\varphi_{(U,V)}$ conserves the positivity, thus also maps the set of (bipartite) proper states into itself. 

Accepting a bit of ambiguity in notation for the sake of simplicity, for $V \in \GL(2)$, we also denote $\varphi_{V}: \B_B \to \B_B$ defined by
\begin{equation}
\varphi_V (\sigma)= \frac{V \sigma V^{\dagger}}{\Tr (V \sigma V^{\dagger})}.
\end{equation} 
\begin{lemma}
Consider a given state $\varrho$ and a given probability measure (LHS ensemble) $\mu$ satisfying the minimal requirement $\int \d \mu (\sigma) \sigma = \varrho_B$. For $U \in \U(2)$ and $V \in \operatorname{GL} (2)$, we denote $\tilde{\varrho}=\varphi_{(U,V)} (\varrho)$.
Note that there exists a unique probability measure $\tilde{\mu}$ on $\B_B$ defined by
\begin{equation}
\int \d \tilde{\mu} (\sigma ) f(\sigma ) = \frac{1}{\Tr (V \varrho_B V^{\dagger})} \int  \frac{\d \mu \circ \varphi_{V^{-1}} ( \sigma ) }{ \Tr [V^{-1} \sigma (V^{-1})^{\dagger}]} f( \sigma )
\end{equation}
for all continuous functions $f$. Then $\tilde{\mu}$ satisfies the minimal requirement for $\tilde{\varrho}$ and $r[\varrho,\mu] = r[\tilde{\varrho},\tilde{\mu}]$.
\label{lem:principal_invariance}
\end{lemma}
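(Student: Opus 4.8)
The plan is to reduce everything to the fraction-function formula of Theorem~\ref{th:simple_r} and to exhibit an explicit bijection on Bob's operators under which the fraction function is \emph{exactly} invariant; equality of the infima, hence of the principal radii, then follows at once. Throughout I write $N=\Tr(V\varrho_B V^\dagger)$ for the normalisation appearing in $\varphi_{(U,V)}$. First I would make the definition of $\tilde\mu$ usable by recognising $\mu\circ\varphi_{V^{-1}}$ as the pushforward of $\mu$ by $\varphi_V$ (since $\varphi_V^{-1}=\varphi_{V^{-1}}$) and performing the change of variables $\sigma=\varphi_V(\tau)$. Because $\tau$ has unit trace one has $\Tr[V^{-1}\varphi_V(\tau)(V^{-1})^\dagger]=1/\Tr(V\tau V^\dagger)$, so the defining relation collapses to the clean weighted form
\[ \int \d\tilde\mu(\sigma)\, f(\sigma) = \frac{1}{N}\int \Tr(V\tau V^\dagger)\, f(\varphi_V(\tau))\, \d\mu(\tau). \]
Setting $f\equiv 1$ and using the minimal requirement $\int \tau\,\d\mu(\tau)=\varrho_B$ shows that the total mass is $1$, so $\tilde\mu$ is a probability measure (its uniqueness being the Riesz--Markov statement that a Borel measure on a compact metric space is fixed by its integrals against continuous functions).

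For the minimal requirement I would take $f(\sigma)=\sigma$; then $\Tr(V\tau V^\dagger)\,\varphi_V(\tau)=V\tau V^\dagger$, so $\int \sigma\,\d\tilde\mu(\sigma)=V\varrho_B V^\dagger/N=\varphi_V(\varrho_B)$. On the other hand a short partial-trace computation, $\Tr_A[\tilde\varrho]=\frac{1}{N}\Tr_A[(U\otimes V)\varrho(U^\dagger\otimes V^\dagger)]=\frac{1}{N}V\,\Tr_A[\varrho]\,V^\dagger$ (using $U$ unitary so that conjugation by $U$ drops out of $\Tr_A$), gives $\tilde\varrho_B=V\varrho_B V^\dagger/N=\varphi_V(\varrho_B)$ as well. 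This settles the first claim.

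For $r[\varrho,\mu]=r[\tilde\varrho,\tilde\mu]$ I would introduce the bijection $C\mapsto\tilde C=(V^\dagger)^{-1}C V^{-1}$ on Bob's operators (well defined since $V\in\GL(2)$, and Hermiticity-preserving) and show $F[\tilde\varrho,\tilde\mu,\tilde C]=F[\varrho,\mu,C]$. For the numerator, the identity $\Tr(\tilde C\,\varphi_V(\tau))=\Tr(C\tau)/\Tr(V\tau V^\dagger)$ cancels the weight $\Tr(V\tau V^\dagger)$ in the pushforward formula, leaving the numerator for $\tilde C$ equal to $\frac{1}{N}$ times the numerator for $C$. For the denominator the crucial observation is that $\bar\varrho$ transforms covariantly: since $U\II_A U^\dagger=\II_A$, one has $\frac{\II_A}{2}\otimes\tilde\varrho_B=\frac{1}{N}(U\otimes V)(\frac{\II_A}{2}\otimes\varrho_B)(U^\dagger\otimes V^\dagger)$, whence $\bar{\tilde\varrho}=\frac{1}{N}(U\otimes V)\bar\varrho(U^\dagger\otimes V^\dagger)$. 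Feeding $\II_A\otimes\tilde C$ through this and pulling $V$ and $V^{-1}$ through the partial trace by cyclicity on Bob's factor yields $\Tr_B[\bar{\tilde\varrho}(\II_A\otimes\tilde C)]=\frac{1}{N}\,U\,\Tr_B[\bar\varrho(\II_A\otimes C)]\,U^\dagger$; unitary invariance of the Hilbert--Schmidt norm then gives the denominator for $\tilde C$ equal to $\frac{1}{N}$ times the denominator for $C$. The two factors of $\frac{1}{N}$ cancel, $F$ is invariant, and since $C\mapsto\tilde C$ is a bijection the infima coincide.

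The main obstacle is bookkeeping rather than any deep difficulty: one must verify that every $V$-dependent Jacobian factor---the weight $\Tr(V\tau V^\dagger)$ built into $\tilde\mu$, the rescaling of the trace pairing, and the $V$-conjugation inside the partial trace---cancels exactly, leaving only a $U$-conjugation that is invisible to the unitarily invariant norm. I would flag the two points where unitarity of $U$ is genuinely used, namely $U\II_A U^\dagger=\II_A$ (covariance of $\bar\varrho$) and $\norm{UXU^\dagger}=\norm{X}$, since these are precisely why $U$ is restricted to $\U(2)$ while $V$ may range over $\GL(2)$. Finally I would note that both denominators vanish simultaneously (they differ by the nonzero factor $\frac{1}{N}$), so the denominator-dominated $+\infty$ convention is respected by the bijection and the infima still match.
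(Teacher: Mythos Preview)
Your proof is correct and follows essentially the same route as the paper: both arguments perform the change of variable $\sigma=\varphi_V(\tau)$ to unwind $\tilde\mu$, verify the minimal requirement by direct computation, and then establish invariance of the fraction function under a bijection on Bob's operator $C$ induced by $V$ (the paper uses $C\mapsto V^\dagger C V$, you use its inverse $C\mapsto (V^\dagger)^{-1}CV^{-1}$), with the $U$-conjugation disappearing by unitary invariance of the norm. Your bookkeeping of the common factor $1/N$, the explicit flagging of where $U\in\U(2)$ is used, and the remark on the denominator-dominated convention are all sound and, if anything, slightly more careful than the paper's write-up.
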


\begin{proof}
(i) To prove that  $\tilde{\mu}$ satisfies the minimal requirement for $\tilde{\varrho}$, we need to show that
\begin{equation}
\frac{V \varrho_B V^{\dagger}}{\Tr (V \varrho_B V^{\dagger})} = \int \d \tilde{\mu} ( \sigma ) \sigma.
\label{eq:to_prove}
\end{equation}
Using the definition of $\tilde{\mu}$, we have
\begin{equation}
\int \d \tilde{\mu} (\sigma) \sigma = \frac{1}{\Tr (V \varrho_B V^{\dagger})} \int \frac{\d \mu \circ \varphi_{V^{-1}} ( \sigma )}{\Tr [V^{-1} \sigma (V^{-1})^{\dagger}]} \sigma. 
\label{eq:def_of_mu_tilde}
\end{equation}
By changing the variable of integral, $\sigma = \varphi_V ( \tau )$,
\begin{align}
\int \frac{\d \mu \circ \varphi_{V^{-1}} ( \sigma )}{\Tr [V^{-1} 
\sigma (V^{-1})^{\dagger}]} \sigma  &= 
 \int \d \mu  (\tau) \frac{\varphi_V(\tau) }{\Tr (V^{-1} \varphi_V(\tau) (V^{-1})^{\dagger})}  \nonumber \\
&= \int \d \mu  (\tau) V \tau V^{\dagger} \nonumber \\
&= V\varrho_BV^{\dagger},
\label{eq:change_variable}
\end{align}
where we have used the minimal requirement for $\mu$, $\int \d \mu (\tau) \tau = \varrho_B$. From~\eqref{eq:change_variable} and~\eqref{eq:def_of_mu_tilde}, we obtain~\eqref{eq:to_prove}.

(ii) Now we prove that $r[\varrho,\mu] = r[\tilde{\varrho},\tilde{\mu}]$. Using the definition~\eqref{eq:simple_r}, we have $r[\tilde{\varrho},\tilde{\mu}]$ as
\begin{equation}
\inf_{C} 
\frac{1}{\sqrt{2}\norm{\Tr_B[\bar{\tilde{\varrho}} (\II_A \otimes C)]} \Tr (V \varrho_B V^{\dagger}) } \int  \frac{ \d \mu \circ \varphi^{-1}_V (\sigma) |\dprod{C}{\sigma}|}{ \Tr (V^{-1} \sigma (V^{-1})^{\dagger}) },
\end{equation}
where $\bar{\tilde{\varrho}}= \tilde{\varrho}-\frac{\II_A}{2} \otimes \varrho_B$.
In the numerator, we make a change of the integration variable $\sigma=\varphi_V(\tau)$,
\begin{align}
\int  \frac{ \d \mu \circ \varphi^{-1}_V (\sigma) |\dprod{C}{\sigma}|}{ \Tr (V \sigma V^{\dagger}) } &= \int \frac{ \d \mu (\tau) \abs{\dprod{C}{\varphi_V(\tau)}}}{\Tr (V^{-1} \varphi_V(\tau) (V^{-1})^{\dagger})} \nonumber \\
&= \int \mu (\tau) \abs{\dprod{V^\dagger CV}{\tau}}. 
\end{align}

In the denominator, we have
\begin{align}
\Tr_B[\bar{\tilde{\varrho}} (\II_A \otimes C) ] &= \frac{ \Tr_B[(U \otimes V) \bar{\varrho} (U^{\dagger} \otimes V^{\dagger}) (\II_A \otimes C)] }{\Tr (V \varrho_B V^{\dagger})} \nonumber \\
&= \frac{\Tr_B [\bar{\varrho} (\II_A \otimes V^\dagger C V) ]}{\Tr (V \varrho_B V^{\dagger})}. 
\end{align}

So we have
\begin{align}
r[\tilde{\varrho},\tilde{\mu}] &= \inf_C \frac{\int \d \mu (\sigma) \abs{\dprod{V^\dagger CV}{\sigma}}}{ \sqrt{2}\norm{\Tr_B [\bar{\varrho} (\II_A \otimes V^\dagger C V) ]]}} \nonumber \\
&= \inf_C \frac{\int \d \mu (\sigma) \abs{\dprod{C}{\sigma}}}{ \sqrt{2}\norm{\Tr_B[\bar{\varrho} (\II_A \otimes C)]}}, 
\end{align}
where we have used the fact that $C \mapsto VCV^{\dagger}$ is bijective.
The last expression then coincides with $r[\varrho,\mu]$.
\end{proof}

The invariance of the principal radius also has a simple geometrical interpretation. Under the local unitary transformation $U$ on Alice's side, the set of steering outcomes $\Lambda(\M_A)$ is invariant. On the other hand, under the (so-called) local filtering $V$ on Bob's side, $\Lambda(\M_A)$ and $\K(\mu)$ transform covariantly; depending only on the relative geometry of $\Lambda(\M_A)$ and $\K(\mu)$, the principal radius is invariant.  

\begin{theorem}[Continuous symmetry of the critical radius]
For any state $\varrho$ and $U \in \operatorname{U} (2)$, $V \in \operatorname{GL} (2)$, we have $R[\varrho] = R [\varphi_{(U,V)} \varrho]$. 
\label{th:invariant}
\end{theorem}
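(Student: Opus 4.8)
The plan is to read the statement off Lemma~\ref{lem:principal_invariance} by exploiting two facts: first, that $\varphi_{(U,V)}$ is a genuine group action of $\operatorname{U}(2)\times\operatorname{GL}(2)$ on the Bloch hyperplane, so that it is invertible with $\varphi_{(U,V)}^{-1}=\varphi_{(U^{-1},V^{-1})}$; and second, that an optimal LHS ensemble exists (by the preceding existence result), so that the critical radius is actually attained rather than merely approached. The Lemma already supplies, for every feasible $\mu$, an explicitly constructed feasible $\tilde\mu$ with $r[\varrho,\mu]=r[\tilde\varrho,\tilde\mu]$; the only task left is to promote this value-preserving map between feasible ensembles to a statement about their maxima.

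For the forward inequality I would take an optimal ensemble $\mu^{\ast}$ for $\varrho$, so that $R[\varrho]=r[\varrho,\mu^{\ast}]$. Writing $\tilde\varrho=\varphi_{(U,V)}(\varrho)$ and letting $\tilde\mu^{\ast}$ be the ensemble produced by Lemma~\ref{lem:principal_invariance}, that Lemma guarantees $\tilde\mu^{\ast}$ satisfies the minimal requirement for $\tilde\varrho$ and that $r[\tilde\varrho,\tilde\mu^{\ast}]=r[\varrho,\mu^{\ast}]$. Hence $R[\tilde\varrho]\ge r[\tilde\varrho,\tilde\mu^{\ast}]=R[\varrho]$, giving one direction. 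For the reverse inequality I would run exactly the same argument on $\tilde\varrho$ using the inverse group element: since $U^{-1}=U^{\dagger}\in\operatorname{U}(2)$ and $V^{-1}\in\operatorname{GL}(2)$, and $\varphi_{(U^{-1},V^{-1})}(\tilde\varrho)=\varrho$, applying the Lemma to an optimal ensemble for $\tilde\varrho$ yields a feasible ensemble for $\varrho$ of equal principal radius, whence $R[\varrho]\ge R[\tilde\varrho]$. Combining the two inequalities gives $R[\varrho]=R[\varphi_{(U,V)}\varrho]$.

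The analytic heavy lifting — the change of variables $\sigma=\varphi_V(\tau)$, the verification that the minimal requirement is preserved, and the identity $r[\varrho,\mu]=r[\tilde\varrho,\tilde\mu]$ — is already done in Lemma~\ref{lem:principal_invariance}, so the remaining content is purely structural. The one point I expect to require care, and which I regard as the main obstacle, is the invertibility bookkeeping: one must confirm that the measure transformation $\mu\mapsto\tilde\mu$ respects the group law, i.e.\ that the transformation attached to $(U^{-1},V^{-1})$ composed with the one attached to $(U,V)$ is the identity on feasible measures. This makes the two applications of the Lemma genuinely mutually inverse and upgrades the value-preserving correspondence to a bijection of feasible sets, so that the maxima (attained thanks to weak upper-semicontinuity over the compact set of feasible Borel measures, Proposition~\ref{th:principal_upper_semiconitinuity}) coincide rather than merely bounding one another. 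Everything else reduces to the elementary observation that $\operatorname{U}(2)\times\operatorname{GL}(2)$ acts by the stated formula.
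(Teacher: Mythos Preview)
Your argument is correct and matches the paper's proof almost verbatim: pick an optimal $\mu^\ast$ for $\varrho$, push it through Lemma~\ref{lem:principal_invariance} to get $R[\varrho]\le R[\tilde\varrho]$, then apply the Lemma to the inverse group element $(U^{-1},V^{-1})$ for the reverse inequality. The ``invertibility bookkeeping'' you flag as the main obstacle is in fact unnecessary---the two independent one-sided applications of the Lemma already yield both inequalities, so you never need to verify that the measure maps compose to the identity or that the correspondence is a bijection.
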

\begin{proof}
Let $\mu^\ast$ be an optimal LHS ensemble for $\varrho$, then $R [\varrho]= r [\varrho,\mu^\ast]$. Let $\tilde{\mu^{\ast}}$ be defined as in Lemma~\ref{lem:principal_invariance}, then $r [\varrho,\mu^\ast] = r [\tilde{\varrho},\tilde{\mu^\ast}] \le R [\tilde{\varrho}]$, thus $R[\varrho] \le R[\tilde{\varrho}]$. 
Applying the Lemma for the reversed transformation from $\tilde{\varrho}$ to $\varrho$, we find $R[\tilde{\varrho}] \le R[\varrho]$. 
It then follows that $R[\varrho]= R[\tilde{\varrho}]$.
\end{proof}

\section{Time-reversal symmetry of the critical radius}
Consider Alice's Bloch hyperplane $\P_A$ and fix the Pauli basis. The time-reversal transformation on Alice's Bloch hyperplane is the transformation $T_A: \P_A \to \P_A$, $X \mapsto T_A(X)= X^{\ast}$, where $X^{\ast}$ is the complex conjugation of $X$. Geometrically, $T_A$ is the reflection along $\sigma_y$. Therefore $T_A$ maps Alice's Bloch ball to itself. Upto a unitary transformation, $T_A$ is also equivalent to the inversion of $\P_A$ through $\frac{\II_A}{2}$. \new{In fact, we will not distinguish different implementations of the time-reversal transformation which are equivalent upto some unitary transformations.} 

On a bipartite state $\varrho$, $T_A$ is extended to partial time-reversal transformation $T_A \otimes \mathcal{I}_B$, where $\mathcal{I}_B$ is the identity map on Bob's space. The same notation is applied to the time-reversal transformation on Bob's side. Upto local unitary transformations, the partial time-reversal transformation is equivalent to the partial transposition. Note that on the bipartite Bloch hyperplane, $T_A \otimes \mathcal{I}_B$ does not map the set of \emph{proper} states into itself. In fact, the subset of proper states that is invariant under $T_A \otimes \mathcal{I}_B$ are separable states---by the Peres--Horodecki criterion of the partial transposition~\cite{peresppt,horodeckippt}. Somehow unexpectedly, for steerability, the following theorem tells that the critical radius $R$ is invariant under the partial time-reversal transformations. 
\begin{theorem}[Time-reversal symmetry of the critical radius]
For any state $\varrho$, we have $R[\varrho]= R[(T_A \otimes \mathcal{I}_B) \varrho]=R[(\mathcal{I}_A \otimes T_B) \varrho]$.
\label{th:time_reversal}
\end{theorem}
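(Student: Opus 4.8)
The plan is to mirror the structure used for the continuous symmetry (Lemma~\ref{lem:principal_invariance} and Theorem~\ref{th:invariant}): I will first establish an invariance of the \emph{principal} radius $r[\varrho,\mu]$ under each partial time-reversal, with $\mu$ replaced by a suitably transformed ensemble, and then pass to the maximum over $\mu$ via~\eqref{eq:critical_radius}. Throughout I would work with the explicit coordinate form of the fraction function~\eqref{eq:fraction_function_coordinate}, because the entire argument rests on one elementary observation: on the Bloch hyperplane both $T_A$ and $T_B$ act as the reflection $R=\operatorname{diag}(1,-1,1)$ of the respective Bloch $3$-vectors (they flip the $\sigma_y$ component), and $R\in\mathrm{O}(3)$ is orthogonal with $R^2=\II$. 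So I would first record how partial time-reversal acts on the Bloch tensor $\Theta=\begin{pmatrix}1 & \b^T\\ \a & T\end{pmatrix}$. Since $T_A$ is self-adjoint for the Hilbert–Schmidt product and $T_A(\sigma_i^A)=\epsilon_i\sigma_i^A$ with $\epsilon_y=-1$ and $\epsilon_i=1$ otherwise, moving $T_A$ onto the Pauli basis shows that $(T_A\otimes\mathcal{I}_B)\varrho$ induces $\a\mapsto R\a$, $T\mapsto RT$, $\b\mapsto\b$; symmetrically, $(\mathcal{I}_A\otimes T_B)\varrho$ induces $\b\mapsto R\b$, $T\mapsto TR$, $\a\mapsto\a$.

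For the Alice-side statement $R[\varrho]=R[(T_A\otimes\mathcal{I}_B)\varrho]$, I would note that $\varrho_B$ is untouched, so the minimal requirement is unchanged and I can keep the \emph{same} ensemble $\mu$. In~\eqref{eq:fraction_function_coordinate} the numerator does not involve $\varrho$, while the denominator becomes $\norm{c_0 R\a + RT\c}=\norm{R(c_0\a+T\c)}=\norm{c_0\a+T\c}$ by orthogonality of $R$. Hence the fraction function is unchanged for every $C$, so $r[(T_A\otimes\mathcal{I}_B)\varrho,\mu]=r[\varrho,\mu]$ and, taking the maximum over $\mu$, the critical radii coincide.

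For the Bob-side statement $R[\varrho]=R[(\mathcal{I}_A\otimes T_B)\varrho]$ the argument follows Lemma~\ref{lem:principal_invariance} more closely, because now $\varrho_B\mapsto T_B(\varrho_B)$ changes. I would define $\tilde\mu$ as the push-forward of $\mu$ by the reflection $\vv\mapsto R\vv$; since $R$ maps Bob's Bloch ball onto itself, $\tilde\mu$ is again a probability measure on $\B_B$, and linearity gives $\int\d\tilde\mu(\vv)\,\vv=R\b$, i.e.\ $\tilde\mu$ meets the minimal requirement for the transformed state. In~\eqref{eq:fraction_function_coordinate} the substitution $\vv\mapsto R\vv$ turns the numerator into $\int\d\mu(\vv)\abs{c_0+(R\c)^T\vv}$ and the denominator into $\norm{c_0\a+T(R\c)}$; writing $\c'=R\c$ (a bijection of the optimisation variable, as $R^2=\II$) identifies this with $F[\varrho,\mu,C']$. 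Infimising over $C$ then yields $r[(\mathcal{I}_A\otimes T_B)\varrho,\tilde\mu]=r[\varrho,\mu]$, and since push-forward by the involution $R$ is a bijection of the admissible ensembles, the maxima agree.

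The computation itself is routine; the steps that require genuine argument rather than algebra are the structural ones, and these are where I expect the only real care. I must check that partial time-reversal keeps us inside the defining domain (true because $R$ preserves Bob's Bloch ball, Section~\ref{sec:defining_domain}), that $\tilde\mu$ is genuinely admissible, and that $\mu\mapsto\tilde\mu$ is a bijection. The main conceptual point I would emphasise is \emph{why} this is not already contained in Theorem~\ref{th:invariant}: the reflection $R$ has determinant $-1$, so it lies in $\mathrm{O}(3)\setminus\mathrm{SO}(3)$ and cannot be realised by any local unitary, which only generate $\mathrm{SO}(3)$ rotations of the Bloch vectors. Partial time-reversal is precisely the discrete symmetry that promotes the invariance group on each side from $\mathrm{SO}(3)$ to the full $\mathrm{O}(3)$, which is what makes the invariance under partial transposition look surprising while keeping the proof short.
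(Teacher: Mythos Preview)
Your proof is correct and follows essentially the same route as the paper: for $T_A$ you show \emph{invariance} of the principal radius with the same $\mu$ (the numerator is untouched and the denominator is preserved because time-reversal is an isometry/$R$ is orthogonal), and for $T_B$ you show \emph{covariance} via the push-forward $\tilde\mu=\mu\circ T_B^{-1}$ together with the bijective change of variable $C\mapsto T_B(C)$. The only difference is cosmetic---you work in Bloch coordinates~\eqref{eq:fraction_function_coordinate} while the paper uses the operator form~\eqref{eq:simple_r}---and your closing remark on $\mathrm{O}(3)$ versus $\mathrm{SO}(3)$ nicely articulates why this is not subsumed by Theorem~\ref{th:invariant}.
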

While quantum steering is asymmetric between two parties, this theorem has a rather symmetric form between the time-reversals on either of the parties. The proof, however, seems to suggest that this symmetry is perhaps rather accidental. 
\begin{proof}
(i) We start with proving $R[\varrho]= R[(T_A \otimes \mathcal{I}_B) \varrho]$. In fact we can show the \emph{invariance} of the principal radius, $r[\varrho,\mu]= r[(T_A \otimes \mathcal{I}_B) \varrho, \mu]$. This is easily seen because the numerator of~\eqref{eq:simple_r} is invariant under the transformation, the denominator is also invariant since the time-reversal is isometric. 

(ii) The proof that $R[\varrho]= R[(\mathcal{I}_A \otimes T_B) \varrho]$ is only slightly different. It follows from the \emph{covariance} of the principal radius, $r[\varrho,\mu]= r[(\mathcal{I}_A \otimes T_B) \varrho, \mu \circ T_B^{-1}]$.
Clearly the minimal requirement is covariant, namely, 
\begin{equation}
T_B(\varrho_B) = \int \d \mu \circ T_B^{-1} (\sigma) \sigma.
\end{equation} 
Moreover, we have
\begin{equation}
r[(\mathcal{I}_B \otimes T_B) \varrho, \mu \circ T_B^{-1}]= \inf_{C} \frac{\int \d \mu \circ T_B^{-1} (\sigma)  |\dprod{C}{\sigma}|}{\sqrt{2}\norm{\Tr_B[(\mathcal{I}_A \otimes T_B) \bar{\varrho} (\II_A \otimes C)]}},
\end{equation}
where $\bar{\varrho}= \varrho - \frac{\II_A}{2}  \otimes \varrho_B$.
Now we note that $T_B$ is symmetric on the hermitian operators, $\dprod{X}{T_B(Y)}=\dprod{T_B(X)}{Y}$. In the numerator, changing the integration variable and applying the symmetry of $T_B$, we arrive at
\begin{equation}
\int \d \mu \circ T_B^{-1} (\sigma)  |\dprod{C}{\sigma}|=\int \d \mu (\sigma)  |\dprod{T_B(C)}{\sigma}|.
\end{equation}
In the denominator, we have the identity
\begin{equation}
\Tr_B[(\mathcal{I}_A \otimes T_B) \bar{\varrho} (\II_A \otimes C)]= \Tr_B[\bar{\varrho} (\II_A \otimes T_B(C))],
\end{equation}
which can be proved by expanding $\bar{\varrho}$ in product operators and verifying it for every product operator term. 
Collecting both the numerator and the denominator, we then have
\begin{equation}
r[(\mathcal{I}_B \otimes T_B) \varrho, \mu \circ T_B^{-1}]= \inf_{C} \frac{\int \d \mu (\sigma)  |\dprod{T_B(C)}{\sigma}|}{\sqrt{2}\norm{\Tr_B[\bar{\varrho} (\II_A \otimes T_B(C))]}},
\end{equation}
which is expressively the same as $r[\varrho,\mu]$ since $T_B$ is bijective.
\end{proof}

From the above proof, one may find a similar geometrical signature as the continuous symmetry of the critical radius: the local time-reversal transformation on Alice's space leaves $\Lambda (\M_A)$ invariant, while the local time-reversal transformation on Bob's side acts covariantly on $\Lambda (\M_A)$ and $\K (\mu)$. 
\section{The canonical form and normal states}
\label{sec:canonical}
A generic state is fully characterised by Alice's and Bob's Bloch vectors $\a$ and $\b$ and their correlation matrix $T$. We therefore sometimes identify $\varrho$ with a triple $(\a,T,\b)$, $\varrho \equiv (\a,T,\b)$.

If Bob's reduced state is pure, the bipartite state is called \emph{abnormal}. In this case, there exists only a single measure that satisfies the minimal requirement, namely the one supported only at Bob's reduced state. 
The critical radius then reads,
\begin{equation}
R[(\a,T,\b)]= \inf_{c_0,\c}\frac{\abs{c_0 + \c^T \b}}{\norm{c_0 \a + T \c}},
\label{eq:critical_radius_non_normal}
\end{equation}
where we have used the Bloch parameters $\a$, $\b$, $T$ to denote the state. To find this infimum, we change the variable $c_0'= c_0+\c^T \b$ and have
\begin{equation}
R[(\a,T,\b)]= \inf_{c_0',\c}\frac{\abs{c_0'}}{\norm{c_0' \a + (T-\a\b^T) \c}}.
\label{eq:critical_radius_non_normal_new}
\end{equation}
One then finds that for abnormal states, we have
\begin{equation}
R[(\a,T,\b)]=
\left\{ 
\begin{array}{ll}
\frac{1}{\norm{\a}} & \mbox{ if $T=\a \b^T$ (product states),} \\
0 & \mbox{ if $T \ne \a \b^T$}.
\end{array}
\right.
\label{eq:critical_non_normal}
\end{equation}
Note that if the abnormal state is a proper state (i.e., positive), it must be a product state and thus unsteerable.

If Bob's reduced state is not pure, the state is said to be \emph{normal}. By the continuous symmetry of the critical radius Theorem~\ref{th:invariant}, a normal state can always be brought into the \emph{canonical form} without changing the critical radius,
\begin{equation}
\Theta = 
\begin{pmatrix}
1 & \pmb{0}^T \\
\a & T
\end{pmatrix},
\label{eq:canonical_theta}
\end{equation}
where $\a$ is Alice's reduced state, and $T$ is the correlation matrix, which can also be assumed to be diagonal $T=\operatorname{diag} (\pmb{s})$. \new{Note that the canonical parameters, that is, its Alice's reduced state and the correlation diagonal in the canonical form, vary continuously as functions of $\varrho$ limited to the set of normal states.} Moreover if a normal states is non-degenerate, its canonical form is also non-degenerate (and vice versa).  

For a state in the canonical form, we also identify the notation $\varrho \equiv (\a,T)$. \new{In fact, the importance of the canonical form to studying quantum steering cannot be over-emphasised. Let us note immediately some interesting properties of the canonical form.} 

First, for canonical states, the invariance of the critical radius under partial time-reversal transformation implies that $R[(\a,T)]=R[(-\a,-T)]=R[(\a,-T)]=R[(-\a,T)]$. 

Second, the minimum requirement $\int \d \mu (\vv) \vv=0$ is independent of the canonical state $\varrho=(\a,T)$. This is in fact a every important technical point, which renders studying of general properties of the critical radius such as its continuity possible at all.

And third, the operator $C$ in the fraction function can be limited to some simple constraints: 
\begin{lemma}
For a two-qubit canonical state $\varrho=(\a,T)$, the critical radius can be found by
\begin{equation}
r[\varrho,\mu]= \inf_{c_0,\c} \frac{\int \d \mu ({\vv})  |c_0 + \c^T {\vv}|}{\norm{c_0 \a +  T \c}}.
\end{equation}
where $c_0$ and $\c$ can be subjected to canonical constraints $\norm{\c}=1$ and $-1 \le c_0 \le +1$.
\label{lem:cononical_constraint}
\end{lemma}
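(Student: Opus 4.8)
The plan is to exploit two features of the fraction function in the coordinate form~\eqref{eq:fraction_function_coordinate}: its invariance under positive rescaling of the pair $(c_0,\c)$, and the simplification of its numerator that occurs precisely because $\mu$ satisfies the minimal requirement in the canonical frame, $\int \d\mu(\vv)\,\vv = \b = 0$ (the ``very important technical point'' noted above). Since the fraction function is homogeneous of degree zero in $(c_0,\c)$ — both numerator and denominator scale linearly — the infimum over all $(c_0,\c)$ splits into the single ray $\c=0$, which yields the value $1/\norm{\a}$, and the cone $\c\neq 0$, on which I may normalise $\norm{\c}=1$. It then remains to restrict the free parameter $c_0\in\RR$ to the slab $\abs{c_0}\le 1$, and to verify that the stray value $1/\norm{\a}$ coming from the excluded ray $\c=0$ is not actually the minimiser.

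First I would compute the numerator for $\norm{\c}=1$ and $\abs{c_0}\ge 1$. Because $\vv$ ranges over Bob's Bloch ball, $\abs{\c^T\vv}\le\norm{\c}\,\norm{\vv}\le 1$, so $c_0+\c^T\vv$ does not change sign once $\abs{c_0}\ge 1$; integrating and invoking $\int\d\mu(\vv)\,\vv = 0$ collapses the numerator to $\abs{c_0}$. Hence on $\abs{c_0}\ge 1$ the fraction function reduces to $\abs{c_0}/\norm{c_0\a+T\c}$.

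Next I would show that leaving the slab never improves the infimum. Writing $t=1/c_0$ for $c_0\ge1$, the reciprocal square $\norm{c_0\a+T\c}^2/c_0^2 = \norm{\a}^2 + 2(\a^TT\c)\,t + \norm{T\c}^2\,t^2$ is an upward parabola in $t\in(0,1]$, so its maximum (equivalently, the minimum of the fraction function) is attained at an endpoint: at $t=1$, i.e.\ $c_0=1$, which already lies on the boundary of the slab, or in the limit $t\to0^+$, i.e.\ $c_0\to\infty$, where the fraction function tends to $1/\norm{\a}$. The same conclusion holds for $c_0\le -1$ by the symmetry $(c_0,\c)\mapsto(-c_0,-\c)$, which leaves the fraction function invariant. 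Consequently the infimum over $c_0\in\RR$ equals $\min\bigl(\inf_{\abs{c_0}\le1}F,\,1/\norm{\a}\bigr)$, and taking the infimum over $\norm{\c}=1$ and over the ray $\c=0$ shows that the full infimum equals $\min(J,\,1/\norm{\a})$, where $J$ denotes the constrained infimum over $\{\norm{\c}=1,\ \abs{c_0}\le1\}$.

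Finally I would close the gap by producing a point of the constrained region that already dominates the stray value $1/\norm{\a}$. Taking $c_0=1$ and $\c\propto T^T\a$ (any unit $\c$ with $\a^TT\c\ge0$ works, and if $T^T\a=0$ then every unit $\c$ qualifies) gives $\norm{\a+T\c}\ge\norm{\a}$, hence $F(1,\c)=1/\norm{\a+T\c}\le 1/\norm{\a}$; therefore $J\le 1/\norm{\a}$ and $\min(J,\,1/\norm{\a})=J$, as required. The main obstacle is exactly this last bit of bookkeeping: homogeneity alone only delivers $\norm{\c}=1$, and combining it with $\abs{c_0}\le1$ forces one to control the limit $c_0\to\infty$ and reconcile it with the excluded ray $\c=0$, both of which converge to the common value $1/\norm{\a}$. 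In the degenerate cases $\a=0$ (where $1/\norm{\a}=+\infty$) or a vanishing denominator, the denominator-dominated convention renders the stray value $+\infty$ and the argument is vacuous.
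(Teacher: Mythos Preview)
Your argument is correct and follows essentially the same route as the paper: exploit homogeneity to normalise $\norm{\c}=1$, use the sign constancy of $c_0+\c^T\vv$ together with the minimal requirement $\int\d\mu(\vv)\,\vv=0$ to collapse the numerator to $\abs{c_0}$ on $\abs{c_0}\ge1$, and then argue that nothing is gained by leaving the slab $\abs{c_0}\le1$. The paper dispatches the stray value $1/\norm{\a}$ at the outset (noting $F(c_0,\c=0)=1/\norm{\a}\ge\lim_{c_0\to\infty}F\ge\inf_{\c\ne0}F$) and then asserts the boundary reduction ``clearly'', whereas you carry the value $1/\norm{\a}$ along via the parabola analysis and close the gap explicitly at the end with a well-chosen $\c$; this is only a difference in bookkeeping, not in substance.
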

\begin{proof}
Let us recall the fraction function 
\begin{equation}
F[\varrho,\mu,c_0,\c]=\frac{\int \d \mu ({\vv})  |c_0 + \c^T {\vv}|}{\norm{c_0 \a +  T \c}}.
\end{equation}
We first note that we can assume $\c \ne 0$. This is because  $F[\varrho,\mu,c_0,\c=0]=\frac{1}{\norm{\a}} \ge \lim_{c_0 \to \infty} F[\varrho,\mu,c_0,\c \ne 0] \ge \inf_{c_0, \c \ne 0}  F[\varrho,\mu,c_0,\c \ne 0]$.

Now since for all $\lambda > 0$, $F[\varrho,\mu,c_0,\c]=F[\varrho,\mu,\lambda c_0, \lambda \c]$, we can out the constraint $\norm{\c}=1$ by choosing an appropriate $\lambda$.  

We next show that $F[\varrho,\mu,c_0,\c]$ with $\norm{\c}=1$ and $\abs{c_0} \ge 1$ attains the infimum at $\abs{c_0}=1$. To see this, note that for $\norm{\c}=1$ and $\abs{c_0} \ge 1$, we have either $c_0 + \c^T {\vv} \ge 0$ or $c_0 + \c^T {\vv} \le 0$ for all ${\vv}$ in Bob's Bloch ball. Therefore $\int \d \mu ({\vv})  |c_0 + \c^T {\vv}|=\abs{\int \d \mu ({\vv})  (c_0 + \c^T {\vv})}=\abs{c_0}$. Thus, for $\norm{\c}=1$ and $\abs{c_0} \ge 1$, we have
\begin{equation}
F[\varrho,\mu,c_0,\c]=\frac{1}{\norm{\a +  T \frac{\c}{c_0}}},
\end{equation}
which clearly attains the infimum at $c_0=\pm 1$. 

To summarise, we therefore can limit the infimum in computing the principal radius from the fraction function to $\norm{\c}=1$ and $-1 \le c_0 \le 1$. 
\end{proof} 

\section{Lower-semicontinuity of the principal radius}
\label{sec:principal_lowersemicontiunuity}
For the sake of convenience, we will limit our analysis to non-degenerate states in the canonical form only. 
This is sufficient to decide steerability. 

\begin{lemma}
Consider $f:X \times Y \to \bar{\RR}$ where $X$ is a metric space and $Y$ is a compact metric space. Define $g:X \to \bar{\RR}$ by $g(x)= \inf_{y \in Y} f(x,y)$.
Suppose at a certain $x$, the function $f$ is jointly lower-semicontinuous at $(x,y)$ for all $y \in Y$. Moreover suppose the function $f(x',\cdot): Y \to \bar{\RR}$  attains its infimum over $Y$ for all $x'$ in a neighbourhood $V(x)$ of $x$. Then $g$ is lower-semicontinuous at $x$.
\label{lem:lower_semicontinuity}
\end{lemma}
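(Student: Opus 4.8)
The plan is to prove lower-semicontinuity directly from the sequential definition: for an arbitrary sequence $\{x_n\} \to x$, I must establish $\underline{\lim}_{n \to \infty} g(x_n) \ge g(x)$. The essential difficulty—and the reason the extra hypotheses (compactness of $Y$ and attainment of the infimum) are needed—is that an infimum of lower-semicontinuous functions is \emph{generically not} lower-semicontinuous. So the one-point argument that worked for the infimum in Lemma~\ref{lem:upper_semicontinuity} cannot be reused; instead I must track a convergent sequence of minimisers and pass to the limit using joint lower-semicontinuity.

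First I would pass to a subsequence realising the limit inferior. Set $L = \underline{\lim}_{n} g(x_n)$ and choose a subsequence $\{x_{n_k}\}$ along which $g(x_{n_k}) \to L$. Since $x_{n_k} \to x$, eventually $x_{n_k} \in V(x)$, so by hypothesis the infimum defining $g(x_{n_k})$ is attained: there exists $y_{k} \in Y$ with $g(x_{n_k}) = f(x_{n_k}, y_k)$. This is where the ``attains its infimum on a neighbourhood'' assumption is used—it guarantees the minimisers $y_k$ are genuinely available for all large $k$.

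Next I would exploit the compactness of $Y$. The sequence $\{y_k\}$ admits a convergent subsequence $y_{k_j} \to y^\ast \in Y$, and along this sub-subsequence one has simultaneously $x_{n_{k_j}} \to x$ and $y_{k_j} \to y^\ast$, hence $(x_{n_{k_j}}, y_{k_j}) \to (x, y^\ast)$ in $X \times Y$. Applying joint lower-semicontinuity of $f$ at $(x, y^\ast)$—which holds by assumption, since it is assumed at $(x,y)$ for every $y \in Y$—gives
\begin{equation}
L = \lim_{j \to \infty} f(x_{n_{k_j}}, y_{k_j}) \ge f(x, y^\ast) \ge \inf_{y \in Y} f(x,y) = g(x),
\end{equation}
where the first equality uses $f(x_{n_k}, y_k) = g(x_{n_k}) \to L$. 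This is precisely $\underline{\lim}_n g(x_n) \ge g(x)$, as required.

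The main obstacle is conceptual rather than computational: recognising that lower-semicontinuity demands a minimiser-extraction strategy rather than the direct estimate used for upper-semicontinuity, and that \emph{joint} (as opposed to separate) lower-semicontinuity is exactly what permits taking the limit while both arguments move at once. Compactness supplies the convergent subsequence $y_{k_j} \to y^\ast$; the neighbourhood-attainment hypothesis supplies the minimisers in the first place. Beyond the careful nesting of the two subsequence extractions, I expect no serious technical difficulty.
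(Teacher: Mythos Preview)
Your proof is correct and follows essentially the same approach as the paper: extract a subsequence realising the limit inferior, use the neighbourhood-attainment hypothesis to pick minimisers $y_k$, pass to a further subsequence via compactness of $Y$, and apply joint lower-semicontinuity at the limit point $(x,y^\ast)$ to conclude. The paper's argument is identical in structure, differing only in notation.
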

\begin{proof}
We would like to show that for any converging sequence $\{x_n \} \to x$, we have 
\begin{equation} 
\underline{\lim}_{n \to \infty} g(x_n) \ge g(x).
\label{eq:to_be_proved_lower}
\end{equation}
Without loss of generality, we can assume that $x_n \in V(x)$ for all $n$. 

Letting $\{x_{k}\}$ be a subsequence of $\{x_n\}$ such that $\{ g(x_{k}) \}$ converges to $\underline{\lim}_{n \to \infty} g(x_n)$, we have
\begin{equation}
\lim_{k \to \infty} g(x_{k})= \underline{\lim}_{n \to \infty} g(x_n).
\end{equation}
Now because for every $x_k \in V(x)$, $g(x_k,\cdot)$ attains its infimum, there exists $y_k$ such that
\begin{equation}
g(x_k)= f(x_k,y_k).
\end{equation}
and thus in particular
\begin{equation}
\lim_{k\to \infty} g(x_k)= \lim_{k \to \infty} f(x_k,y_k).
\end{equation}
Because $Y$ is compact, there exists a subsequence $\{y_p\}$ of $\{y_k\}$ that converges to certain point $y_0$ of $Y$.
We also have
\begin{equation}
\lim_{k\to \infty} g(x_k) = \lim_{p \to \infty} f(x_p,y_p).
\end{equation}
Note that $\{(x_p,y_p) \} \to (x,y_0)$ and because $f$ is jointly lower-semicontinuous at $(x,y_0)$ by assumption, we have
\begin{equation}
\lim_{p \to \infty} f(x_p,y_p) \ge f(x,y_0).
\end{equation}
On the other hand, $f(x,y_0) \ge g(x)=\inf_{y \in Y} f(x,y)$, thus
\begin{equation}
\lim_{p \to \infty} f(x_p,y_p) \ge g(x).
\end{equation}
Then equation~\eqref{eq:to_be_proved_lower} follows directly.
\end{proof}


\begin{corollary}
Consider $f:X \times Y \to \bar{\RR}$ where $X$ is a metric space and $Y$ is a compact metric space. Define $g:X \to \bar{\RR}$ by $g(x)= \inf_{y \in Y} f(x,y)$. Suppose for $x \in X$, the function $f$ is jointly continuous on $V(x) \times Y$ for some neighbourhood $V(x)$ of $x$, then $g$ is continuous at $x$.
\label{th:topo_continuity}
\end{corollary}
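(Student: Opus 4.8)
The plan is to deduce continuity of $g$ at $x$ by establishing upper-semicontinuity and lower-semicontinuity separately and then invoking the characterisation already recorded above, namely that a function is continuous at a point precisely when it is both upper- and lower-semicontinuous there. The two halves come directly from Lemma~\ref{lem:upper_semicontinuity} and Lemma~\ref{lem:lower_semicontinuity}, so the entire argument reduces to checking that the joint-continuity hypothesis supplies each lemma's premises.

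For the upper-semicontinuity half I would apply Lemma~\ref{lem:upper_semicontinuity}. That lemma needs only each slice $f(\cdot,y)$ to be upper-semicontinuous at $x$, and it places no condition on $Y$. Joint continuity of $f$ on $V(x) \times Y$ forces, for every fixed $y$, the slice $f(\cdot,y)$ to be continuous at $x$ (take the constant sequence $y_n = y$ in the definition of joint continuity), hence in particular upper-semicontinuous at $x$. Lemma~\ref{lem:upper_semicontinuity} then yields upper-semicontinuity of $g$ at $x$ at once.

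For the lower-semicontinuity half I would apply Lemma~\ref{lem:lower_semicontinuity}, which asks for three ingredients: compactness of $Y$ (given), joint lower-semicontinuity of $f$ at $(x,y)$ for all $y$ (immediate, since joint continuity implies joint lower-semicontinuity), and attainment of the infimum $\inf_{y \in Y} f(x',y)$ for every $x'$ in a neighbourhood of $x$. This last point is the only step requiring a moment's thought: for each fixed $x' \in V(x)$ the slice $f(x',\cdot)$ is continuous on the compact metric space $Y$, hence lower-semicontinuous on $Y$, and a lower-semicontinuous function on a compact metric space attains its minimum, as already recalled in the text. Thus the infimum is attained throughout $V(x)$, Lemma~\ref{lem:lower_semicontinuity} applies, and $g$ is lower-semicontinuous at $x$.

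Combining the two halves, $g$ is both upper- and lower-semicontinuous at $x$, and therefore continuous at $x$. I do not expect any genuine obstacle here; the corollary is essentially a packaging of the two preceding lemmas. The only place where care is needed is confirming that compactness of $Y$ together with continuity of the slices $f(x',\cdot)$ guarantees attainment of the infimum, which is exactly the hypothesis that distinguishes Lemma~\ref{lem:lower_semicontinuity} from its upper-semicontinuous counterpart and explains why lower-semicontinuity, but not upper-semicontinuity, requires compactness of $Y$.
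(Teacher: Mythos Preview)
Your proof is correct and follows exactly the same approach as the paper: upper-semicontinuity from Lemma~\ref{lem:upper_semicontinuity}, lower-semicontinuity from Lemma~\ref{lem:lower_semicontinuity}, and continuity from their conjunction. The paper's proof is simply the two-line version of what you wrote, omitting the routine verification of the lemmas' hypotheses that you spelled out.
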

\begin{proof}
The upper-semicontinuity of $g$ follows from Lemma~\ref{lem:upper_semicontinuity}. Its lower-semicontinuity follows from Lemma~\ref{lem:lower_semicontinuity}. These two results imply its continuity.
\end{proof}

We are now ready to prove the following important result.
\begin{proposition}
For a non-degenerate state in the canonical form, the principal radius $r[\varrho,\mu]$ is also lower-semicontinuous in $\mu$.
\label{th:principal_lower_semicontinuity}
\end{proposition}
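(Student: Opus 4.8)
The plan is to apply Lemma~\ref{lem:lower_semicontinuity} with $X$ the metrizable space of probability measures on Bob's Bloch ball (with the weak topology) satisfying the minimal requirement $\int \d\mu(\vv)\,\vv = 0$, and with $Y$ the parameter space of the operator $C$. By Lemma~\ref{lem:cononical_constraint} I may restrict the infimum defining $r[\varrho,\mu]$ to the compact set $Y = \{(c_0,\c) : \norm{\c}=1,\ -1 \le c_0 \le 1\}$, writing
\[
r[\varrho,\mu] = \inf_{(c_0,\c)\in Y} F[\varrho,\mu,c_0,\c], \qquad F[\varrho,\mu,c_0,\c] = \frac{N(\mu,c_0,\c)}{D(c_0,\c)},
\]
with numerator $N(\mu,c_0,\c) = \int \d\mu(\vv)\,|c_0 + \c^T\vv|$ and denominator $D(c_0,\c) = \norm{c_0\a + T\c}$. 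It then suffices to verify the two hypotheses of Lemma~\ref{lem:lower_semicontinuity}: that $F$ is jointly lower-semicontinuous at $(\mu,(c_0,\c))$ for every $(c_0,\c)\in Y$, and that $F(\mu',\cdot)$ attains its infimum over $Y$ for all $\mu'$ near $\mu$.

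First I would record the elementary continuity facts. The denominator $D$ is continuous in $(c_0,\c)$ and independent of $\mu$. The numerator $N$ is jointly continuous on $X \times Y$: for $\mu_n \to \mu$ weakly and $(c_{0,n},\c_n)\to(c_0,\c)$, the integrand converges uniformly in $\vv$ since $\big||c_{0,n}+\c_n^T\vv| - |c_0+\c^T\vv|\big| \le |c_{0,n}-c_0| + \norm{\c_n-\c}$ on the Bloch ball, while the residual $\int|c_0+\c^T\vv|\,\d\mu_n - \int|c_0+\c^T\vv|\,\d\mu$ vanishes because $\vv \mapsto |c_0+\c^T\vv|$ is a fixed bounded continuous function. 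Consequently $F = N/D$ is jointly continuous, hence jointly lower-semicontinuous, at every point of $X \times Y$ where $D > 0$.

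The crux is the behaviour where $D$ vanishes, which is exactly the subtlety flagged above. The key claim to prove is that non-degeneracy together with the (fixed, in canonical form) minimal requirement forces the numerator to stay strictly positive there: if $(c_0,\c)\in Y$ satisfies $D(c_0,\c)=0$, then $N(\mu,c_0,\c)>0$ for every admissible $\mu$. Indeed, were $N(\mu,c_0,\c)=0$, the nonnegative integrand would vanish $\mu$-almost everywhere, so integrating $c_0 + \c^T\vv = 0$ against $\mu$ and using $\int\d\mu(\vv)\,\vv = 0$ gives $c_0 = 0$; then $D = \norm{T\c} = 0$ with $T$ invertible (precisely non-degeneracy of the canonical state) forces $\c = 0$, contradicting $\norm{\c}=1$. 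Granting this, at any $(\mu_0,(c_0^0,\c^0))$ with $D(c_0^0,\c^0)=0$ and any sequence $(\mu_n,(c_{0,n},\c_n)) \to (\mu_0,(c_0^0,\c^0))$, joint continuity of $N$ yields $N(\mu_n,\cdot) \to N(\mu_0,c_0^0,\c^0) > 0$ while $D \to 0^{+}$, so $F \to +\infty = F(\mu_0,c_0^0,\c^0)$; thus $F$ is jointly lower-semicontinuous at these points too. The same claim shows that for each fixed $\mu'$ the map $F(\mu',\cdot)$ is lower-semicontinuous on the compact $Y$ (continuous where $D>0$, tending to $+\infty$ where $D=0$) and hence attains its infimum. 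Both hypotheses of Lemma~\ref{lem:lower_semicontinuity} then hold, and the lower-semicontinuity of $r[\varrho,\cdot]$ follows.

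I expect the main obstacle to be precisely the $D=0$ locus: one must resist treating $F$ as a quotient of continuous functions there and instead exploit that the denominator-dominated convention sets $F=+\infty$, which is compatible with lower-semicontinuity only because the numerator is bounded away from zero. Pinning down that positivity is where both non-degeneracy ($T$ invertible) and the canonical normalisation (the minimal requirement $\int \vv\,\d\mu = 0$ being independent of $\mu$) are essential.
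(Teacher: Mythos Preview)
Your proposal is correct and follows essentially the same route as the paper: restrict $C$ to the compact canonical-constraint set via Lemma~\ref{lem:cononical_constraint}, observe that the only obstruction to joint (lower-semi)continuity of the fraction function is the locus where both numerator and denominator vanish, and rule that out using the minimal requirement (forcing $c_0=0$) together with invertibility of $T$ (forcing $\c=0$); the paper then invokes Corollary~\ref{th:topo_continuity} where you invoke Lemma~\ref{lem:lower_semicontinuity} directly, which amounts to the same thing. Your treatment of the $D=0$ locus is in fact a bit more explicit than the paper's, but the substance is identical.
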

\begin{proof}
 With $\varrho=(\a,T)$ and $C=(c_0,\c)$, we recall the fraction function
\begin{equation}
F[\varrho,\mu,C]=\frac{\int \d \mu ({\vv})  |c_0 + \c^T {\vv}|}{\norm{c_0 \a +  T \c}},
\end{equation}
where $C=(c_0,\c)$ is subject to the canonical constraint $-1 \le c_0 \le +1$ and $\norm{\c}=1$. 

Our purpose is to show that $F[\varrho,\mu,C]$ is jointly continuous in $\mu$ and $C$. In fact, the fraction function $F[\varrho,\mu,C]$  is continuous almost everywhere (including those where the denominator vanishes but the numerator is strictly positive). The only  points we have to inspect are those where both the numerator and the denominator vanish. These points, however, do not exist for non-degenerate canonical states.

Indeed, the numerator vanishes, i.e.,  $\int \d \mu ({\vv})  |c_0 + \c^T {\vv}|=0$, implies that  $c_0 + \c^T {\vv}=0$ is of measure $1$. However the minimal requirement imposes that $\int \d \mu ({\vv}) {\vv} = 0$. This is only possible when $c_0=0$. 
However when $c_0=0$, the denominator never vanishes if $T$ is non-degenerate.

Thus we have shown that $F[\varrho,\mu,C]$ is jointly continuous in $\mu$ and $C$. By Corollary~\ref{th:topo_continuity}, $r[\varrho,\mu]= \inf_C F[\varrho,\mu,C]$ is also continuous in $\mu$. (Note that we have shown the upper-semicontinuity of $r[\rho,\mu]$ more generically in Proposition~\ref{th:principal_upper_semiconitinuity}; here the conclusion on continuity only adds the information on its lower-semicontinuity.) 

\end{proof}

%
%
%

\begin{remark}
The lower-semicontinuity of the principal radius of canonical states on degenerate states perhaps also holds. The detailed analysis is however tedious. To support what follows, it is sufficient for us to restrict to non-degenerate states; but see also Section~\ref{sec:computation_degenerate}.
\end{remark}

\section{Finiteness of the critical radius}
\begin{proposition}
The critical radius is finite except for states of the form $\frac{\II}{2} \otimes \varrho_B$. 
\label{th:singularity}
\end{proposition}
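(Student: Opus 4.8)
The plan is to prove the two directions separately. One direction is already in hand: when $\varrho = \frac{\II}{2}\otimes\varrho_B$ we have $\bar\varrho = \varrho - \frac{\II_A}{2}\otimes\varrho_B = 0$, so every denominator in the fraction function~\eqref{eq:fraction_function} vanishes and, by the denominator-dominated convention, $r[\varrho,\mu] = +\infty$ for every $\mu$; hence $R[\varrho] = +\infty$. (This is exactly the $\lambda = 0$ instance of the scaling Theorem~\ref{th:scaling}.) It remains to prove the converse: if $\varrho \neq \frac{\II}{2}\otimes\varrho_B$ then $R[\varrho] < \infty$. The key idea is that $R[\varrho] = \max_\mu \inf_C F[\varrho,\mu,C]$, so it suffices to exhibit a single operator $C_0$ for which $F[\varrho,\mu,C_0]$ is bounded by a constant independent of $\mu$; then $r[\varrho,\mu] = \inf_C F[\varrho,\mu,C] \le F[\varrho,\mu,C_0]$ is uniformly bounded, and so is the maximum $R[\varrho]$.

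First I would choose such a $C_0$. Writing $\bar\varrho$ in the Pauli tensor, the condition $\varrho \neq \frac{\II}{2}\otimes\varrho_B$ is equivalent to $(\a,T) \neq (0,0)$, since $\frac{\II_A}{2}\otimes\varrho_B$ carries the Bloch tensor $\begin{pmatrix}1 & \b^T \\ \pmb{0} & 0\end{pmatrix}$ and thus $\bar\varrho \equiv \begin{pmatrix}0 & \pmb{0}^T \\ \a & T\end{pmatrix}$. Using the coordinate form~\eqref{eq:fraction_function_coordinate}, the denominator of the fraction function is $\norm{c_0 \a + T \c}$, which is not identically zero precisely because $(\a,T) \neq (0,0)$: if $\a \neq 0$ take $C_0 \equiv (c_0,\c) = (1,\pmb{0})$, while if $T \neq 0$ take $c_0 = 0$ and $\c$ any vector outside $\ker T$. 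In either case the denominator $d_0 = \norm{c_0 \a + T \c}$ is strictly positive. (Equivalently and basis-freely: if $\Tr_B[\bar\varrho(\II_A\otimes C)] = 0$ for all $C$, then $\Tr[\bar\varrho(D\otimes C)] = 0$ for all $D,C$, forcing $\bar\varrho = 0$; so some $C_0$ must give a nonzero denominator.)

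Next I would bound the numerator uniformly. In~\eqref{eq:fraction_function_coordinate} the integration variable $\vv$ ranges over Bob's Bloch ball, so $\norm{\vv}\le 1$ and $\abs{c_0 + \c^T\vv} \le \abs{c_0} + \norm{\c}$ pointwise; since $\mu$ is a probability measure, the numerator obeys $\int\d\mu(\vv)\,\abs{c_0 + \c^T\vv} \le \abs{c_0} + \norm{\c}$, a bound independent of $\mu$. Combining, for every admissible $\mu$,
\begin{equation}
r[\varrho,\mu] \le F[\varrho,\mu,C_0] \le \frac{\abs{c_0} + \norm{\c}}{d_0} < \infty,
\end{equation}
with the right-hand side independent of $\mu$. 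Taking the maximum over $\mu$ gives $R[\varrho] = \max_\mu r[\varrho,\mu] \le (\abs{c_0} + \norm{\c})/d_0 < \infty$, as claimed.

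I expect no serious obstacle here; the whole argument rests on the observation that passing to the infimum over $C$ lets us test a single well-chosen $C_0$, and that the numerator of the fraction function is automatically bounded on the compact Bloch ball. The only points requiring a little care are bookkeeping: handling the denominator-dominated convention correctly in the degenerate direction, and verifying that $\bar\varrho \neq 0$ really does produce at least one $C_0$ with nonvanishing denominator. Note that, in contrast to several of the preceding results, this argument does not actually rely on the attainability of the maximum over $\mu$: a uniform bound on $r[\varrho,\mu]$ bounds $\sup_\mu r[\varrho,\mu]$ directly.
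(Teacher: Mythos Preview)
Your proof is correct and follows essentially the same approach as the paper: bound the numerator of the fraction function uniformly in $\mu$ (you use the triangle inequality in coordinates, the paper uses Cauchy--Schwarz in operator form), then observe that $\bar\varrho \neq 0$ guarantees some $C_0$ with nonzero denominator, yielding a $\mu$-independent upper bound on $r[\varrho,\mu]$. If anything, your argument is more explicit than the paper's in justifying the existence of such a $C_0$, which the paper simply asserts.
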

\begin{proof}
To show that $R[\varrho]$ is finite, we will show that $r[\varrho,\mu]$ is bounded.
It is obvious that $r [\varrho,\mu]$ is lower-bounded by $0$. To show that $r[\varrho,\mu]$ is upper-bounded, we observe that
\begin{equation}
\frac{\int \d \mu (\sigma)  |\dprod{C}{\sigma}|}{\sqrt{2}\norm{\Tr_B[\bar{\varrho} (\II_A \otimes C)]}} \le
\frac{\int \d \mu (\sigma)  \norm{\sigma} \norm{C}}{\sqrt{2}\norm{\Tr_B[\bar{\varrho} (\II_A \otimes C)]}}, 
\end{equation}
for any probability measure $\mu$ by Cauchy--Schwarz inequality, or
\begin{equation}
\frac{\int \d \mu (\sigma)  |\dprod{C}{\sigma}|}{\sqrt{2}\norm{\Tr_B[\bar{\varrho} (\II_A \otimes C)]}} \le
\frac{\norm{C}}{\norm{\Tr_B[\bar{\varrho} (\II_A \otimes C)]}}.
\end{equation}
So
\begin{equation}
R[\varrho] = \max_\mu r[\varrho,\mu] \le \inf_{C} \frac{\norm{C}}{\norm{\Tr_B[\bar{\varrho} (\II_A \otimes C)]}},
\end{equation}
where the right-hand-side is certainly upper-bounded except for $\bar{\varrho}=0$, or $\varrho=\frac{\II}{2} \otimes \varrho_B$. 
\end{proof}

\section{Continuity of the critical radius}
While it is desirable to have some feeling of the continuity of the critical radius, this section is technically only needed to demonstrate the closeness of the set of unsteerable states. Readers who are more interested in the practical computation of the critical radius can thus safely skip this section.  

The continuity of the critical radius is a bit subtle. In this section, we will have to consider non-degenerate states more explicitly. We will study the continuity of the critical radius when restricted to certain subsets of the defining domain of the critical radius (c.~f., Section~\ref{sec:defining_domain}): starting from canonical non-degenerate and general canonical states, then extending to non-degenerate normal states and normal states. Within each subset, we will use the notion of relative continuity, which is the continuity with respect to the topology of the subset. Note that when the subset under consideration is not open, this is different from the notion of continuity at every point of the subset when considering the function over the whole defining domain. 

As the topology of the considered subsets matters, we note also that the set of normal states is convex and inherits a natural topology of the defining domain of the critical radius (which inherits the topology of the operator space). The set of abnormal states is closed (since the constraint is closed), and thus the set of normal states is open. The set of canonical states is convex and closed. 

\begin{proposition}
The critical radius function $R$ is upper-semicontinuous relatively in the set of (degenerate and non-degenerate) canonical states. 
\label{th:critical_upper_semicontious}
\end{proposition}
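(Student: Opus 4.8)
The plan is to exploit the defining feature of canonical states noted above: since a canonical state $\varrho=(\a,T)$ has $\varrho_B=\II_B/2$, the minimal requirement reads $\int \d\mu(\vv)\,\vv=0$ and is therefore \emph{the same} constraint for every canonical state. Write $\Omega$ for the resulting set of admissible ensembles, i.e.\ the probability measures on Bob's Bloch ball with barycentre at the origin. As recalled before Proposition~\ref{th:principal_upper_semiconitinuity}, $\Omega$ is metrizable and weakly compact, and crucially it does \emph{not} vary with $\varrho$. Consequently $R[\varrho]=\max_{\mu\in\Omega} r[\varrho,\mu]$ is the value function of a maximisation over a \emph{fixed} compact index set, and the whole task reduces to the statement that the maximum over a fixed compact set of a jointly upper-semicontinuous function is upper-semicontinuous in the remaining parameter.

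First I would establish that $r[\varrho,\mu]$ is jointly upper-semicontinuous in $(\varrho,\mu)$ on the product of the set of canonical states with $\Omega$. For this it suffices, by Lemma~\ref{lem:upper_semicontinuity} with the metric space taken to be that product and with $Y$ the (arbitrary) set of operators $C$, to check that each fraction function $F[\varrho,\mu,C]$ is jointly upper-semicontinuous in $(\varrho,\mu)$ for every fixed $C=(c_0,\c)$. In the coordinate form~\eqref{eq:fraction_function_coordinate}, the numerator $\int \d\mu(\vv)\,|c_0+\c^T\vv|$ is the integral of the bounded continuous function $\vv\mapsto|c_0+\c^T\vv|$, hence weakly continuous in $\mu$ and constant in $\varrho$; the denominator $\norm{c_0\a+T\c}$ is continuous in $\varrho=(\a,T)$ and constant in $\mu$. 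Thus $F[\cdot,\cdot,C]$ is a ratio of two jointly continuous nonnegative functions: it is continuous, hence upper-semicontinuous, wherever the denominator is nonzero, and it equals $+\infty$ wherever the denominator vanishes, where upper-semicontinuity is automatic. Lemma~\ref{lem:upper_semicontinuity} then yields the joint upper-semicontinuity of $r=\inf_C F$.

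Finally I would lift this to $R$ by a compactness argument. Fix a canonical state $\varrho$ and a sequence $\varrho_n\to\varrho$; set $L=\overline{\lim}_n R[\varrho_n]$ and pass to a subsequence along which $R[\varrho_n]\to L$. By the existence of an optimal ensemble, choose $\mu_n\in\Omega$ with $R[\varrho_n]=r[\varrho_n,\mu_n]$; weak compactness of $\Omega$ lets me extract a further subsequence with $\mu_n\to\mu^\ast\in\Omega$. Then $(\varrho_n,\mu_n)\to(\varrho,\mu^\ast)$, and the joint upper-semicontinuity of $r$ gives $L=\lim_n r[\varrho_n,\mu_n]\le r[\varrho,\mu^\ast]\le R[\varrho]$, the last inequality because $\mu^\ast\in\Omega$ is admissible for $\varrho$. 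This is exactly $\overline{\lim}_n R[\varrho_n]\le R[\varrho]$.

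I expect the main obstacle to be precisely this last step: a supremum of upper-semicontinuous functions need not be upper-semicontinuous, so the argument genuinely needs both the \emph{fixedness} of the admissible set $\Omega$ (guaranteed by working with canonical states) and its weak compactness to run the subsequence extraction. The joint upper-semicontinuity of $F$ is routine, but one must take care that the constraint set does not move with $\varrho$, which is exactly why the statement is phrased for canonical states rather than for arbitrary states.
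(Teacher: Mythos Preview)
Your proof is correct and follows essentially the same approach as the paper: both use the fixedness of the admissible ensemble set for canonical states, establish joint upper-semicontinuity of $r[\varrho,\mu]$ via Lemma~\ref{lem:upper_semicontinuity} applied to the fraction functions, and then pass to $R$ by a compactness/subsequence-extraction argument. The only cosmetic difference is that the paper packages your final step as an invocation of the dual of Lemma~\ref{lem:lower_semicontinuity} (supremum and upper-semicontinuity in place of infimum and lower-semicontinuity), whose proof is precisely the subsequence argument you wrote out explicitly.
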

\begin{proof}
Because the fraction function $F[\varrho,\mu,C]$ is upper-semicontinuous jointly in $(\varrho,\mu)$, we have that $r[\varrho,\mu]$ is jointly upper-semicontinuous in $(\varrho,\mu)$ by Lemma~\ref{lem:upper_semicontinuity}. Applying Lemma~\ref{lem:lower_semicontinuity} (with lower-semicontinuity replaced by upper-semicontinuity and infimum replaced by supremum), we then find that $R[\varrho]$ is  upper-semicontinuous. Note that the requirement that $\varrho$ is in the canonical form is indispensable: only in this case the minimal requirement for $\mu$ is independent of $\varrho$ and one can apply Lemma~\ref{lem:lower_semicontinuity}.
\end{proof}

\begin{proposition}
The critical radius function $R$ is continuous relatively in the set of non-degenerate canonical states. 
\label{th:critical_continous}
\end{proposition}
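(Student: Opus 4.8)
The plan is to obtain continuity by pairing the two one-sided estimates. Upper-semicontinuity of $R$ relative to the canonical states is already available from Proposition~\ref{th:critical_upper_semicontious}, so the whole task reduces to establishing lower-semicontinuity of $R$ at each non-degenerate canonical state $\varrho=(\a,T)$. The structural fact I would lean on is the one emphasised right after the definition of the canonical form: for canonical states the minimal requirement is $\int \d\mu(\vv)\,\vv=0$ and is therefore \emph{independent} of $\varrho$. Hence any LHS ensemble $\mu$ admissible for $\varrho$ is simultaneously admissible for every nearby canonical state, so one fixed measure can serve as a competitor along an entire approaching sequence.

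Concretely, I would first show that for a fixed admissible $\mu$ the principal radius $r[\cdot,\mu]$ is continuous in $\varrho$ relative to the non-degenerate canonical states. Following Lemma~\ref{lem:cononical_constraint} I restrict $C=(c_0,\c)$ to the compact set $Y=\{\,\norm{\c}=1,\ -1\le c_0\le 1\,\}$ and read $r[\varrho,\mu]=\inf_{C\in Y}F[\varrho,\mu,C]$ as an infimum over a compact parameter. The goal is to verify the hypotheses of Corollary~\ref{th:topo_continuity} with $x=\varrho$ and $y=C$, namely joint continuity of $F$ (into $\bar{\RR}$) on $X\times Y$, where $X$ is the set of non-degenerate canonical states. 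The numerator $\int\d\mu(\vv)\,\abs{c_0+\c^T\vv}$ depends continuously on $C$ alone, and the denominator $\norm{c_0\a+T\c}$ depends continuously on $(\varrho,C)$; the quotient is continuous wherever the denominator is nonzero and tends to $+\infty$ (its assigned value) where the denominator vanishes but the numerator is positive. The only dangerous points are those annihilating both parts, and these I would rule out exactly as in Proposition~\ref{th:principal_lower_semicontinuity}: the numerator vanishes only if $c_0+\c^T\vv=0$ holds $\mu$-almost everywhere, which together with $\int\d\mu(\vv)\,\vv=0$ forces $c_0=0$, whereupon the denominator equals $\norm{T\c}$, nonzero for $\norm{\c}=1$ since $T$ is invertible throughout $X$. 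Thus no point of $X\times Y$ kills both parts, $F$ is jointly continuous on $X\times Y$, and Corollary~\ref{th:topo_continuity} gives continuity, hence lower-semicontinuity, of $r[\cdot,\mu]$.

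To finish I would pass to the supremum. Let $\varrho_n\to\varrho$ be non-degenerate canonical states and let $\mu^{\ast}$ be an optimal ensemble for $\varrho$ (which exists by the existence theorem for optimal LHS ensembles), so $R[\varrho]=r[\varrho,\mu^{\ast}]$. Since $\mu^{\ast}$ is admissible for every $\varrho_n$, we have $R[\varrho_n]\ge r[\varrho_n,\mu^{\ast}]$, and the continuity just proved yields $\underline{\lim}_{n\to\infty}R[\varrho_n]\ge \lim_{n\to\infty}r[\varrho_n,\mu^{\ast}]=r[\varrho,\mu^{\ast}]=R[\varrho]$; equivalently, $R=\sup_\mu r[\cdot,\mu]$ is lower-semicontinuous as a supremum of lower-semicontinuous functions. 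Combined with Proposition~\ref{th:critical_upper_semicontious} this gives continuity. Finiteness from Proposition~\ref{th:singularity} confirms that the value $+\infty$ never intrudes, since the only singular state $\frac{\II}{2}\otimes\varrho_B$ is degenerate ($T=0$) in canonical form and hence excluded from $X$.

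The main obstacle I anticipate is precisely the joint continuity of the fraction function in $(\varrho,C)$, and inside it the control of the denominator. Showing that no point of $X\times Y$ simultaneously annihilates numerator and denominator is where non-degeneracy is indispensable, and where the openness of the invertibility of $T$ does the real work; once that is in place everything else reduces to the semicontinuity machinery already assembled.
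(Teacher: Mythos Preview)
Your proof is correct, and the decisive observation---that on the compact parameter set $Y=\{\norm{\c}=1,\ -1\le c_0\le 1\}$ the fraction function has no point where numerator and denominator vanish simultaneously, precisely because $T$ is invertible---is the same one the paper singles out. The organisation differs, though. The paper argues joint continuity of $F[\varrho,\mu,C]$ in \emph{all three} variables and then applies Corollary~\ref{th:topo_continuity} twice: first taking the infimum over $C$ to get $r[\varrho,\mu]$ jointly continuous in $(\varrho,\mu)$, then taking the extremum over $\mu$ (using weak compactness of the constrained measure space) to get continuity of $R$ directly. You instead split into the two semicontinuities: upper-semicontinuity is inherited from Proposition~\ref{th:critical_upper_semicontious}, and for lower-semicontinuity you fix $\mu$, apply the corollary once over $C$, and then use that a pointwise supremum of lower-semicontinuous functions is lower-semicontinuous. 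Your route is slightly more elementary on the lower-semicontinuity side, since it avoids the second invocation of the corollary and hence the need for joint continuity in $\mu$ and for compactness of the measure space; the paper's route is marginally more economical in that it yields continuity in one stroke without appealing to the earlier upper-semicontinuity proposition.
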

\begin{proof}
The proof is similar to the above proof. Here we note that $F[\varrho,\mu,C]$ is continuous in all variables when $\varrho$ is limited to non-degenerate canonical states (for the same reason as in the proof of Proposition~\ref{th:principal_lower_semicontinuity}). This guarantees that $r[\varrho,\mu]$ is jointly continuous in $(\varrho,\mu)$ by Corollary~\ref{th:topo_continuity}. Applying this corollary again for $r[\varrho,\mu]$, we find that $R[\varrho]$ is continuous relatively in the set of non-degenerate canonical states. 
\end{proof}

\begin{proposition}
The critical radius is upper-semicontinuous relatively in the set of normal states and continuous relatively in the set of non-degenerate normal states. 
\end{proposition}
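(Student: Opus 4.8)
The plan is to reduce both claims to the corresponding statements for \emph{canonical} states, which are already established in Propositions~\ref{th:critical_upper_semicontious} and~\ref{th:critical_continous}. The bridge is the observation that bringing a normal state into canonical form is a continuous operation that does not change the critical radius. Let $\Phi$ denote the map sending a normal state $\varrho$ to its canonical form $\Phi(\varrho)=(\a,T)$ with $T=\operatorname{diag}(\pmb{s})$, as in equation~\eqref{eq:canonical_theta}. By the continuous symmetry of the critical radius (Theorem~\ref{th:invariant}), $R[\varrho]=R[\Phi(\varrho)]$ for every normal state, so on the set of normal states the critical radius factors as $R=R_{\mathrm{can}}\circ\Phi$, where $R_{\mathrm{can}}$ denotes $R$ restricted to canonical states.

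First I would record that $\Phi$ is continuous on the set of normal states. This is exactly the remark following equation~\eqref{eq:canonical_theta}: the canonical parameters, namely Alice's reduced Bloch vector $\a$ and the correlation diagonal $\pmb{s}$, vary continuously as functions of $\varrho$ over the normal states. Since these parameters determine the canonical operator $\Phi(\varrho)$ through~\eqref{eq:canonical_theta}, the map $\Phi$ into the space of canonical states (endowed with the operator topology) is continuous. I would also use that $\Phi$ preserves non-degeneracy in both directions, as noted after~\eqref{eq:canonical_theta}, so that $\Phi$ restricts to a map from non-degenerate normal states into non-degenerate canonical states.

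The two assertions then follow from the elementary composition rules for (semi)continuity. For upper-semicontinuity, I would use that the composition of a continuous map with an upper-semicontinuous function is upper-semicontinuous: for any sequence of normal states $\varrho_n\to\varrho$, continuity of $\Phi$ gives $\Phi(\varrho_n)\to\Phi(\varrho)$, and upper-semicontinuity of $R_{\mathrm{can}}$ over all canonical states (Proposition~\ref{th:critical_upper_semicontious}) yields $\overline{\lim}_{n}R[\varrho_n]=\overline{\lim}_{n}R_{\mathrm{can}}[\Phi(\varrho_n)]\le R_{\mathrm{can}}[\Phi(\varrho)]=R[\varrho]$. For continuity, I would restrict $\Phi$ to the non-degenerate normal states, where it lands in the non-degenerate canonical states on which $R_{\mathrm{can}}$ is continuous (Proposition~\ref{th:critical_continous}); the composition of two continuous maps is continuous, giving the claim.

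The hard part is the continuity of $\Phi$, which is the only non-formal ingredient and is supplied by the remark after~\eqref{eq:canonical_theta}. The delicate point there is that the local filtering used to diagonalise the correlation block need not depend continuously on $\varrho$ when singular values collide, and the canonical form is only determined up to sign flips and permutations of $(\a,\pmb{s})$. These ambiguities are harmless for us: the critical radius depends on $\varrho$ only through the continuously varying invariants $(\a,\pmb{s})$, and $R_{\mathrm{can}}$ is invariant under the sign and permutation ambiguities by the time-reversal symmetry (Theorem~\ref{th:time_reversal}) together with invariance under local unitaries (Theorem~\ref{th:invariant}). Hence $R=R_{\mathrm{can}}\circ\Phi$ is well-defined and inherits upper-semicontinuity on normal states, and continuity on the non-degenerate normal states, from $R_{\mathrm{can}}$.
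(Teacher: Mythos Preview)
Your proposal is correct and follows essentially the same route as the paper: factor $R=R_{\mathrm{can}}\circ\Phi$ via the canonical-form map, invoke the continuity of $\Phi$ on normal states (the remark after equation~\eqref{eq:canonical_theta}) together with the preservation of non-degeneracy, and then compose with Propositions~\ref{th:critical_upper_semicontious} and~\ref{th:critical_continous}. Your additional paragraph addressing the sign/permutation ambiguity of the canonical form and why it is harmless for $R$ is a welcome elaboration of a point the paper leaves implicit.
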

\begin{proof}
On (non-degenerate or general) normal states, the critical radius function can be considered as a composition of a map from (non-degenerate or general) normal states to (non-degenerate or general) canonical states, and the map from the canonical states to their critical radius values. The former map (i.e., the map from normal states to canonical states) is continuous, and the latter is continuous relatively in the set of non-degenerate canonical states or upper-semicontinuous relatively in the set of canonical states due to the above propositions. Their composition is thus also continuous or upper-semicontinuous, respectively.   
\end{proof}

\begin{remark}
It perhaps also holds that the critical radius is continuous relatively in the set of all normal states, including the degenerate ones. The analysis is again tedious. 
\end{remark}

The continuity of the critical radius breaks down at abnormal product states. It is easy to see that the critical radius is discontinuous at pure product states. Indeed, for all pure entangled states, the critical radius is $\frac{1}{2}$, but it jumps to $1$ at pure product states.

Nevertheless, the upper-semicontinuity still holds at abnormal product states:

\begin{proposition}
The critical radius is upper-semicontinuous at states in the union of normal states and abnormal product states. 
\label{th:critical_upper_semicontinuous_full}
\end{proposition}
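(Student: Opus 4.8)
The plan is to separate the two classes of states in the statement and handle them by completely different mechanisms, since the only genuinely new content sits at the abnormal product states.

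First I would dispose of the normal states cheaply. The set of abnormal states (Bob's reduced state pure) is closed in the defining domain, so the normal states form an open set. Hence any sequence $\{\varrho_n\}$ in the full domain converging to a normal state $\varrho$ is eventually normal, and $\overline{\lim}_{n\to\infty} R[\varrho_n]$ depends only on this normal tail. The upper-semicontinuity relatively in the set of normal states, already established above, then yields $\overline{\lim}_{n\to\infty} R[\varrho_n] \le R[\varrho]$ immediately; nothing beyond openness plus the earlier relative result is required here.

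The substantive case is an abnormal product state $\varrho$, where $\varrho_B$ is pure and $T=\a\b^T$, so that $R[\varrho]=1/\norm{\a}$ by \eqref{eq:critical_non_normal}. Here I would not work with $R$ directly but sandwich it by the explicit upper bound produced inside the proof of Proposition~\ref{th:singularity}. Retaining the factor $\sqrt2$ that the proposition discards (legitimate because $\norm{\sigma}\le 1$ for every $\sigma$ in Bob's Bloch ball, so $\int \d\mu(\sigma)\norm{\sigma}\le 1$), that argument in fact gives, for every state in the domain,
\begin{equation}
R[\varrho'] \le G[\varrho'] := \inf_{C} \frac{\norm{C}}{\sqrt 2\,\norm{\Tr_B[\bar{\varrho'} (\II_A \otimes C)]}}.
\end{equation}
The crux is then a direct computation showing $G$ is tight at $\varrho$: writing $\varrho=\varrho_A\otimes\varrho_B$ gives $\bar{\varrho}=(\varrho_A-\tfrac{\II_A}{2})\otimes\varrho_B$, whence $\Tr_B[\bar{\varrho}(\II_A\otimes C)]=(\varrho_A-\tfrac{\II_A}{2})\dprod{\varrho_B}{C}$ and $\norm{\Tr_B[\bar{\varrho}(\II_A\otimes C)]}=\tfrac{1}{\sqrt2}\norm{\a}\,\abs{\dprod{\varrho_B}{C}}$. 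Minimising over $C$ by Cauchy--Schwarz (saturated at $C\propto\varrho_B$) and using $\norm{\varrho_B}=1$ for the pure $\varrho_B$ gives $G[\varrho]=1/\norm{\a}=R[\varrho]$.

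It remains to note that $G$ is upper-semicontinuous. For each fixed $C$ with $\norm{C}=1$, the map $\varrho'\mapsto\norm{\Tr_B[\bar{\varrho'}(\II_A\otimes C)]}$ is continuous and nonnegative (note $\bar{\varrho'}$ depends continuously on $\varrho'$ through $\varrho_B'=\Tr_A\varrho'$), so its reciprocal is continuous in the extended reals — in particular upper-semicontinuous, tending to $+\infty$ where the denominator vanishes. By Lemma~\ref{lem:upper_semicontinuity}, $G=\inf_{\norm{C}=1}(\cdots)$ is upper-semicontinuous. Combining the three ingredients, for any sequence $\{\varrho_n\}\to\varrho$ in the domain,
\begin{equation}
\overline{\lim}_{n\to\infty} R[\varrho_n] \le \overline{\lim}_{n\to\infty} G[\varrho_n] \le G[\varrho] = R[\varrho],
\end{equation}
which is the desired upper-semicontinuity.

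The main obstacle is the tightness identity $G[\varrho]=R[\varrho]$ at abnormal product states: it is exactly the purity of $\varrho_B$ (forcing $\norm{\varrho_B}=1$ and saturating Cauchy--Schwarz at $C\propto\varrho_B$) that makes the crude bound sharp, and one must keep the $\sqrt2$ dropped in Proposition~\ref{th:singularity} for the constants to match. A secondary point needing care is that the approaching sequence may contain abnormal non-product or degenerate states, where $R$ is defined only through the single admissible ensemble; this is harmless, since the bound $R[\varrho']\le G[\varrho']$ holds uniformly over the whole domain.
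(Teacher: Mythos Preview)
Your proof is correct, and the treatment of normal states (openness plus the earlier relative result) matches the paper exactly. For abnormal product states, however, your route differs from the paper's.

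The paper does not introduce the Cauchy--Schwarz majorant $G$ at all. Instead it works directly in the coordinate form of the fraction function and plugs in the single test operator $C=\II_B$, i.e.\ $(c_0,\c)=(1,\pmb 0)$: the numerator becomes $\int\d\mu(\vv)\,|c_0|=1$ exactly (no Cauchy--Schwarz needed), and the denominator is $\norm{\a_n}$, giving the uniform bound $r[(\a_n,T_n,\b_n),\mu]\le 1/\norm{\a_n}$ for every admissible $\mu$, hence $R[\varrho_n]\le 1/\norm{\a_n}$. Since $\norm{\a_n}\to\norm{\a}$ and $R[\varrho]=1/\norm{\a}$, upper-semicontinuity follows in one line.

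Your argument instead bounds the numerator by $\norm{C}$ via Cauchy--Schwarz, then takes the infimum over $C$ to define $G$, shows $G$ is upper-semicontinuous via Lemma~\ref{lem:upper_semicontinuity}, and verifies the tightness $G[\varrho]=1/\norm{\a}$ at the optimal $C\propto\varrho_B$ using purity. Both routes exploit an upper bound on $R$ that is (i) continuous/upper-semicontinuous in the state and (ii) saturated at the abnormal product target. The paper's choice $C=\II_B$ is more elementary (one test operator, no appeal to the lemma, and no need to track the factor $\sqrt 2$); your construction is coordinate-free and recycles the machinery of Proposition~\ref{th:singularity}, which makes the structural reason for upper-semicontinuity more transparent and would adapt more readily beyond qubits.
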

Note that here we can use the notion of continuity instead of relative continuity. 
\begin{proof}
Note that the set of normal states is open in the defining domain of the critical radius. Upper-semicontinuity relatively in the open set of normal states implies its upper-semicontinuity at normal states when considering the function over the whole defining domain. Now we consider abnormal product states, $\varrho=(\a,\a\b^T,\b)$. For any sequence $(\a_n,T_n,\b_n) \to (\a,\a\b^T,\b)$ (note that states in the sequence can be normal or abnormal), we have
\begin{equation}
r[(\a_n,T_n,\b_n),\mu]=\inf_{c_0,\c} \frac{\int \d \mu ({\vv})  |c_0 + \c^T {\vv}|}{\norm{c_0 \a_n +  T_n \c}} \le \frac{1}{\norm{\a_n}}.
\end{equation}
This upper-bound is obtained by limiting the infimum to $\c=0$. Therefore we also have $R[(\a_n,T_n,\b_n)] \le \frac{1}{\norm{\a_n}}$. So $\overline{\lim}_{n \to \infty} R[(\a_n,T_n,\b_n)] \le  \overline{\lim}_{n \to \infty} \frac{1}{\norm{\a_n}}=\frac{1}{\norm{\a}} = R[(\a,\a\b^T,\b)]$. This implies that $R$ is upper-semicontinuous at $\varrho=(\a,\a\b^T,\b)$.  
\end{proof}

From the above proof, one may find that the robustness of the upper-semicontinuity of the critical radius is somewhat surprising. It in particular implies that the critical radius is upper-semicontinuous relatively in the entire set of proper states. Nevertheless, we see shortly below that this upper-semicontinuity underlies the closeness of the set of unsteerable states---something we would naturally expect. It is reasonable to expect that the upper-semicontinuity of the critical radius eventually breaks down at abnormal, non-product states. However, these pathological states are improper states and of no physical interest. 

\section{Levels of the critical radius}
For $t \in \RR$, we define $C_t= \{\varrho: R[\varrho] \ge t\}$. Note that here $C_t$ contains also improper states by our convention, c.~f. Section~\ref{sec:defining_domain}. The intersection of $C_t$ with the set of proper states is denoted by $Q_t$ as in the main text. In particular, $Q_1$ is the set of all unsteerable proper states. 
\begin{proposition}
For any $t > 0$, the level set $C_t$ is bounded.
\end{proposition}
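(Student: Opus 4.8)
The plan is to turn the statement $R[\varrho]\ge t$ into a quantitative size bound on $\varrho$ by reusing the upper bound on the critical radius obtained in the proof of Proposition~\ref{th:singularity}. Writing $\varrho\equiv(\a,T,\b)$ and using the coordinate form of the fraction function~\eqref{eq:fraction_function_coordinate}, I would first bound the numerator uniformly in $\mu$: since $\vv$ ranges over Bob's Bloch ball (so $\norm{\vv}\le 1$) and $\mu$ is a probability measure,
\begin{equation}
\int \d\mu(\vv)\,\abs{c_0+\c^T\vv}\le \abs{c_0}+\norm{\c}.
\end{equation}
This is independent of $\mu$, so it survives the maximisation defining $R$, giving
\begin{equation}
R[\varrho]=\max_\mu r[\varrho,\mu]\le \inf_{c_0,\c}\frac{\abs{c_0}+\norm{\c}}{\norm{c_0\a+T\c}}.
\end{equation}

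Next I would read this contrapositively. If $\varrho\in C_t$, that is $R[\varrho]\ge t>0$, then for every $(c_0,\c)$ one has $\norm{c_0\a+T\c}\le \tfrac1t(\abs{c_0}+\norm{\c})$. Evaluating at $(c_0,\c)=(1,\pmb{0})$ gives $\norm{\a}\le 1/t$, and evaluating at $c_0=0$ with $\norm{\c}=1$ gives $\norm{T\c}\le 1/t$ for every unit $\c$, i.e. the operator norm of $T$ is at most $1/t$. Thus both $\a$ and $T$ are bounded by $1/t$. Since $\bar{\varrho}=\varrho-\tfrac{\II_A}{2}\otimes\varrho_B$ has Bloch tensor $\begin{pmatrix}0 & \pmb{0}^T\\ \a & T\end{pmatrix}$, it is determined entirely by $\a$ and $T$, so $\norm{\bar{\varrho}}$ is bounded by an absolute multiple of $1/t$.

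Finally I would restore the full state. Every $\varrho\in C_t$ lies in the defining domain of Section~\ref{sec:defining_domain}, so Bob's reduced state $\varrho_B$ sits inside Bob's Bloch ball $\B_B$; hence $\norm{\b}\le 1$ and $\tfrac{\II_A}{2}\otimes\varrho_B$ has norm bounded by an absolute constant. Writing $\varrho=\bar{\varrho}+\tfrac{\II_A}{2}\otimes\varrho_B$ and adding the two bounds shows that $\norm{\varrho}$ is bounded uniformly over $C_t$, which is the claim. I do not expect a real obstacle here: the only points needing care are that the numerator bound above is genuinely uniform in $\mu$ (so that it passes through $\max_\mu$), and that $\bar{\varrho}$ discards the $\tfrac{\II_A}{2}\otimes\varrho_B$ piece, which carries $\b$ and must be bounded separately through the defining-domain constraint rather than through $R$. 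The hypothesis $t>0$ is essential: every bound above scales like $1/t$ and degenerates as $t\to 0$, in agreement with the fact that $C_0$ is the entire (unbounded) defining domain.
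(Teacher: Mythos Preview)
Your proposal is correct and follows essentially the same route as the paper: bound the numerator of the fraction function uniformly in $\mu$, infer $R[\varrho]\le A/\sup_{c_0,\c}\norm{c_0\a+T\c}$, and read off bounds on $\a$ and $T$ from $R[\varrho]\ge t$, while $\b$ is controlled by the defining-domain constraint. Your version is simply more explicit---you use the concrete bound $\abs{c_0}+\norm{\c}$ and evaluate at particular $(c_0,\c)$---where the paper invokes a generic constant $A$ and the abstract $\sup_{c_0,\c}\norm{c_0\a+T\c}$.
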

\begin{proof}
The boundedness of $C_t$ is obtained by an upper-bound for $R$. First, we notice that in the fraction function, we can assume $C$ is bounded. Therefore the numerator of the fraction function is bounded. Therefore
\begin{equation}
R[\varrho] \le \frac{A}{\sup_{c_0,\c}\norm{c_0 \a + T \c}},
\end{equation}
for some constant $A$.

So $R[\varrho] \ge t > 0$ implies $\sup_{c_0,\c}\norm{c_0 \a + T \c} \le A/t < + \infty$, which implies both $\a$ and $T$ are bounded. (Note that $\b$ is always bounded within Bob's Bloch sphere.)  

\end{proof}
\begin{proposition}
For any $t > 0$, the level set $C_t$ is closed relatively in the union of normal states and abnormal product states.
\end{proposition}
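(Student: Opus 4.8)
The plan is to derive the relative closedness of $C_t$ directly from the upper-semicontinuity of the critical radius established in Proposition~\ref{th:critical_upper_semicontinuous_full}, exploiting the standard fact that an upper-semicontinuous function has closed superlevel sets. Writing $S$ for the union of normal states and abnormal product states, the goal is to show that $C_t \cap S$ is closed in the subspace topology of $S$. Since $S$ inherits a metric from the operator space, I would argue with sequences: take an arbitrary sequence $\{\varrho_n\} \subseteq C_t \cap S$ converging to some $\varrho \in S$, and show the limit still satisfies $R[\varrho] \ge t$, i.e.\ $\varrho \in C_t$.

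First I would record the trivial bound $\overline{\lim}_{n\to\infty} R[\varrho_n] \ge t$, which holds because every term lies in $C_t$ and hence $R[\varrho_n] \ge t$. Next, because $\varrho \in S$, Proposition~\ref{th:critical_upper_semicontinuous_full} guarantees that $R$ is upper-semicontinuous at $\varrho$ (with respect to the full defining domain, hence in particular along sequences lying in $S$), so $\overline{\lim}_{n\to\infty} R[\varrho_n] \le R[\varrho]$. Chaining the two inequalities gives $R[\varrho] \ge t$, placing $\varrho$ in $C_t \cap S$ and establishing the relative closedness.

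The argument is essentially a restatement of the equivalence between upper-semicontinuity and closedness of superlevel sets, so there is no genuine computational obstacle. The one point demanding care is the interplay of the two topologies: Proposition~\ref{th:critical_upper_semicontinuous_full} supplies upper-semicontinuity at points of $S$ measured in the ambient defining domain, and I must confirm this suffices for relative closedness within $S$. It does, because relative closedness is tested only by sequences that lie in $S$ and converge to a point of $S$, and the ambient upper-semicontinuity controls exactly such limits; the delicate behaviour of $R$ expected at abnormal non-product states never intervenes, since those states are excluded from $S$. Finally, I would note that the boundedness of $C_t$ from the preceding proposition is not needed for closedness itself, although combined with the present result it promotes $C_t \cap S$ to a compact set.
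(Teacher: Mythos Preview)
Your proposal is correct and follows exactly the paper's approach: the paper's proof is the single line ``This is a direct consequence of the upper-semicontinuity of $R$ over the union of normal states and non-normal product states, c.~f.\ Proposition~\ref{th:critical_upper_semicontinuous_full},'' and your argument simply unpacks the standard implication from upper-semicontinuity to closedness of superlevel sets. Your discussion of the interplay between ambient and relative topology is a welcome clarification, but does not change the underlying idea.
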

\begin{proof}
This is a direct consequence of the upper-semicontinuity of $R$ over the union of normal states and non-normal product states, c.~f. Proposition~\ref{th:critical_upper_semicontinuous_full}.
\end{proof}
\begin{remark}
When considered in the whole defining domain of the critical radius (or in set of all states), the set $C_t$ may not be closed at the (non-physical) abnormal non-product states.  
\end{remark}
\begin{proposition}
For any $t > 0$, the level set $C_t$  is convex.
\label{th:level_set}
\end{proposition}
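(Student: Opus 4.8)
The plan is to show that $C_t$ is closed under convex combinations by \emph{mixing the optimal LHS ensembles} of the two endpoints. Fix $\varrho_1,\varrho_2\in C_t$ and $\lambda\in[0,1]$, and set $\varrho=\lambda\varrho_1+(1-\lambda)\varrho_2$. Since Bob's Bloch ball is convex, $\varrho_B=\lambda\varrho_B^{(1)}+(1-\lambda)\varrho_B^{(2)}$ still lies inside it, so $\varrho$ belongs to the defining domain (Section~\ref{sec:defining_domain}) and $R[\varrho]$ is well-defined. By the existence of an optimal LHS ensemble, I would choose, for each $i$, a measure $\mu_i$ satisfying the minimal requirement for $\varrho_i$ with $r[\varrho_i,\mu_i]=R[\varrho_i]\ge t$. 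The candidate ensemble for $\varrho$ is the mixture $\mu=\lambda\mu_1+(1-\lambda)\mu_2$, which is again a probability measure and satisfies the minimal requirement for $\varrho$, since $\int\d\mu(\sigma)\,\sigma=\lambda\varrho_B^{(1)}+(1-\lambda)\varrho_B^{(2)}=\varrho_B$.

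Next I would estimate the fraction function $F[\varrho,\mu,C]$ for an arbitrary operator $C$, relying on two linearity facts. First, the numerator is linear in the measure, so $\int\d\mu(\sigma)\,|\dprod{C}{\sigma}|=\lambda N_1(C)+(1-\lambda)N_2(C)$ where $N_i(C)=\int\d\mu_i(\sigma)\,|\dprod{C}{\sigma}|$. Second, $\bar{\varrho}=\varrho-\frac{\II_A}{2}\otimes\varrho_B$ is affine in $\varrho$, hence $\bar{\varrho}=\lambda\bar{\varrho}_1+(1-\lambda)\bar{\varrho}_2$, and the triangle inequality for the Hilbert--Schmidt norm gives $\sqrt{2}\norm{\Tr_B[\bar{\varrho}(\II_A\otimes C)]}\le\lambda D_1(C)+(1-\lambda)D_2(C)$, with $D_i(C)=\sqrt{2}\norm{\Tr_B[\bar{\varrho}_i(\II_A\otimes C)]}$. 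In other words, the denominator of $F[\varrho,\mu,C]$ is bounded above by the convex combination of the two endpoint denominators.

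The conclusion then follows from an elementary superadditivity (mediant) inequality. Under the denominator-dominated convention, the inequality $F[\varrho_i,\mu_i,C]\ge r[\varrho_i,\mu_i]\ge t$ is exactly equivalent to $N_i(C)\ge t\,D_i(C)$, which holds even when $D_i(C)=0$ because then $N_i(C)\ge 0=t\,D_i(C)$. Summing with weights $\lambda$ and $1-\lambda$ and invoking the triangle-inequality bound on the denominator yields $\lambda N_1(C)+(1-\lambda)N_2(C)\ge t\bigl(\lambda D_1(C)+(1-\lambda)D_2(C)\bigr)\ge t\,\sqrt{2}\norm{\Tr_B[\bar{\varrho}(\II_A\otimes C)]}$. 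Hence $F[\varrho,\mu,C]\ge t$ whenever the denominator is positive, and $F[\varrho,\mu,C]=+\infty\ge t$ when it vanishes. Taking the infimum over $C$ gives $r[\varrho,\mu]\ge t$, so $R[\varrho]\ge r[\varrho,\mu]\ge t$, i.e.\ $\varrho\in C_t$.

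The main obstacle, and really the only delicate point, is the bookkeeping of the denominator-dominated convention at those $C$ for which one or both endpoint denominators $D_i(C)$ vanish: the superadditivity step must be phrased in the cross-multiplied form $N_i\ge t\,D_i$ rather than by naively dividing, so that the degenerate cases are absorbed cleanly. Everything else reduces to linearity of the numerator in $\mu$, affinity of $\bar{\varrho}$ in $\varrho$, and the triangle inequality in the denominator.
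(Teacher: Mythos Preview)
Your proof is correct and follows essentially the same approach as the paper: mix the optimal LHS ensembles $\mu_0=\lambda\mu_1+(1-\lambda)\mu_2$, pass to the cross-multiplied form $N_i(C)\ge t\,D_i(C)$, sum with weights, and apply the triangle inequality to the denominator. If anything, your handling of the denominator-dominated convention at vanishing $D_i(C)$ and your remark that $\varrho$ remains in the defining domain are slightly more careful than the paper's own version, which simply asserts ``since the denominator is positive.''
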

\begin{proof}
The proposition is vacuous when $C_t$ is empty, so we assume that it is not empty. Suppose $R[\varrho_1] \ge t$ and $R[\varrho_2] \ge t$, we want to prove that for all $\lambda_1,\lambda_2 \ge 0$, $\lambda_1+\lambda_2=1$ we have $R[\varrho_0] \ge t$ with $\varrho_0=\lambda_1 \varrho_1 + \lambda_2 \varrho_2$. Let $\mu_1$ and $\mu_2$ be two optimal LHS ensemble for $\varrho_1$ and $\varrho_2$, respectively. Then for $i=1,2$, we have $R[\varrho_i]= r[\varrho_i,\mu_i]$. From the definition, we have
\begin{equation}
\inf_C \frac{\int \d \mu_i (\sigma) \abs{\dprod{C}{\sigma}}}{\sqrt{2}\norm{\Tr_B[\bar{\varrho}_i (\II_A \otimes C)]}} \ge t,
\end{equation}
with $\bar{\varrho}_i= \varrho_i - \frac{\II_A}{2} \otimes \varrho_B$.
Since the denominator is positive, this is equivalent to
\begin{equation}
\int \d \mu_i (\sigma) \abs{\dprod{C}{\sigma}} \ge t  \sqrt{2}\norm{\Tr_B[\bar{\varrho}_i (\II_A \otimes C)]}
\end{equation}
for all $C$. Multiplying the two sides with $\lambda_i$ and summing over $i$, we have
\begin{equation}
\int \d \mu_0 (\sigma) \abs{\dprod{C}{\sigma}} \ge t  \sum_{i=1}^{2} \lambda_ i \sqrt{2}\norm{\Tr_B[\bar{\varrho}_i (\II_A \otimes C)]},
\end{equation}
where $\mu_0=\lambda_1 \mu_1+\lambda_2 \mu_2$. Then using the triangular inequality, we have
\begin{equation}
\sum_{i=1}^{2} \lambda_ i \sqrt{2}\norm{\Tr_A[\bar{\varrho}_i (\II_A \otimes C)]} \ge \sqrt{2}\norm{\Tr_B[\bar{\varrho}_0 (\II_A \otimes C)]},
\end{equation}
with $\varrho_0= \lambda_1 \varrho_1 + \lambda_2 \varrho_2$ and $\bar{\varrho}_0= \varrho_0 - \frac{\II_A}{2} \otimes \Tr_B(\varrho_0)$.
Therefore 
\begin{equation}
\int \d \mu_0 (\sigma) \abs{\dprod{C}{\sigma}} \ge t \sqrt{2}\norm{\Tr_A[\bar{\varrho}_0 (\II_A \otimes C)]},
\end{equation}
or 
\begin{equation}
r [\varrho_0,\mu_0] = \inf_C \frac{\int \d \mu_0 (\sigma) \abs{\dprod{C}{\sigma}}}{\sqrt{2}\norm{\Tr_A[\bar{\varrho}_0 (\II_A \otimes C)]}} \ge t.
\end{equation}
Thus $R[\varrho_0] \ge r [\varrho_0,\mu_0] \ge t$.
\end{proof}
\begin{corollary}
For two states $\varrho_1$ and $\varrho_2$, we have $\min\{R[\varrho_1],R[\varrho_2]\} \le R[\lambda \varrho_1 + (1-\lambda) \varrho_2]$ for all $0 \le \lambda \le 1$.
\label{th:semi_concavity}
\end{corollary}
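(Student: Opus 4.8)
The plan is to recognise this corollary as nothing more than the quasi-concavity statement equivalent to the convexity of the super-level sets $C_t$ already established in Proposition~\ref{th:level_set}. First I would set $t = \min\{R[\varrho_1], R[\varrho_2]\}$ and dispose of the degenerate case: since the critical radius is non-negative (it is a supremum of principal radii, each an infimum of non-negative fraction functions), whenever $t \le 0$ the inequality $t \le R[\lambda \varrho_1 + (1-\lambda)\varrho_2]$ holds trivially and nothing further is needed.

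For the main case $t > 0$, I would observe that by the very definition of $t$ we have both $R[\varrho_1] \ge t$ and $R[\varrho_2] \ge t$, so both $\varrho_1$ and $\varrho_2$ lie in the level set $C_t = \{\varrho : R[\varrho] \ge t\}$. Proposition~\ref{th:level_set} guarantees that $C_t$ is convex for every $t > 0$, hence $\lambda \varrho_1 + (1-\lambda)\varrho_2 \in C_t$ for all $0 \le \lambda \le 1$. Unwinding the definition of $C_t$ then gives $R[\lambda \varrho_1 + (1-\lambda)\varrho_2] \ge t = \min\{R[\varrho_1], R[\varrho_2]\}$, which is exactly the assertion.

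Since the entire content is packaged in the preceding proposition, there is essentially no obstacle here; the only point requiring a moment's care is the boundary case $t = 0$, where one must invoke the non-negativity of $R$ directly rather than the convexity of $C_0$ (the proposition having been stated only for strictly positive levels). It is worth remarking that the corollary is precisely the textbook equivalence that \emph{a function is quasi-concave if and only if all of its super-level sets are convex}, so that Proposition~\ref{th:level_set} already carries the full strength of the present statement and the proof is a one-line reduction.
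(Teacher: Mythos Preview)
Your proof is correct and follows essentially the same route as the paper: set $t=\min\{R[\varrho_1],R[\varrho_2]\}$, note that both states lie in $C_t$, and invoke the convexity of $C_t$ from Proposition~\ref{th:level_set}. Your additional handling of the boundary case $t=0$ via the non-negativity of $R$ is a small extra bit of care that the paper's own proof glosses over, but otherwise the arguments are identical.
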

\begin{proof}
Let $t=\min\{ R[\varrho_1],R[\varrho_2]\}$, then $\varrho_1$ and $\varrho_2$ are both in $C_t$. Therefore $\lambda \varrho_1 + (1-\lambda) \varrho_2$ is also in $C_t$ due to its convexity. It follows by definition that $R[\lambda \varrho_1 + (1-\lambda) \varrho_2]\ge t=\min\{ R[\varrho_1],R[\varrho_2]\}$.
\end{proof}
\begin{remark}
If you start to wonder: we do not expect to have $\max\{R[\varrho_1],R[\varrho_2]\} \ge R[\lambda \varrho_1 + (1-\lambda) \varrho_2]$; in particular, if $\rho_1$ and $\rho_2$ are steerable, it certainly can be the case that $\lambda \varrho_1 + (1-\lambda) \varrho_2$ is unsteerable. 
\end{remark}
As a result of these properties of $C_t$, its intersection with the set of proper states, i.e., $Q_t$, is convex and compact. In particular, the set of unsteerable proper states $Q_1$ is convex and compact.

For the following proposition, let us define $S_t= \{\varrho : R[\varrho] = t\}$.
\begin{proposition}
For any $t > 0$, $[C_t \cap \operatorname{ext} (C_t)] \subseteq S_t \subseteq \partial C_t$. Here $\operatorname{ext} (C_t)$ is the set of extreme points of $C_t$ and $\partial C_t$ is the relative boundary of $C_t$. 
\end{proposition}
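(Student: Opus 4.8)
The plan is to derive both inclusions from the scaling relation of Theorem~\ref{th:scaling}, which is essentially the only engine needed. Write $\eta = \frac{\II_A}{2}\otimes\varrho_B$ and $\varrho_\lambda = \lambda\varrho + (1-\lambda)\eta$; the scaling law reads $R[\varrho_\lambda]=R[\varrho]/\lambda$, so along the straight segment joining $\eta$ to $\varrho$ the critical radius is strictly monotone in $1/\lambda$. Since $R[\eta]=+\infty$ by Proposition~\ref{th:singularity}, whenever $R[\varrho]\ge t$ both endpoints $\varrho$ and $\eta$ lie in $C_t$, so the whole segment lies in the affine hull of $C_t$; recall also that $C_t$ is convex (Proposition~\ref{th:level_set}) and bounded for $t>0$.

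For $S_t\subseteq\partial C_t$, I would take $\varrho$ with $R[\varrho]=t$. Since $t$ is finite, $\varrho\neq\eta$ and the scaling segment is non-degenerate. For $\lambda\in(0,1)$ we get $R[\varrho_\lambda]=t/\lambda>t$, hence $\varrho_\lambda\in C_t$, whereas for $\lambda>1$ we get $R[\varrho_\lambda]=t/\lambda<t$, hence $\varrho_\lambda\notin C_t$. Letting $\lambda\to1^{+}$ produces points of the affine hull of $C_t$ arbitrarily close to $\varrho$ yet outside $C_t$, so $\varrho\notin\operatorname{relint}(C_t)$; as $\varrho\in C_t\subseteq\overline{C_t}$ this gives $\varrho\in\partial C_t$. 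Note this argument needs no closedness of $C_t$.

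For $C_t\cap\operatorname{ext}(C_t)\subseteq S_t$, I would argue contrapositively: any $\varrho\in C_t$ with $R[\varrho]>t$ fails to be extreme. When $\varrho\neq\eta$, choose $\lambda_1<1<\lambda_2$ with both $\lambda_i<R[\varrho]/t$ (possible because $R[\varrho]/t>1$); then $R[\varrho_{\lambda_i}]=R[\varrho]/\lambda_i>t$, so $\varrho_{\lambda_1},\varrho_{\lambda_2}\in C_t$ are two distinct points (the segment is non-degenerate) of which $\varrho$ is a proper convex combination, contradicting extremality. Consequently every extreme point of $C_t$ lying in $C_t$ must satisfy $R[\varrho]=t$, i.e.\ lie in $S_t$.

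The only delicate case, and what I expect to be the main obstacle, is $\varrho=\eta$, where the scaling segment collapses and $R[\eta]=+\infty>t$; here I must instead show directly that $\eta\notin\operatorname{ext}(C_t)$. The idea is that the scaling blow-up holds along every ray out of $\eta$: choosing a traceless Hermitian perturbation $W$ with $\Tr_A W=0$ and $\overline{\eta+W}\neq0$, so that $R[\eta+\delta W]$ is finite and positive for some small $\delta$ (e.g.\ a small correlated, Werner-type perturbation, which keeps $\varrho_B$ inside Bob's ball), the scaling law applied to $\eta+\delta W$ gives $R[\eta+sW]=\delta\,R[\eta+\delta W]/s\to+\infty$ as $s\to0^{+}$, and likewise along $-W$. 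Hence for small $s$ both $\eta\pm sW$ lie in $C_t$, exhibiting $\eta$ as the midpoint of two distinct points of $C_t$, so $\eta$ is not extreme. The technical point to verify carefully is precisely the existence of such a perturbation direction with positive critical radius that stays in the defining domain; it is a mild obstacle, but it is the one place where the clean homogeneity argument does not apply verbatim.
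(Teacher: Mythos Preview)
Your proof is correct and follows essentially the same route as the paper: both inclusions are driven by the scaling relation along the ray through $\eta=\frac{\II_A}{2}\otimes\varrho_B$, and the contrapositive structure is the same. The only substantive difference is in the treatment of the singular point $\eta$: the paper dispatches this case with the explicit decomposition $\eta=\tfrac12[(\tfrac{\II_A}{2}+\epsilon\sigma_z)\otimes\varrho_B]+\tfrac12[(\tfrac{\II_A}{2}-\epsilon\sigma_z)\otimes\varrho_B]$, invoking the exact product-state formula $R=1/\|\a\|=1/\epsilon$, which avoids having to argue that $R$ is strictly positive along a chosen perturbation direction. Your more abstract scaling argument at $\eta$ works too, but the positivity check you flag as a ``mild obstacle'' is exactly what the paper sidesteps by picking a direction where $R$ is known in closed form.
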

Note that we have not shown that $C_t$ is closed in the Bloch hyperplane of bipartite states. Therefore in principle $\operatorname{ext} (C_t)$ may not be a subset of $C_t$. Yet, as we mentioned, if $\operatorname{ext} (C_t) \setminus C_t$ is non-empty, it contains only spurious abnormal, non-product states, which are unphysical. 
\begin{proof}
(i) We start with showing that $[C_t \cap \operatorname{ext} (C_t)] \subseteq S_t$. Suppose $\varrho \in C_t$ but $\varrho \not\in S_t$, we show that $\varrho \not\in \operatorname{ext} (C_t)$. 

If $0 < t < R[\varrho] < +\infty$, then we let $\tilde{\varrho}= (R[\varrho]/t) \varrho + (1-R[\varrho]/t) \frac{\II_A}{2} \otimes \varrho_B$. Because $R[\tilde{\varrho}]=t$, it is in $C_t$. On the other hand, we have $\varrho= (t/R[\varrho]) \tilde{\varrho} + (1-t/R[\varrho]) \frac{\II_A}{2} \otimes \varrho_B$, which gives an explicit non-trivial convex decomposition of $\varrho$ in terms of $\tilde{\varrho}$ and $\frac{\II_A}{2} \otimes \varrho_B$, which are both in $C_t$.  Therefore $\varrho \not\in \operatorname{ext} (C_t)$. 

Now we consider the case $R[\varrho]=+\infty$. This implies that $R[\varrho]=\frac{\II_A}{2} \otimes \varrho_B$. We can then make a convex decomposition $\frac{\II_A}{2} \otimes \varrho_B=\frac{1}{2} [(\frac{\II_A}{2} + \epsilon  \sigma_z) \otimes \varrho_B] + \frac{1}{2} [(\frac{\II_A}{2} - \epsilon \sigma_z) \otimes \varrho_B]$. Each of the states in this decomposition has critical radius $1/\epsilon$ (see Section~\ref{sec:analytical_formulae_product_states}), which is larger than $t$ if $\epsilon$ is sufficiently small. Thus for sufficiently small $\epsilon$, both states are in $C_t$. Therefore also in this case $\varrho$ cannot be an extreme point of $C_t$.

(ii) Now we show that $S_t \subseteq \partial C_t$. Suppose $\varrho \not\in \partial C_t$, that is, $\varrho$ is in the relative interior of $C_t$, we show that $\varrho \not\in S_t$. By Theorem 6.4 in Ref.~\cite{Rockafellar1970a}, take $\frac{\II_A}{2} \otimes \varrho_B \in C_t$, there exists $\epsilon > 0$ such that $ (1+\epsilon) \varrho - \epsilon \frac{\II_A}{2} \otimes \varrho_B$ is in $C_t$. So $R[(1+\epsilon) \varrho - \epsilon \frac{\II_A}{2} \otimes \varrho_B]=1/(1+\epsilon) R[\varrho] \ge t$. It then follows that $R[\varrho] \ge  (1+ \epsilon )t > t$, thus $\varrho \not\in S_t$.
\end{proof}

\section{Analytic formula of the critical radius for certain states}
\label{sec:analytical_formulae}
\subsection{Product states}
\label{sec:analytical_formulae_product_states}
A product state is of the form $\varrho_A \otimes \varrho_B$. If $\varrho_B$ is pure, the state is abnormal. In this case, we however have shown in Section~\ref{sec:canonical} that its critical radius is simply $R[\varrho_A \otimes \varrho_B]=\frac{1}{\norm{\a}}$.
 
When a product state $\varrho_A \otimes \varrho_B$ is normal, one can bring it to the canonical form $\varrho_A \otimes \frac{\II_B}{2}$. Now this state is in fact $\U(2)$ invariant, where $\U(2)$ acts trivially on $\H_A$ and acts as conjugation on $\H_B$. Thus an optimal choice for LHS ensemble would be the uniform distribution. The lower bound~\eqref{eq:uniform_ansatz_bound} below is tight. Direct computation then also gives $R[\varrho_A \otimes \varrho_B]= R[\varrho_A \otimes \frac{\II_B}{2}]=\frac{1}{\norm{\a}}$. [In more details: this is nothing but the the lower bound~\eqref{eq:uniform_ansatz_bound}, which is tight as the uniform distribution is optimal; also note that the correlation matrix here vanishes so the infimum can be found easily.] 

\subsection{T-states}
In the canonical form, if $\a=0$, we have a $T$-state, also known as a Bell-diagonal state. The $T$-states form the most interesting class of normal states where an analytical formula for the critical radius has been found~\cite{Jevtic2015a,chauepl}. 
The central simplicity of $T$-state is that it carries a time-reversal symmetry on both parties. As a result, the optimal LHS ensemble can be chosen to be central symmetric on the Bloch sphere~\cite{chauepl}. Therefore, we can set $c_0=0$ and the critical radius becomes
\begin{equation}
R[(0,T)]=\max_\mu \inf_{\c} \frac{\int \d \mu ({\vv}) \abs{\c^T {\vv}}}{\norm{T\c}}.
\label{eq:c}
\end{equation} 
It can be shown that $\mu$ can be taken to be supported only on the Bloch sphere; see Section~\ref{sec:computation}. It was recognised by Jevtic and her collaborators~\cite{Jevtic2015a} that, for a $T$-state with correlation matrix $T$, the LHS ensemble generated by 
\begin{equation}
J(\n)= \frac{N_T}{[\n^T T^{-2} \n]^2},
\label{eq:jevtic}
\end{equation} 
as a distribution on the Bloch sphere with
\begin{equation}
N_T^{-1}= \int \d S(\n) \frac{1}{[\n^T T^{-2} \n]^2},
\label{eq:nt_tstates}
\end{equation} 
has some rather special property. Namely, the boundary of the simulated states exactly resembles the so-called steering ellipsoid~\cite{jevtic2014,Jevtic2015a}. This leads to the conjecture that the LHS ensemble is optimal for Alice to simulate steering on Bob's system, which was later proven in Ref.~\cite{chauepl}. 

Translated into our current language, for the distribution~\eqref{eq:jevtic}, the fraction function is in fact independent of $\c$,
\begin{equation}
\frac{\int \d S (\n) J(\n) \abs{\c^T \n}}{\norm{T\c}}=2 \pi N_T \abs{\det(T)}.
\label{eq:anderson_integral}
\end{equation} 
It was then proven that any deviation from $J(\n)$ leads to a decrease in the principal radius~\cite{chauepl}. 
This gives rise to an analytical formula for the critical radius of $T$-states as
\begin{equation}
R[(0,T)]= 2 \pi N_T \abs{\det(T)}.
\label{eq:t_states}
\end{equation} 
For the case where the correlation matrix $T$ has axial symmetry, e.g., $T=\operatorname{diag}(s,s,t)$, $R$ can be given in a closed form,
\begin{equation}
R[(0,T)]= \frac{1}{\abs{t}} \frac{1}{1+(1+x^2) \frac{\operatorname{artg} (x)}{x}}, 
\end{equation} 
with $x=\sqrt{s^2/t^2-1}$, which can take purely imaginary values when $\abs{s/t}<1$.
 

\begin{remark}
In Ref.~\cite{Jevtic2015a}, the integral of the form~\eqref{eq:anderson_integral} was performed using direct computation in coordinates. Here we give a coordinate-independent computation of the integral. 
This is done by relaxing the dimension of the integral. Namely, we consider the integral,
\begin{equation}
I= \int \d V(\r) \abs{\c^T \r} e^{-\r^T T^{-2} \r},
\label{eq:dimensional_relaxation}
\end{equation}
which is taken with respect to the volume measure $V$ over the whole $3$D space of $\r$. The relation to the integral~\eqref{eq:anderson_integral} can be realised by separating the integral over the radials $r$ and the unit vector directions $\n$, $\r = r \n$, namely 
\begin{align}
I &= \int \d S(\n) \int_0^{\infty} \d r r^3  \abs{\c^T \n} e^{-r^2 \n^T T^{-2} \n} \nonumber \\
&= \int \d S(\n) \frac{\abs{\c^T \n}}{[\n^T T^{-2} \n]^2} \int_0^{\infty} \d x x^3 e^{-x^2} \nonumber \\
&= \frac{1}{2} \int \d S(\n) \frac{\abs{\c^T\n}}{[\n^T T^{-2} \n]^2}.
\label{eq:radial_separation}
\end{align}

To perform the integral~\eqref{eq:dimensional_relaxation}, we make a variable transformation $\r=T\tilde{\r}$. This gives
\begin{equation}
I= \abs{\det(T)} \norm{T\c} \int \d V(\tilde{\r}) \abs{\tilde{\c}^T \tilde{\r}} e^{-\tilde{r}^2},
\end{equation} 
where $\tilde{\c}=T\c/\norm{T\c}$ is a unit vector. The latter integral can be performed directly in spherical coordinates with the $z$-axis along $\tilde{\c}$,
\begin{align}
\int \d V(\tilde{\r}) \abs{\tilde{\c}^T \tilde{\r}} e^{-\tilde{r}^2} &= \int_0^{2 \pi} \d \phi \int_0^{\pi} \d \theta \sin \theta \abs{\cos \theta} \int_{0}^{+\infty} \d r r^3 e^{-r^2} \nonumber \\
&= \pi.
\end{align}
Thus $I= \pi \abs{\det(T)} \norm{T\c}$, which, by~\eqref{eq:radial_separation} leads to
\begin{equation}
\int \d S(\n) \frac{\abs{\c \cdot \n}}{[\n^T T^{-2} \n]^2}= 2 \pi \abs{\det(T)} \norm{T\c}.
\end{equation}
This in turn directly leads to~\eqref{eq:anderson_integral}. 
\end{remark}
\subsection{Some analytical bounds for the critical radius}

\begin{theorem}
For a non-degenerate canonical state, we have
\begin{equation}
2 \pi N_T \abs{\det(T)} \ge R[\varrho] \ge \frac{2 \pi N_T \abs{\det(T)}}{1+ \norm{T^{-1}\a}},
\label{eq:analytical_bounds}
\end{equation}
where $N_T^{-1}= \int \d S (\n) [\n^T T^{-2} \n]^{-2}$.
\label{th:analytical_bounds}
\end{theorem}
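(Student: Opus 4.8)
The plan is to handle the two inequalities separately: the upper bound falls out of earlier structural results, while the genuine work is in the lower bound, for which the Jevtic ensemble is the natural ansatz. Throughout I write $K:=2\pi N_T\abs{\det(T)}$ and $m:=\norm{T^{-1}\a}$, so the claim reads $K\ge R[\varrho]\ge K/(1+m)$; note $K=R[(0,T)]$ is precisely the $T$-state value of equation~\eqref{eq:t_states}, and for $\varrho=(\a,T)$ non-degenerate one has $\det(T)\ne0$, so $(0,T)$ is a non-degenerate $T$-state to which that formula applies.

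For the upper bound I would combine the time-reversal symmetry $R[(\a,T)]=R[(-\a,T)]$ (Theorem~\ref{th:time_reversal}, in canonical form) with the convexity of the level sets (Corollary~\ref{th:semi_concavity}). Taking $\varrho_1=(\a,T)$ and $\varrho_2=(-\a,T)$, their equal-weight combination is exactly the $T$-state $(0,T)$, whence $R[(\a,T)]=\min\{R[\varrho_1],R[\varrho_2]\}\le R[(0,T)]=K$. For the lower bound I would feed the Jevtic ensemble $\mu=J$ of equation~\eqref{eq:jevtic} into the formula for $r[\varrho,\mu]$. Since $J(\n)=J(-\n)$ is central-symmetric, $\int\d S(\n)J(\n)\n=0$, so the minimal requirement $\int\d\mu(\vv)\vv=0$ for canonical states holds; it then suffices to prove, for every $C=(c_0,\c)$,
\[
(1+m)\int\d S(\n)\,J(\n)\,\abs{c_0+\c^T\n}\ge K\norm{c_0\a+T\c}.
\]
Symmetrising the integrand via $\int J\abs{c_0+\c^T\n}=\tfrac12\int J(\abs{c_0+\c^T\n}+\abs{c_0-\c^T\n})$ and using $\abs{u+v}+\abs{u-v}=2\max(\abs{u},\abs{v})$, I extract the two numerator estimates
\[
\int\d S(\n)\,J(\n)\,\abs{c_0+\c^T\n}\ge\int\d S(\n)\,J(\n)\,\abs{\c^T\n}=K\norm{T\c},\qquad \int\d S(\n)\,J(\n)\,\abs{c_0+\c^T\n}\ge\abs{c_0},
\]
where the first equality is the Jevtic identity~\eqref{eq:anderson_integral} and the second uses $\int\d S(\n)J(\n)=1$.

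The crux is the comparison $m\ge K\norm{\a}$. I would obtain it by applying the Jevtic identity~\eqref{eq:anderson_integral} with test vector $\pmb{w}=T^{-1}\a$, giving $\int\d S(\n)J(\n)\abs{(T^{-1}\a)^T\n}=K\norm{\a}$, and bounding the integrand pointwise by Cauchy--Schwarz, $\abs{(T^{-1}\a)^T\n}\le\norm{T^{-1}\a}\norm{\n}=m$, so that $K\norm{\a}\le m$. With this, the target follows by splitting $(1+m)\int J\abs{c_0+\c^T\n}$ into $1\cdot(\,\cdot\,)+m\cdot(\,\cdot\,)$, applying the first estimate to the first term and the second to the second, and closing with the triangle inequality:
\[
(1+m)\int\d S(\n)\,J(\n)\,\abs{c_0+\c^T\n}\ge K\norm{T\c}+m\abs{c_0}\ge K\norm{T\c}+K\abs{c_0}\norm{\a}\ge K\norm{c_0\a+T\c}.
\]
Taking the infimum over $C$ yields $r[\varrho,J]\ge K/(1+m)$, hence $R[\varrho]\ge K/(1+m)$.

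The main obstacle is finding the right way to combine the two numerator estimates: the naive route that keeps only $\int J\abs{c_0+\c^T\n}\ge K\norm{T\c}$ discards the $c_0$-channel and fails for $\c$ aligned with the smallest singular direction of $T$. The decisive point is that the factor $m$ multiplying the numerator is exactly what absorbs the unwanted $\abs{c_0}\norm{\a}$ term once $m\ge K\norm{\a}$ is available, and that this last inequality is itself a one-line consequence of the Jevtic identity evaluated at $T^{-1}\a$.
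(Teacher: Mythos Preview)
Your proof is correct, but both halves follow a genuinely different route from the paper's.

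For the upper bound, the paper simply restricts the infimum in the fraction function to $c_0=0$: since $\varrho=(\a,T)$ has denominator $\norm{c_0\a+T\c}$ which collapses to $\norm{T\c}$ at $c_0=0$, one gets $r[(\a,T),\mu]\le\inf_{\c}\int\d\mu(\vv)\abs{\c^T\vv}/\norm{T\c}$ for every $\mu$, and maximising over $\mu$ recovers $R[(0,T)]$ via equation~\eqref{eq:c}. Your argument via $R[(\a,T)]=R[(-\a,T)]$ and Corollary~\ref{th:semi_concavity} is heavier but has the virtue of giving an independent derivation that does not touch the fraction function at all; in fact it furnishes a circular-free proof of the inequality $R[(\a,T)]\le R[(0,T)]$ that the paper later \emph{uses} (in Corollary~\ref{th:depolarising_alice}) rather than re-proves.

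For the lower bound, the paper never fixes an ansatz: it works at the level of the \emph{denominator}, establishing the uniform estimate $\norm{c_0\a+T\c}\le(1+\norm{T^{-1}\a})\norm{T\c}$ for all $(c_0,\c)$ satisfying the canonical constraints $\norm{\c}=1$, $\abs{c_0}\le1$ (Lemma~\ref{lem:cononical_constraint}), and then passes the whole fraction to the $T$-state problem before maximising over $\mu$. Your argument instead fixes $\mu=J$ from the outset and works on the \emph{numerator}, extracting the two bounds $\int J\abs{c_0+\c^T\n}\ge K\norm{T\c}$ and $\ge\abs{c_0}$ by symmetrisation and then closing via the auxiliary inequality $m\ge K\norm{\a}$. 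The paper's route is shorter and explains \emph{why} the factor $1+\norm{T^{-1}\a}$ appears (it is the smallest $\lambda$ with $\norm{c_0\a+T\c}\le\lambda\norm{T\c}$ on the canonical domain). Your route avoids any appeal to the canonical constraints and yields, as a pleasant by-product, the clean inequality $\norm{T^{-1}\a}\ge 2\pi N_T\abs{\det(T)}\,\norm{\a}$, which is not isolated in the paper.
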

Note that when we set $\a=0$, the state becomes a $T$-state and the lower bound and upper bound meet at $2 \pi N_T \abs{\det(T)}$, recovering the formula for the critical radius for $T$-states.
\begin{proof}
The upper bound is actually obvious, since limiting the domain of infimum by setting $c_0=0$ always increases the infimum. We therefore only need to prove the lower bound. 

To find the lower bound, we find a minimal factor $\lambda$ such that $\norm{c_0 \a +  T \c} \le \lambda \norm{ T \c}$ for all $\norm{\c}=1$ and $-1 \le c_0 \le +1$. It is easy to show that $\lambda=1+ \norm{T^{-1} \a}$ should work.  
This can be seen as follows. To show that $\norm{c_0 \a +  T \c} \le \lambda \norm{ T \c}$, we show that $\{c_0 \a +  T \c: -1 \le c_0 \le +1, \norm{\c}=1\} \subseteq \{\lambda T\c: \norm{\c} \le 1\}$. By applying $T^{-1}$ to both sets, the latter is equivalent to $\{c_0 T^{-1} \a + \c: -1 \le c_0 \le +1, \norm{\c}=1\} \subseteq \{\lambda \c: \norm{\c} \le 1\}$. This is the case if $\lambda \ge \max \{ \norm{c_0 T^{-1} \a + \c}: -1 \le c_0 \le +1, \norm{\c}=1\}=1+\norm{T^{-1} \a}$.  

We therefore see that 
\begin{equation}
r[\varrho,\mu]= \frac{\int \d \mu ({\vv})  |c_0 + \c^T {\vv}|}{\norm{c_0 \a +  T \c}} \ge \frac{\int \d \mu ({\vv})  |c_0 + \c^T {\vv}|}{(1+ \norm{T^{-1} \a}) \norm{T \c}},
\end{equation}
for all $\mu$, $\norm{\c}=1$ and $-1 \le c_0 \le +1$. So, when taking the infimum over $c_0$ and $\c$ and the maximum over $\mu$, we obtain 
\begin{equation}
R[\varrho] \ge \frac{2 \pi N_T \abs{\det(T)}}{1+ \norm{T^{-1}\a}},
\end{equation}
where the left-hand-side is obtained using the solution of the critical radius for $T$-states. 
\end{proof}



\begin{corollary}
For a state in the canonical form $\varrho=(\a,T)$, we have $R[(\a,T)] \le R[(p\a,T)]$ for all $0 \le p \le 1$. 
\label{th:depolarising_alice}
\end{corollary}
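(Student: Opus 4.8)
The plan is to realise $(p\a,T)$ as a symmetric convex combination of $(\a,T)$ and its reflection $(-\a,T)$, and then to invoke the fact that $R$ cannot drop below the smaller of its two endpoint values along a segment (Corollary~\ref{th:semi_concavity}). The decisive ingredient is the time-reversal symmetry of canonical states recorded just after Theorem~\ref{th:time_reversal}, namely $R[(\a,T)]=R[(-\a,T)]$. This places both endpoints at the same level, so every point of the segment joining them inherits a critical radius at least $R[(\a,T)]$.

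First I would use the fact that the Bloch-tensor parametrisation is affine, so a convex combination of canonical states is realised by the same convex combination of their Bloch data, with the $T$-block unchanged. Concretely, choosing the weight $\lambda=\frac{1+p}{2}$ gives
\begin{equation}
\lambda\,(\a,T)+(1-\lambda)\,(-\a,T)=\bigl((2\lambda-1)\a,\,T\bigr)=(p\a,T),
\end{equation}
since $2\lambda-1=p$. For $p\in[0,1]$ we have $\lambda\in[\frac12,1]\subseteq[0,1]$, so this is a genuine convex combination. All three states have $\b=0$, hence Bob's reduced state is the maximally mixed one, which lies at the centre of Bob's Bloch ball; consequently all three states are in the defining domain of the critical radius and $R$ is defined on each.

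Next I would apply Corollary~\ref{th:semi_concavity} to $\varrho_1=(\a,T)$ and $\varrho_2=(-\a,T)$ with weight $\lambda=\frac{1+p}{2}$, yielding
\begin{equation}
\min\{R[(\a,T)],\,R[(-\a,T)]\}\le R[(p\a,T)].
\end{equation}
By the time-reversal identity $R[(\a,T)]=R[(-\a,T)]$ the left-hand side equals $R[(\a,T)]$, and the claim $R[(\a,T)]\le R[(p\a,T)]$ follows immediately.

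There is essentially no hard step here: the entire argument reduces to recognising that $(p\a,T)$ is the image of a symmetric convex combination of $(\a,T)$ and its time-reverse. The only points requiring a moment's care are (i) verifying the convex-combination bookkeeping in Bloch coordinates, i.e. that $\lambda=\frac{1+p}{2}\in[0,1]$ for $p\in[0,1]$, and (ii) noting that Corollary~\ref{th:semi_concavity}—which rests on convexity of the level sets $C_t$ (Proposition~\ref{th:level_set}) over the full defining domain—applies even if $(-\a,T)$ is not a proper state, so positivity of the auxiliary endpoint never needs to be checked.
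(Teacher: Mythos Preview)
Your proof is correct, but it proceeds via a different convex decomposition than the paper's. The paper writes $(p\a,T)=p(\a,T)+(1-p)(0,T)$ and then applies Corollary~\ref{th:semi_concavity} together with the inequality $R[(\a,T)]\le R[(0,T)]$, which follows from the upper bound in Theorem~\ref{th:analytical_bounds} (the explicit $T$-state critical radius $2\pi N_T|\det T|$). You instead write $(p\a,T)=\tfrac{1+p}{2}(\a,T)+\tfrac{1-p}{2}(-\a,T)$ and use the time-reversal identity $R[(\a,T)]=R[(-\a,T)]$ to make the two endpoint values coincide.

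Both routes rest on Corollary~\ref{th:semi_concavity}; the difference lies in the auxiliary endpoint and the ingredient that controls its critical radius. Your argument is slightly more elementary: it bypasses the analytical $T$-state formula and the non-degeneracy assumption built into Theorem~\ref{th:analytical_bounds}, relying only on the time-reversal symmetry of $R$ (Theorem~\ref{th:time_reversal}) and the convexity of level sets. The paper's version, on the other hand, ties the result more visibly to the hierarchy $R[(\a,T)]\le R[(0,T)]$, which is of independent interest. Either way the logic is sound, and your observation about the defining domain (so that $(-\a,T)$ need not be a proper state) is exactly the right check.
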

In other word, depolarising Alice's state keeping the bipartite correlations intact decrease steerability. 
\begin{proof}
Note that $(p\a,T)= p(\a,T)+(1-p) (0,T)$. According to Corollary~\ref{th:semi_concavity}, we have $R[(\a,T)]=\min\{R[(\a,T)],R[(0,T)]\} \le R[(p\a,T)]$.
\end{proof}

Despite the fact the lower bound in equation~\eqref{eq:analytical_bounds} is tight for $T$-states, it is often far from tight when $\a \ne 0$. Although we can improve the lower bound, it is perhaps only of theoretical interest. For the practical purpose, the lower bound discussed below is often better. 

\begin{theorem}
For a state given in the canonical form,
\begin{equation}
R[\varrho] \ge \frac{1}{2} \inf_{c_0,\c}  \frac{1+c_0^2}{\norm{c_0 \a + T \c}},
\label{eq:uniform_ansatz_bound}
\end{equation}
subject to the constraint $-1 \le c_0 \le +1$ and $\abs{\c}=1$.
\end{theorem}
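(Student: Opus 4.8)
The plan is to obtain the inequality by lower-bounding $R[\varrho]=\max_\mu r[\varrho,\mu]$ through evaluation at a single, explicitly chosen LHS ensemble, since any admissible $\mu$ gives $R[\varrho]\ge r[\varrho,\mu]$. The natural choice---and the reason for the name \emph{uniform ansatz bound}---is the rotationally invariant probability measure $\mu_{\mathrm{u}}$ supported on Bob's Bloch sphere $\S_B$. First I would confirm that $\mu_{\mathrm{u}}$ is admissible: for a canonical state the minimal requirement reduces to $\int \d\mu(\vv)\,\vv=\b=0$, and $\mu_{\mathrm{u}}$ has vanishing first moment by symmetry, so it is a legitimate ensemble on the Bloch ball $\B_B$.

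The heart of the argument is the evaluation of the numerator of the fraction function~\eqref{eq:fraction_function_coordinate} at $\mu_{\mathrm{u}}$. By Lemma~\ref{lem:cononical_constraint} the infimum defining $r[\varrho,\mu_{\mathrm{u}}]$ can be restricted to $\norm{\c}=1$ and $-1\le c_0\le +1$, which is exactly the regime in which the spherical average is elementary. For a unit vector $\c$, the scalar $\c^T\vv$ is the projection of a uniformly random unit vector onto a fixed axis; by the area-preserving (Archimedes hat-box) property of the $2$-sphere, this projection is uniformly distributed on $[-1,1]$. Hence
\begin{equation}
\int \d\mu_{\mathrm{u}}(\vv)\,\abs{c_0+\c^T\vv}=\frac12\int_{-1}^{1}\abs{c_0+u}\,\d u=\frac{1+c_0^2}{2},
\end{equation}
where the final equality uses $-1\le c_0\le +1$: the integrand changes sign at $u=-c_0$, and integrating the two linear pieces gives $(1-c_0)^2/2$ and $(1+c_0)^2/2$, which sum to $1+c_0^2$.

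Substituting this value into~\eqref{eq:fraction_function_coordinate} and taking the infimum over the canonical domain yields $r[\varrho,\mu_{\mathrm{u}}]=\tfrac12\inf_{c_0,\c}(1+c_0^2)/\norm{c_0\a+T\c}$ with $\norm{\c}=1$ and $-1\le c_0\le +1$, whence $R[\varrho]\ge r[\varrho,\mu_{\mathrm{u}}]$ gives~\eqref{eq:uniform_ansatz_bound}. The only delicate point---hardly an obstacle once noticed---is the matching of domains: the clean value $(1+c_0^2)/2$ holds only for $\abs{c_0}\le 1$, since for $\abs{c_0}>1$ the quantity $c_0+\c^T\vv$ keeps a constant sign over the sphere and the average is instead $\abs{c_0}$. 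It is precisely Lemma~\ref{lem:cononical_constraint} that licenses restricting the infimum to $\abs{c_0}\le1$, so nothing is lost by computing in this regime, and everything else reduces to the routine spherical integral above.
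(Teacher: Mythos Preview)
Your proof is correct and follows essentially the same approach as the paper's: choose the uniform distribution on the Bloch sphere as an LHS ansatz, evaluate the spherical integral $\int \d\mu_{\mathrm{u}}(\vv)\,\abs{c_0+\c^T\vv}=(1+c_0^2)/2$, and conclude via $R[\varrho]\ge r[\varrho,\mu_{\mathrm{u}}]$. You supply more detail than the paper does---the verification of the minimal requirement, the Archimedes hat-box argument for the integral, and the explicit invocation of Lemma~\ref{lem:cononical_constraint} to justify the domain $\abs{c_0}\le 1$---but the strategy is identical.
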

\begin{proof}
By using any measure that satisfies the minimal requirement as an ansatz for the LHS ensemble, we obtain a lower bound for the critical radius. If we choose the uniform distribution supported on the Bloch sphere as the ansatz, then we can evaluate the numerator $\frac{1}{4 \pi}\int \d S(\n) \abs{c_0 + \c^T \n}= \frac{1+c_0^2}{2}$ exactly. Using this result, we obtain~\eqref{eq:uniform_ansatz_bound}.  
\end{proof}

The uniform distribution on the Bloch sphere has been used as an ansatz to prove unsteerability of two-qubit states~\cite{Bowles2016a,chaupra}. Here we used it to get a quantitative bound for the critical radius. 
\section{Computation of the critical radius} 
\label{sec:computation}
\subsection{Bringing the state to the canonical form}
If the state is abnormal, we can compute the critical radius directly via formula~\eqref{eq:critical_non_normal}. If the state is normal, the very first step is to bring it to the canonical form. 

\new{This can be done using the following procedure. Starting with a state $\varrho$, one obtains $\varrho_1= \I_A \otimes V_B \varrho \I_A \otimes V_B/\Tr(\I_A \otimes V_B \varrho \I_A \otimes V_B)$ with $V_B=\sqrt{\varrho_B^{-1}}$. One then derives the Bloch tensor $\Theta_1$ for $\varrho_1$,
\begin{equation}
\Theta_1 = 
\begin{pmatrix}
1 & \pmb{0}^T \\
\a_1 & T_1
\end{pmatrix}.
\end{equation}
Now note that local unitary transformations are implemented by local rotations of the Bloch tensor. Utilising the invariance of the critical radius under local time reversals, we can also extend from the local rotations of the Bloch tensor to the general local orthogonal transformations, including the improper rotations. To this end, we find the singular value decomposition of $T_1$ as $T_1= L_1 \operatorname{diag} (\s) R_1$, where $L_1$ and $R_1$ are orthogonal matrices and $\s$ are singular values of $T_1$. We then apply local rotations $L^T_1$ and $R^T_1$ on $\Theta_1$ to obtain $\Theta_2$,
\begin{align}
\Theta_2 &= 
\begin{pmatrix}
1 & \pmb{0}^T \\
\pmb{0} & L^T_1 
\end{pmatrix}
\begin{pmatrix}
1 & \pmb{0}^T \\
\a_1 & T_1
\end{pmatrix}
\begin{pmatrix}
1 & \pmb{0}^T \\
\pmb{0} & R^T_1
\end{pmatrix} \nonumber \\
&= 
\begin{pmatrix}
1 & \pmb{0}^T \\
L^T_1 \a_1 & \operatorname{diag}(\s)
\end{pmatrix}.
\end{align}
This is the Bloch tensor representation of the canonical form.}

In the following, states are assumed to be non-degenerate and in the canonical form. These would include all steerable states. \new{Although degenerate states are separable, and thus unsteerable, later we will also remark how one can compute the critical radii for degenerate states for completeness.}

\subsection{Sandwiching the Bloch sphere between two polytopes}
We would like to approximate the Bloch sphere by a discrete set of points in order to carry out the computation. Note that the concepts of principal radius and critical radius apply naturally when $\mu$ is a probability measure on some arbitrary compact set $\S$, provided its convex hull contains Bob's reduced state (which is the center of the Bloch sphere, since the bipartite state is in the canonical form). The latter requirement is to make sure that the minimal requirement does not result in an empty set of measures. Indeed, for a compact subset $\S$ of the Bloch hyperplane, for which the convex hull contains the center of the Bloch sphere and a probability measure $\mu$ on $\S$ satisfying the minimal requirement, $\int_{\S} \d \mu (\sigma) \sigma = \frac{\II_B}{2}$, we can naturally define the fraction function
\begin{equation}
F^{\S}[\varrho,\mu,C]= \frac{\int_{\S} \d \mu (\sigma)|\dprod{C}{\sigma}|}{\norm{\Tr_B[\bar{\varrho} \II_A \otimes C]}},
\end{equation}
where $\bar{\varrho}= \varrho - \frac{\II_A}{2} \otimes \varrho_B$. 
The principal radius is defined by
\begin{equation}
r^{\S}[\varrho,\mu]= \inf_{C} F^{\S}[\varrho,\mu,C].
\label{eq:relative_r}
\end{equation}
It is again possible to show that $r^{\S}[\varrho,\mu]$ is upper-semicontinuous in $\mu$. We then define the critical radius to be
\begin{equation}
R^{\S}[\varrho]= \max_\mu r^{\S} [\varrho,\mu],
\label{eq:critical_r_polytope}
\end{equation}
where $\mu$ are Borel measures on $\S$ subjected to the minimal requirement. 

The following theorem then allows us to compare the critical radius defined on nesting convex sets.
\begin{theorem}
In the Bloch hyperplane, suppose a compact set $\S_1$ is contained in the convex hull of a compact set $\S_2$, then for a non-degenerate canonical state $\varrho$, we have $R^{\S_1}[\varrho] \le R^{\S_2}[\varrho]$.
\label{th:nested_bound}
\end{theorem}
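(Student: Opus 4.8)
The plan is to reduce the inequality to a statement about barycentric (``smearing'') measures, exploiting the crucial fact that the denominator of the fraction function $F^{\S}[\varrho,\mu,C]$ depends only on $\varrho$ and $C$, and \emph{not} on the measure $\mu$. First I would invoke the upper-semicontinuity of $r^{\S_1}[\varrho,\cdot]$ together with the weak compactness of the set of Borel probability measures on $\S_1$ obeying the minimal requirement to fix an optimal ensemble $\mu_1^{\ast}$ with $R^{\S_1}[\varrho]=r^{\S_1}[\varrho,\mu_1^{\ast}]$. The goal is then to build a probability measure $\mu_2$ on $\S_2$, again obeying the minimal requirement, for which $r^{\S_2}[\varrho,\mu_2]\ge r^{\S_1}[\varrho,\mu_1^{\ast}]$; since $R^{\S_2}[\varrho]=\max_\mu r^{\S_2}[\varrho,\mu]$, this at once yields the claim.

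The construction of $\mu_2$ uses the hypothesis $\S_1\subseteq\operatorname{conv}(\S_2)$. Each point $\sigma\in\S_1$ is the barycentre of some probability measure $\nu_\sigma$ on the compact set $\S_2$, i.e.\ $\int_{\S_2}\tau\,\d\nu_\sigma(\tau)=\sigma$. I would then set $\mu_2=\int_{\S_1}\nu_\sigma\,\d\mu_1^{\ast}(\sigma)$, meaning $\int_{\S_2} f\,\d\mu_2=\int_{\S_1}\big(\int_{\S_2} f\,\d\nu_\sigma\big)\d\mu_1^{\ast}(\sigma)$ for continuous $f$. Taking $f(\tau)=\tau$ and using the minimal requirement for $\mu_1^{\ast}$ shows immediately that $\int_{\S_2}\tau\,\d\mu_2(\tau)=\int_{\S_1}\sigma\,\d\mu_1^{\ast}(\sigma)=\varrho_B$, so that $\mu_2$ is a legitimate ensemble on $\S_2$.

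The comparison of the principal radii is then a one-line application of Jensen's inequality to the convex map $\tau\mapsto|\dprod{C}{\tau}|$. For every fixed $C$,
\begin{equation}
\int_{\S_2}|\dprod{C}{\tau}|\,\d\nu_\sigma(\tau)\ge\Big|\dprod{C}{\textstyle\int_{\S_2}\tau\,\d\nu_\sigma(\tau)}\Big|=|\dprod{C}{\sigma}|,
\end{equation}
and integrating against $\mu_1^{\ast}$ gives $\int_{\S_2}|\dprod{C}{\tau}|\,\d\mu_2\ge\int_{\S_1}|\dprod{C}{\sigma}|\,\d\mu_1^{\ast}$. Because the denominator $\norm{\Tr_B[\bar\varrho(\II_A\otimes C)]}$ is untouched, this reads $F^{\S_2}[\varrho,\mu_2,C]\ge F^{\S_1}[\varrho,\mu_1^{\ast},C]$ for every $C$; taking the infimum over $C$ preserves the inequality, so $r^{\S_2}[\varrho,\mu_2]\ge r^{\S_1}[\varrho,\mu_1^{\ast}]=R^{\S_1}[\varrho]$, whence $R^{\S_2}[\varrho]\ge R^{\S_1}[\varrho]$.

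The step that requires genuine care, and which I expect to be the main obstacle, is making the family $\sigma\mapsto\nu_\sigma$ Borel measurable so that $\mu_2$ is a well-defined Borel measure and the interchange of integrations above is justified. I would handle this by a measurable-selection argument: the barycentre map $\nu\mapsto\int\tau\,\d\nu$ is continuous and affine from the weakly compact, metrizable space of probability measures on $\S_2$ onto $\operatorname{conv}(\S_2)\supseteq\S_1$, with closed fibres, so the Kuratowski--Ryll-Nardzewski theorem supplies a Borel right inverse $\sigma\mapsto\nu_\sigma$. In the two-qubit setting the ambient Bloch hyperplane is three-dimensional, so one may alternatively invoke Carath\'eodory's theorem to take $\nu_\sigma$ supported on at most four points and select the vertices and weights measurably. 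The remaining verifications, that $\mu_2$ is a probability measure and that Tonelli's theorem applies to the nonnegative integrand, are then routine.
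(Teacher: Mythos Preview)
Your argument is correct, and the strategy---push $\mu_1^{\ast}$ forward to $\S_2$ through a barycentric disintegration and then compare numerators via Jensen's inequality for the convex map $\tau\mapsto|\dprod{C}{\tau}|$---is clean. The measurable-selection issue you flag is indeed the only delicate point, and either of your two suggested fixes (Kuratowski--Ryll-Nardzewski on the continuous barycentre map, or a measurable Carath\'eodory selector in the finite-dimensional hyperplane) is adequate.

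The paper takes a different route that trades measure-theoretic machinery for a density argument. Rather than handling a general optimal $\mu_1^{\ast}$, it restricts to \emph{finitely supported} measures $\mu$ on $\S_1$, applies Carath\'eodory's theorem pointwise (so no measurability issues arise), and then proves the geometric inclusion $\K(\mu)\subseteq\K(\nu)$ of capacities, from which $r^{\S_1}[\varrho,\mu]\le r^{\S_2}[\varrho,\nu]$ follows by the interpretation of the principal radius as an inscribed-radius. The passage from the dense set of finitely supported measures to all measures then uses the \emph{continuity} (not just upper-semicontinuity) of $r^{\S_1}[\varrho,\cdot]$, which is exactly where the non-degeneracy hypothesis enters. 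Your approach, by contrast, only needs upper-semicontinuity to secure the existence of $\mu_1^{\ast}$, so the non-degeneracy assumption is not actually used in your argument; in that sense your proof is slightly more general, at the cost of invoking a selection theorem. The paper's capacity-inclusion step is also a bit more robust conceptually (it would survive replacing $|\dprod{C}{\sigma}|$ by other support functionals), whereas your Jensen step exploits the specific convexity of the absolute value.
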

\begin{proof}
For non-degenerate canonical states, $r^{\S_1}[\varrho,\mu]$ is continuous in $\mu$. This is obtained by adapting the proof of Proposition~\ref{th:principal_lower_semicontinuity}.
Our strategy is to show that $r^{\S_1} [\varrho,\mu] \le R^{\S_2}[\varrho]$ on the set of finitely-supported probability measures, which is dense in the set of all Borel probabilistic measures~\cite{Parthasarathy1967a}. In fact we show that, for all finitely-supported measures $\mu$ on $\S_1$ satisfying the minimal requirement constraint, there exists a measure $\nu$ satisfying the minimal requirement on $\S_2$ such that $r^{\S_1} [\varrho,\mu] \le r^{\S_2}[\varrho,\nu]$. The latter is established if we can show that $\K(\mu) \subseteq \K(\nu)$.

Indeed, suppose the measure $\mu$ on $\S_1$ is characterised by discrete weights $\{u_i\}_{i=1}^{N}$ at discrete $3$D vectors $\{\pmb{t}_i\}_{i=1}^N$ on the Bloch hyperplane. 
Because $\pmb{t}_i$ is in the convex hull of $\S_2$, there exists a convex decomposition of each $\pmb{t}_i$ into finite $M_i$ points $\{\r_j\}_{j=1}^{M_i}$ of $\S_2$ (Caratheodory's principle),
\begin{equation}
\pmb{t}_i = \sum_{j=1}^{M_i} q^{i}_{j} \r_j^i, 
\end{equation}
where $q^i_j \ge 0$ and $\sum_{j=1}^{M_i} q^i_j=1$. So far we ignore the zeroth coordinate of the Bloch vectors in the full operator space, which are simply $1$. Taken this zeroth coordinate into account, we can write
\begin{equation}
\begin{pmatrix}1 \\ \pmb{t}_i \end{pmatrix} = \sum_{j=1}^{M_i} q^{i}_{j} \begin{pmatrix} 1 \\\r_j^i \end{pmatrix}.
\end{equation}
The set $\cup_{i=1}^{N} \{\r_j^i\}_{j=1}^{M_i}$ thus contains at most finite number of elements, and is denoted by $\{\r_k\}_{k=1}^M$. The convex decomposition above can be extended to run over all $M$ vectors, with coefficient $q^{i}_k$ set to zero when not defined so that we can write
\begin{equation}
\begin{pmatrix}1 \\ \pmb{t}_i \end{pmatrix} = \sum_{k=1}^{M} q^{i}_{k} \begin{pmatrix} 1 \\\r_k \end{pmatrix},
\label{eq:t_decomposition}
\end{equation}
with $\sum_{k=1}^{M}q^i_k=1$.

Then we define the weights $v_k$ at $\r_k$ by
\begin{equation}
v_k= \sum_{i=1}^N q^i_k u_i. 
\end{equation}

We claim that these weights $\{v_k\}_{k=1}^{M}$ define a discrete measure $\nu$ on $\S_2$ that has the desired properties. 

Indeed, for the minimal requirement, it is easy to see that 
\begin{align}
\sum_{k=1}^M v_k \r_k &= \sum_{k=1}^M \sum_{i=1}^N q^i_k u_i \r_k \\
&= \sum_{i=1}^M u_i \pmb{t}_i.
\end{align}

To show that $\K(\mu) \subseteq \K(\nu)$, we pick up an element $K$ of $\K(\mu)$ and show that $K \in \K(\nu)$. By the definition of $\K(\mu)$, there exist coefficients $\{g_i\}_{i=1}^N$, $0 \le g_i \le 1$, such that
\begin{equation}
K= \sum_{i=1}^N g_i u_i \begin{pmatrix} 1 \\ \pmb{t}_i \end{pmatrix}.
\end{equation}
Therefore, using~\eqref{eq:t_decomposition}, 
\begin{align}
K &= \sum_{i=1}^N \sum_{k=1}^{M} g_i u_i q^i_k  \begin{pmatrix} 1 \\ \r_k \end{pmatrix} \\
&= \sum_{k=1}^{M} \sum_{i=1}^N g_i u_i q^i_k  \begin{pmatrix} 1 \\ \r_k \end{pmatrix}
\end{align}
Let us fix $k$. Because $0 \le g_i\le 1$, (due to the mean value theorem in the discrete form) there exist $0 \le f_k \le 1$ such that
\begin{equation}
\sum_{i=1}^N g_i u_i q^i_k = f_k \sum_{i=1}^N u_i q^i_k. 
\end{equation}
Thus we have
\begin{equation}
K= \sum_{k=1}^{M} f_k v_k \begin{pmatrix} 1 \\ \r_k \end{pmatrix},
\end{equation}
for $0\le f_k \le 1$, or $K \in \K(\nu)$.

\end{proof}

The following corollary is a direct consequence of the above theorem.
\begin{corollary}
For a compact convex set $\S$ on the Bloch hyperplane containing the center of the Bloch sphere and with compact set of extreme points $\operatorname{ext} (S)$, we have $R^{\S}[\varrho] = R^{\operatorname{ext} (\S)}[\varrho]$ for a non-degenerate canonical state $\varrho$.
\end{corollary}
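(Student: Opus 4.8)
The plan is to derive the equality by applying Theorem~\ref{th:nested_bound} twice, in opposite directions, using the fact that the Bloch hyperplane of a single qubit is a finite-dimensional (here $3$-dimensional) affine space. In such a finite-dimensional setting, Minkowski's theorem (the finite-dimensional sharpening of Krein--Milman) applies: every compact convex set equals the convex hull---not merely the \emph{closed} convex hull---of its extreme points. This is the single non-elementary ingredient; the rest is bookkeeping on set inclusions.

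First I would establish $R^{\operatorname{ext}(\S)}[\varrho] \le R^{\S}[\varrho]$. Since $\S$ is convex we have $\operatorname{conv}(\S)=\S$, and trivially $\operatorname{ext}(\S) \subseteq \S = \operatorname{conv}(\S)$. Taking $\S_1 = \operatorname{ext}(\S)$ and $\S_2 = \S$ in Theorem~\ref{th:nested_bound} immediately gives the claimed inequality. Before doing so, I must check that $R^{\operatorname{ext}(\S)}[\varrho]$ is even well-defined: this requires $\operatorname{ext}(\S)$ to be compact (which is assumed in the hypothesis) and its convex hull to contain the center of the Bloch sphere, a point I address through the reverse inclusion below.

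For the reverse inequality $R^{\S}[\varrho] \le R^{\operatorname{ext}(\S)}[\varrho]$, I would invoke Minkowski's theorem to write $\S = \operatorname{conv}(\operatorname{ext}(\S))$, so in particular $\S \subseteq \operatorname{conv}(\operatorname{ext}(\S))$. Applying Theorem~\ref{th:nested_bound} now with $\S_1 = \S$ and $\S_2 = \operatorname{ext}(\S)$ yields the inequality. Combining the two bounds gives $R^{\S}[\varrho] = R^{\operatorname{ext}(\S)}[\varrho]$, as desired.

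The only delicate points, and thus the place to concentrate care, are the applicability conditions rather than any computation. I must confirm that $\S = \operatorname{conv}(\operatorname{ext}(\S))$, which is exactly where finite-dimensionality is essential---in infinite dimensions one would only recover the closed convex hull, and the argument would break. The hypothesis that $\operatorname{ext}(\S)$ is compact ensures $R^{\operatorname{ext}(\S)}[\varrho]$ is defined, and since $\operatorname{conv}(\operatorname{ext}(\S))=\S$ contains the center of the Bloch sphere by assumption on $\S$, the minimal-requirement constraint is non-empty for measures supported on $\operatorname{ext}(\S)$, so both critical radii appearing in the statement are legitimate. With these checks in place, the corollary follows entirely from the two applications of the nesting theorem, with no further estimate required.
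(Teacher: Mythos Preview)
Your proposal is correct and follows essentially the same route as the paper: two applications of Theorem~\ref{th:nested_bound}, once with $\S_1=\operatorname{ext}(\S)\subseteq\S=\operatorname{conv}(\S)$ and once with $\S_1=\S\subseteq\operatorname{conv}(\operatorname{ext}(\S))$, the latter inclusion coming from Minkowski/Krein--Milman. You are simply more explicit than the paper about verifying the applicability hypotheses (compactness of $\operatorname{ext}(\S)$ and the minimal-requirement constraint being non-empty), which is a welcome addition rather than a deviation.
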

\begin{proof}
Since $\S$ is convex and compact, $\operatorname{ext} (\S) \subseteq \S$. It follows that $R^{\S}[\varrho] \ge  R^{\operatorname{ext} (\S)}[\varrho]$. On the other hand, $\S$ is inside the convex hull of $\operatorname{ext} (\S)$, by the above theorem, we have $R^{\S}[\varrho] \le  R^{\operatorname{ext} (\S)}[\varrho]$. Therefore $R^{\S}[\varrho] = R^{\operatorname{ext} (\S)}[\varrho]$.
\end{proof}
\new{Applied to the Bob's Bloch ball, this corollary implies that the LHS ensemble in equation (4) in the main text can be assumed to be supported on the Bloch sphere (i.~e., the pure states), excluding the mixed states. This fact has been actually often assumed in the literature without a proper proof.}

Computationally, the above theorem allows us to lower-bound and upper-bound the critical radius of a non-degenerate canonical state by approximating the Bloch sphere by a finite number of points. To be specific, let $\S_B^-$ and $\S_B^+$ be the sets of vertices of two convex polytopes  such that $\operatorname{conv} \S_B^- \subseteq \B_B \subseteq \operatorname{conv} \S_B^+$. When the polytopes are fixed by context, we denote $R^{+}[\varrho]= R^{\S_B^+}[\varrho]$ and $R^{-}[\varrho]= R^{\S_B^-}[\varrho]$ for simplicity.  We then have $R^{-}[\varrho] \le R[\varrho] \le R^{+}[\varrho]$. In practice, we can choose $\S_B^-$ on Bob's Bloch sphere $\S_B$ itself, and $\S_B^+$ such that the surface of its convex hull circumscribes $\S_B$. With sufficient high numbers of vertices where both $\S_B^-$ and $\S_B^+$ are good approximations for $\S_B$, we can expect to have a good approximation for $R[\varrho]$. This is indeed the case due to the bound of errors discussed below. 

A note on convention: in the following, polytopes are always assumed to be convex. Here and in the following a polytope may mean the set of its vertices or the whole convex polytope itself. This ambiguity should not cause any confusion, since it should be clear from the context what is meant by a polytope.  
   
\subsection{Universal bound of the relative error}
In practice, it is convenient to choose $\S_B^-$ as a discrete set on the Bloch sphere \emph{with the inversion symmetry}. Let $r_\mathrm{in}$ be the inscribed radius of the polytope $\S_B^-$. Note that due to the inversion symmetry, the center of the inscribed sphere of the polytope is at the origin. We then define the enlarged polytope $\S_B^+=\{\vv= \eta \n: \n \in \S_B^- \}$ with $\eta= 1/r_\mathrm{in}$. The enlarged polytope then contains the Bloch ball. We therefore have that $R^-[\varrho] \le R[\varrho] \le R^+[\varrho]$. More interestingly, we also have $R^{+}[\varrho] \le \eta R^-[\varrho]$, which leads to a universal bound of the relative error to be $1/\eta-1$, regardless of the details of the input state. 
\begin{theorem}
Consider a polytope $\S_B^-$ with inversion symmetry and $\S_B^+=\{\vv= \eta \n: \n \in \S_B^- \}$ with $\eta= 1/r_\mathrm{in}$. For a canonical state $\varrho$, we have $R^+[\varrho] \le \eta R^-[\varrho]$. 
\end{theorem}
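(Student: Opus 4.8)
The plan is to reduce the inequality to a rescaling identity together with the polytope analogue of the depolarizing monotonicity of Corollary~\ref{th:depolarising_alice}. Throughout I write the canonical state as $\varrho=(\a,T)$ and use the coordinate fraction function~\eqref{eq:fraction_function_coordinate}.

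First I would prove the exact identity
\begin{equation}
R^{+}[(\a,T)] = \eta\, R^{-}[(\eta\a,T)].
\end{equation}
Since $\S_B^+=\eta\,\S_B^-$, every minimal-requirement measure $\nu$ on $\S_B^+$ is the push-forward under $\vv\mapsto\eta\vv$ of a unique measure $\mu$ on $\S_B^-$ with $\int\d\mu(\vv)\,\vv=0$, and conversely. In $r^{\S_B^+}[(\a,T),\nu]$ I would replace the support variable by $\eta\vv$ in the numerator and then rescale the infimum variable $\c\mapsto\c/\eta$; the numerator becomes $\int\d\mu(\vv)\,|c_0+\c^T\vv|$ while the denominator $\norm{c_0\a+T\c}$ becomes $\tfrac1\eta\norm{(\eta\a)c_0+T\c}$. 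Collecting the factor $\eta$ and reading the result as a fraction function for the state $(\eta\a,T)$ on $\S_B^-$ gives $r^{\S_B^+}[(\a,T),\nu]=\eta\, r^{\S_B^-}[(\eta\a,T),\mu]$, and maximizing over the bijective pair $(\nu,\mu)$ yields the identity. No positivity of $(\eta\a,T)$ is needed, as $R^{-}$ is defined purely through~\eqref{eq:fraction_function_coordinate}.

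Given the identity and $\eta=1/r_{\mathrm{in}}\ge 1$, the theorem is equivalent to the depolarizing bound $R^{-}[(\eta\a,T)]\le R^{-}[(\a,T)]$, which I would obtain exactly as in Corollary~\ref{th:depolarising_alice}. The level-set convexity of Proposition~\ref{th:level_set} and the ensuing semi-concavity of Corollary~\ref{th:semi_concavity} transfer verbatim to $R^{-}$, their proofs using only the form of the fraction function and the fact that convex combinations of minimal-requirement measures on $\S_B^-$ remain such. Writing $(\a,T)=\tfrac1\eta(\eta\a,T)+(1-\tfrac1\eta)(0,T)$, semi-concavity gives $R^{-}[(\a,T)]\ge\min\{R^{-}[(\eta\a,T)],R^{-}[(0,T)]\}$, so it remains to prove the upper bound $R^{-}[(\eta\a,T)]\le R^{-}[(0,T)]$.

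This last bound is the crux, and is where the inversion symmetry of $\S_B^-$ is indispensable. Restricting the infimum defining $R^{-}[(\a',T)]$ to $c_0=0$ gives, for every $\a'$, the $\a'$-independent upper bound $R^{-}[(\a',T)]\le\max_\mu\inf_{\c}\frac{\int\d\mu(\vv)\,|\c^T\vv|}{\norm{T\c}}$. I would then identify this quantity with $R^{-}[(0,T)]$: because $\S_B^-$ is invariant under $\vv\mapsto-\vv$ and $(0,T)$ is time-reversal symmetric, concavity of $r^{\S_B^-}$ in $\mu$ (as exploited in Theorem~\ref{th:symmetry_LHS}) lets one choose the optimal ensemble for $(0,T)$ to be inversion symmetric; for such $\mu$ the even convex function $c_0\mapsto\int\d\mu(\vv)\,|c_0+\c^T\vv|$ is minimized at $c_0=0$, so the inner infimum for $(0,T)$ is attained at $c_0=0$ and equals the quantity above. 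Hence $R^{-}[(\eta\a,T)]\le R^{-}[(0,T)]$, so $R^{-}[(\eta\a,T)]\le R^{-}[(\a,T)]$, and finally $R^{+}[(\a,T)]=\eta R^{-}[(\eta\a,T)]\le\eta R^{-}[(\a,T)]$. I expect the symmetrization step in this upper bound to be the only genuinely delicate point; everything else is a transcription of earlier arguments.
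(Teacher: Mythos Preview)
Your proposal is correct and follows essentially the same route as the paper: establish the exact rescaling identity $R^{+}[(\a,T)]=\eta R^{-}[(\eta\a,T)]$, then reduce to the polytope version of Corollary~\ref{th:depolarising_alice} via level-set convexity and the $T$-state upper bound, where the inversion symmetry of $\S_B^-$ is used to force $c_0=0$ for the optimal ensemble. The only cosmetic differences are that the paper rescales $c_0\mapsto\eta c_0$ rather than $\c\mapsto\c/\eta$ in the identity step, and invokes the upper bound of Theorem~\ref{th:analytical_bounds} by name rather than rederiving it; your explicit even-convex argument for why $c_0=0$ minimises the numerator under an inversion-symmetric $\mu$ is in fact clearer than the paper's somewhat terse justification.
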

\begin{proof}
For simplicity, we denote the canonical state $\varrho$ by $(\a,T)$ and as before, we write $R^{\pm}[\varrho]=R^{\pm}[(\a,T)]$. Then we have $R^+[(\a,T)] = \eta R^-[(\eta \a,T)]$. To see this, we start 
\begin{align}
R^+[(a,T)] &= \max_\mu \inf_{c_0,\c} \frac{\int_{\S_B^+} \d \mu ({\vv}) \abs{c_0 + \c^T {\vv}} }{\norm{c_0 \a + T \c}} \nonumber \\
&= \max_\mu \inf_{c_0,\c} \frac{\int_{\S_B^-} \d \mu (\n) \abs{c_0 + \eta \c^T \n} }{\norm{c_0 \a + T \c}} \nonumber \\
&= \max_\mu \inf_{c_0,\c} \frac{\int_{\S_B^-} \d \mu (\n) \abs{\eta c_0 + \eta \c^T \n} }{\norm{\eta c_0 \a + T \c}} \nonumber \\
&= \eta \max_\mu \inf_{c_0,\c} \frac{\int_{\S_B^-} \d \mu (\n) \abs{c_0 + \c^T \n} }{\norm{\eta c_0 \a + T \c}} \nonumber \\
&= \eta R^{-}[(\eta \a,T)].
\end{align}
In the above manipulation, note that $\mu$ is just a discrete measure defined by a finite probability weights on $\S_B^-$ or $\S_B^+$ (the integral thus can be replaced by a discrete sum). 

We now only need to show that for $\eta \ge 1$, $R^-[(\eta \a, T)] \le R^-[(\a, T)]$. This is in fact the content of Corollary~\ref{th:depolarising_alice}, except that there $\S_B^-$ is replaced by the whole Bloch ball $\B_B$. We thus just need to investigate the validity of  the corollary when the Bloch sphere is approximated by $\S_B^-$. Corollary~\ref{th:depolarising_alice} is in turn based on Corollary~\ref{th:semi_concavity}, thus Proposition~\ref{th:level_set} and the upper bound in Theorem~\ref{th:analytical_bounds}. We now inspect their validity.

(i) Proposition~\ref{th:level_set} in fact makes no use of any property of the Bloch sphere and works just fine for $\S_B^-$. The level set $C^{-}_{t}= \{(\a,T):R^{\S_B^-}[(\a, T)] \ge t\}$ is thus indeed convex. Corollary~\ref{th:semi_concavity} is also valid.

(ii) The upper bound in Theorem~\ref{th:analytical_bounds} is obtained by restricting the domain of infimum to $c_0=0$. We then need to show that 
\begin{equation}
R^{\S_B^-} [(0,T)]=\inf_{\c} \frac{\int_{\S_B^-} \d \mu (\vv) \abs{\c^T \vv}}{\norm{T\c}}.
\end{equation} 
The latter means that $c_0$ can indeed be set to $0$ in the definition of $R^{\S_B^-}[(0,T)]$ for $T$-states. This is based on the fact that $T$-states $(0,T)$ have the time-reversal symmetry implemented by the inversion of the operator space, which implies that the LHS ensemble $\mu$ can be chosen to be central symmetric. One then simply notes that this whole procedure remains valid for the approximated Bloch sphere $\S_B^-$ provided $\S_B^-$ has the inversion symmetry. 
\end{proof}

While one somehow might have anticipated the bound $R^+[\rho] \le R^-[\rho]/r_{\rm in}$, in the above proof, the inversion symmetry of the polytope (which is nothing but time-reversal symmetry) enters in a rather subtle way. We see once again the fundamental role of time-reversal symmetry in quantum steering, which seemed to be overlooked. 
\subsection{Optimise the principal radius over probability distributions on a polytope}
It is now left to describe an algorithm to compute $R^{\S}[\varrho]$ where $\S$ is the set of vertices of a polytope. As $\mu$ is finitely supported on $\S$, $\K(\mu)$ is in fact a polytope of finite vertices and faces. In this case, the minimisation to compute the principal radius~\eqref{eq:simple_r} can be limited to operators $C$ which are normal vectors of the proper faces of maximal dimension of $\K(\mu)$. In that way, to compute the critical radius~\eqref{eq:simple_r} we only need to solve a linear program of finite size. 

In Ref.~\cite{chaupra}, the characterisation of vertices of such a polytope $\K(\mu)$ in the $4$D space is worked out in details. The technique boils down to take a direction, dictated by an operator $C$ and find the maximisers $\arg \max_{K \in \K(\mu)} \dprod{C}{K}$, which is simply a linear maximisation. It was shown that the maximisers are of the form
\begin{equation}
K^{\ast}= \int_\S \d \mu (\sigma) [\chi_{\dprod{C}{\sigma} > 0} (\sigma) + \chi_{\dprod{C}{\sigma} = 0}(\sigma) g(\sigma)] \sigma, 
\label{eq:optimise_maximiser}
\end{equation}
where $\chi_X(\sigma)$ is the characteristic function of the set $X$ and $g(\sigma)$ is an arbitrary function with values between $0$ and $1$. This has a simple interpretation: the maximisers are the sum of all members of the local hidden states which have positive projections on $C$ indicated by $\chi_{\dprod{C}{\sigma} > 0}$, and  members that are orthogonal to $C$ indicated by  $\chi_{\dprod{C}{\sigma} = 0}$ does not change the maximum. 
The maximal value is
\begin{equation}
\dprod{C}{K^{\ast}} = \int_\S \d \mu (\sigma) \max \{ \dprod{C}{\sigma},0 \},
\label{eq:optimise_maximum}
\end{equation}
independent of $g (\sigma)$.
This independence of the maximal value upon $g(\sigma)$ in the maximiser~\eqref{eq:optimise_maximiser} tells that the maximisation problem $\max_{K \in \K(\mu)} \dprod{C}{K}$ may have multiple maximisers, characterised by $g(\sigma)$. These maximisers form faces of $\K(\mu)$. Certainly, if the plane $\dprod{C}{\sigma} = 0$ does not go through any vertex of $\K(\mu)$, $g(\sigma)$ does not contribute to the maximiser~\eqref{eq:optimise_maximiser} and the maximiser is unique. On the other hand, if the plane $\dprod{C}{\sigma} = 0$ goes though a vertex, the function $g(\sigma)$ can be adjusted such that this point is included in the whole integral~\eqref{eq:optimise_maximiser} or not, giving $2^1=2$ independent extreme points of $\K(\mu)$, which form a line segment of maximisers. We are interested in the case where the plane $\dprod{C}{\sigma} = 0$ goes through (at least) $3$ points, where~\eqref{eq:optimise_maximiser} gives $2^3=8$ different extreme points of $\K(\mu)$ forming a proper face of $\K(\mu)$ with 
maximal dimension (i.e., of dimension $3$). This argument leads to a correspondence 
between planes that go through $3$ points of $\S$ and proper faces of maximal dimension of $\K(\mu)$. The correspondence actually goes much further: suppose $c_0$ and $\c$ are the offset and the normal vector of a plane that goes though $3$ points of $\S$, then $(c_0,\c)$ is the $4$D normal vector of a maximal face of $\K(\mu)$. \new{Crucially, one also finds that the set of $4$D normal vectors of the maximal faces of $\K(\mu)$ depends only on the chosen polytope, and not on the probability measure $\mu$.}

If $\S$ consists of $N$ vertices, then the linear program~\eqref{eq:critical_r_polytope} has $N$ variables (apart from some slack variables). There are $N(N-1)(N-2)/6$ planes that go through three points in $\S$. Together with the constraints on the positivity of the probability weights, we have $N(N-1)(N-2)/6+N$ inequality constraints. In addition, the minimum requirement gives $4$ equality constraints. Over all, we have a linear program of $\mathcal{O} (N)$ variables and $\mathcal{O}(N^3)$ constraints. 

\subsection{Implication of the symmetry of the state}
When the state has some symmetry, we can exploit the symmetry to simplify the optimisation as well. Theorem~\ref{th:symmetry_LHS} implies that if a state $\varrho$ has symmetry group $\G$, then one can assume that the optimal LHS ensemble $\mu$ is symmetric under $\G$. When the Bloch sphere is approximated by a polytope $\S$, not only the symmetry of the state $\varrho$ matters, but also does the symmetry of $\S$. We have the restricted symmetry theorem. 

\begin{proposition}[Restricted symmetry of LHS ensemble]
If for a compact group $\G$, $\varrho$ is $(\G,U,V)$ symmetric, the polytope $\S$ is $(\G,V)$ symmetric, then there exists an optimal ensemble $\mu^{\ast}$ on $\S$ which is $(\G,V)$-invariant, $R_V(g) [\mu^{\ast}]=\mu^{\ast}$.
\label{th:symmetry_LHS_restricted}
\end{proposition}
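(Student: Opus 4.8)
The plan is to mirror the proof of the unrestricted symmetry result, Theorem~\ref{th:symmetry_LHS}, replacing Bob's Bloch ball by the polytope $\S$ and the principal and critical radii by their $\S$-restricted counterparts $r^{\S}[\varrho,\mu]$ and $R^{\S}[\varrho]$ defined in equations~\eqref{eq:relative_r} and~\eqref{eq:critical_r_polytope}. As in that proof, only two ingredients are needed: (i) the \emph{invariance} of the restricted principal radius under the group action on the measure, $r^{\S}[\varrho, R_V(g)[\mu]] = r^{\S}[\varrho,\mu]$ for all $g \in \G$; and (ii) the \emph{concavity} of $r^{\S}[\varrho,\cdot]$ in $\mu$, which holds for exactly the same reason as Proposition~\ref{th:principal_concavity}, since $r^{\S}$ is again an infimum of functions linear in $\mu$.

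First I would establish invariance. Fix $g \in \G$ and consider the pushforward $R_V(g)[\mu]$. Because $\S$ is $(\G,V)$-symmetric, the action of $V(g)$ maps $\S$ onto itself, so $R_V(g)[\mu]$ is again a probability measure \emph{supported on $\S$}; this is precisely the point at which the symmetry of the polytope, absent in the unrestricted case, is indispensable. That $R_V(g)[\mu]$ still meets the minimal requirement for $\varrho$ (using that $\varrho_B$ is $V(g)$-invariant), and that $r^{\S}$ is unchanged, follows by adapting Lemma~\ref{lem:principal_invariance} to the restricted fraction function $F^{\S}$: in the numerator one changes variables $\sigma \mapsto V(g)\sigma V^{\dagger}(g)$, the domain of integration $\S$ is mapped back to $\S$ (again using the symmetry of $\S$) while the test operator is conjugated to $V^{\dagger}(g) C V(g)$; in the denominator the $(\G,U,V)$-symmetry of $\varrho$, so that $\varphi_{(U,V)}(\varrho) = \varrho$ (and since $V(g)$ is unitary the normalisation factors are trivial), produces the same conjugation of $C$. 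Since $C \mapsto V^{\dagger}(g) C V(g)$ is a bijection, taking the infimum over $C$ leaves $r^{\S}$ invariant.

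With invariance and concavity in hand, I would conclude as follows. Let $\mu^{\ast}$ be an optimal ensemble on $\S$, so that $R^{\S}[\varrho] = r^{\S}[\varrho,\mu^{\ast}]$; such an optimiser exists because $r^{\S}[\varrho,\cdot]$ is weakly upper-semicontinuous on the compact set of measures on $\S$ obeying the minimal requirement. Define the Haar-averaged measure
\begin{equation}
\bar{\mu}^{\ast} = \int \d \omega(g)\, R_V(g)[\mu^{\ast}],
\end{equation}
where $\omega$ is the normalised Haar measure of $\G$. Each $R_V(g)[\mu^{\ast}]$ is supported on $\S$, hence so is $\bar{\mu}^{\ast}$, and by construction $\bar{\mu}^{\ast}$ is $(\G,V)$-invariant, $R_V(h)[\bar{\mu}^{\ast}] = \bar{\mu}^{\ast}$ for all $h \in \G$. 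Concavity of $r^{\S}[\varrho,\cdot]$ together with the invariance established above gives
\begin{equation}
r^{\S}[\varrho,\bar{\mu}^{\ast}] \ge \int \d \omega(g)\, r^{\S}[\varrho, R_V(g)[\mu^{\ast}]] = R^{\S}[\varrho],
\end{equation}
while $r^{\S}[\varrho,\bar{\mu}^{\ast}] \le R^{\S}[\varrho]$ holds by the definition of $R^{\S}$. Hence $r^{\S}[\varrho,\bar{\mu}^{\ast}] = R^{\S}[\varrho]$, so $\bar{\mu}^{\ast}$ is the desired optimal $(\G,V)$-invariant ensemble on $\S$.

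The main obstacle is the invariance step, and specifically checking that the restriction to $\S$ survives the group averaging: in the unrestricted theorem the whole Bloch ball is trivially $V$-invariant, so this issue never arose, whereas here one must verify both that the pushforward $R_V(g)[\mu]$ remains supported on $\S$ and that the change of variables in $F^{\S}$ maps the integration domain $\S$ back to itself. Both reduce to the hypothesis that $\S$ is $(\G,V)$-symmetric; once that is granted, the remainder is a routine transcription of Lemma~\ref{lem:principal_invariance} and the averaging argument of Theorem~\ref{th:symmetry_LHS}.
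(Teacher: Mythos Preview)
Your proposal is correct and follows essentially the same approach as the paper. The paper's own proof is a two-line remark that the argument of Theorem~\ref{th:symmetry_LHS} carries over verbatim, with the $(\G,V)$-symmetry of $\S$ ensuring that the space of probability measures on $\S$ is itself $\G$-invariant; you have spelled out exactly this point in detail, correctly identifying where the symmetry of $\S$ enters (the pushforward stays supported on $\S$, and the change of variables in $F^{\S}$ preserves the integration domain).
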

\begin{proof}
The proof is actually rather the same as that of Theorem~\ref{th:symmetry_LHS}. Here the fact that $\S$ is symmetric under $\G$ just ensures the consistency of the proof: the space of probability measures on $\S$ is also  symmetric under $\G$.  
\end{proof}
\subsection{Remark on the computation for degenerate states}
\label{sec:computation_degenerate}
Note that even if the (canonical) states are close to degenerate, our procedure is still valid. Truly degenerate canonical states are not of the main interest, we nevertheless sketch how to cope with them for completeness. 

For truly degenerate state, strictly Proposition~\ref{th:principal_lower_semicontinuity} on the continuity of the principal radius in principle may not apply, and thus neither does Theorem~\ref{th:nested_bound}. We are back at the tedious problem of demonstrating the continuity of the principal radius with respect to LHS ensemble $\mu$ for degenerate states. For the practical purpose, there is a work around, though. The idea is that for degenerate states, both $\mu$ and $C$ can be subjected to some restrictive constraints. Under these restrictive constraints, Proposition~\ref{th:principal_lower_semicontinuity} and Theorem~\ref{th:nested_bound} regain their validity. 

Suppose $T$ is degenerate and consider the faction function~\eqref{eq:fraction_function},
\begin{equation}
F[\varrho,\mu,C]=\frac{\int \d \mu (\vv)  |c_0 + \c^T \vv|}{\norm{c_0 \a +  T \c}}.
\end{equation}
To be concrete, we can assume $T=\operatorname{diag}(s_1,s_2,0)$ without loss of generality. Note that now the state is invariant under the time-reversal transformation implemented by the reflection along $z$ on Bob's space (by the way, this implies that they are separable). Therefore the LHS ensemble can be assumed to be symmetric under reflection along $z$. Further, $\c$ can then be limited to be orthogonal to the kernel of $T$, i.e., in the $xy$-plane. One can easily verify that Proposition~\ref{th:principal_lower_semicontinuity} is again valid if $\mu$ is limited to those that are symmetric under $z$-reflection and $\c$ is on the $xy$-plane. As a result, Theorem~\ref{th:nested_bound} applies and the numerical procedure is valid. In fact, one can go on to show that $\mu$ can be assumed to be supported on the $xy$-plane, which largely simplifies the practical computation. Thus, while degenerate states seem theoretically complicated, practically they are in fact easier to work with.  The case where $T$ is 
rank-
$1$ can be worked out similarly. 
\subsection{Technical aspects and the EPR-package}
\label{sec:EPR_package}
On the technical side, for a generic state, the vertices of the polytopes used to approximate the Bloch sphere were  chosen to be the solutions of the so-called ``covering problem" taken from 
Ref.~\cite{icosapoints}. These arrangements of points $\pmb{t}_i$ on the unit sphere have 
icosahedral symmetry and are arranged such that 
$\max_{i} \min_{j\neq i} \norm{ \pmb{t}_i - \pmb{t}_j}$ is minimal. This gives directly
an inner polytope. For the outer polytope, we used a rescaled version of the 
inner polytope by the inverse of its inscribed radius. The 
corresponding upper bound on $R(\varrho_{AB})$ was calculated independently, i.e., without using the estimate $R_{\rm in} \leq R(\varrho_{AB}) \leq R_{\rm in}/r_{\rm in}$,
as this gives typically a better bound.

The linear system of equations with constraints derived from the minimal requirement, 
the probability bounds, and the target function in  equation~\eqref{eq:simple_r}
were  written to a file in a linear program format which was then read and solved employing IBM ILOG 
CPLEX Optimization Studio. With a $92$-vertex polytope, the computational time for a generic state is about $30$ 
seconds on average when running on an Intel Xeon X5650 ($2.67$GHz) processor with six
physical cores.  The inscribed radius of the $92$-vertex polytope is 
$r_{\mathrm{in}} \approx 0.972$, thus the relative error is at most $2.8\%$.
When highly accurate values of the critical radius are desired, one can use a $252$-vertex polytope with $r_{\mathrm{in}} \approx 0.990$, which requires about $40$ minutes computation and about $48$GB memory.

If the states have
axial symmetry, assumed to be in $z$-axis, the best choice for the inner polytope is to impose
some subgroup of the rotation group around a fixed axis. To this end, one can
choose polytopes with vertices formed by intersecting $p$ circles
of latitude with $q=2p+2$ uniformly distributed great circles that go
through the north and south poles. The circles of latitude can be arranged in
various schemes: (i) the  polar angles are uniformly distributed, (ii) the
intersections with the symmetry axis are uniformly distributed, (iii) the
intersections with the symmetry axis are identical to the Gauss-Legendre
abscissae order $p$. We found that when $p$ and $q$ are fixed, the last scheme
gives the largest value for the inscribed radius $r_{\mathrm{in}}$. Thus, we chose this scheme in our calculations.

In this case, the polytope has the axial rotation group of degree $q$. By
Theorem~\ref{th:symmetry_LHS_restricted} on the restricted symmetry of the
optimal LHS ensemble, we can assume that the LHS ensembles are the same for
points with the same latitude. This reduces the number of variables in the
linear program by a factor of $q$. Moreover, many directions normal to
polytope facets transform to each other under the axial rotation group of degree
$q$. Thus the number of constraints also decreases by certain factor of order
$q$. Consequently, the size of the linear program in computing $R$
decreases significantly and the upper bound and lower bound for $R$ can be
obtained using polytopes with much higher number of vertices. In our
computation, we were able to use $q=52$ and $p=25$ which results in a
polytope with $1032$ vertices and $r_{\mathrm{in}} \approx 0.996$.


Our programs and examples are available at \texttt{https://gitlab.com/cn611340/epr-steering}.

%

\section{Gradients of the critical radius}

It is generally tedious to show that the critical radius is differentiable. However, at certain points such as $T$-states, it is plausible that the critical radius function is reasonably smooth. At these points, we can assume that the gradient exists.  Gradients of the critical radius are normal vectors of the supporting hyperplanes of its level sets. They are therefore directly related to optimal steering inequalities. 

We recall that any normal state can be brought to the canonical form~\eqref{eq:canonical_theta} by the group action of $\U(2) \times \GL(2)$. The action also allows one to relate  the supporting hyperplane at a normal state with that at its canonical form and vice versa. Excluding abnormal states, we can therefore restrict ourselves to computing the gradients of the critical radius at canonical states.

Take a state $\varrho=(\a,\operatorname{diag}{\s})$ in the canonical form, and let us assume that the gradient exists. Again, because of the invariance of $R$ with respect to the action of $\U(2) \times \GL(2)$, we know that the gradient $\nabla R [\varrho]$ has to be orthogonal to all the flowing directions of the action of $\U(2) \times \GL(2)$. More precisely, we have the following lemma.
\begin{lemma}
For any state $\varrho$, we have
\begin{enumerate}
\item[(i)] For any traceless hermitian operator $H$, 
\begin{equation}
\dprod{\nabla R [\varrho]}{ [H \otimes \II_B,\varrho]} = 0.
\label{eq:H_orthogonal}
\end{equation}
\item[(ii)] For any (complex) operator $M$, if we denote
$K=(\II_A \otimes M) \varrho - \varrho (\II_A \otimes M^{\dagger})- \varrho \Tr(M \varrho_B - \varrho_B M^{\dagger})$,
then
\begin{equation}
\dprod{\nabla R [\varrho]}{K}=0.
\label{eq:M_orthogonal}
\end{equation}
\end{enumerate}
\label{lem:invariant_directions}
\end{lemma}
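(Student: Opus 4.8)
The plan is to exploit the established invariance of the critical radius $R$ under the group action $\varphi_{(U,V)}$ of $\U(2)\times\GL(2)$ from Theorem~\ref{th:invariant}. The geometric idea is that the gradient $\nabla R[\varrho]$ must annihilate every tangent vector to the orbit of $\varrho$ under this action, since $R$ is constant along orbits. Each generator of the Lie algebra of $\U(2)\times\GL(2)$ produces, via the infinitesimal version of $\varphi_{(U,V)}$, one such tangent direction; computing these directions explicitly and pairing against $\nabla R[\varrho]$ will yield the two families~\eqref{eq:H_orthogonal} and~\eqref{eq:M_orthogonal}.

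**For part (i)**, I would take a one-parameter subgroup $U(t)=e^{-itH}$ in $\U(2)$ acting on Alice's side, with $V=\II_B$, so that $\varphi_{(U(t),\II_B)}$ is a genuine unitary conjugation and leaves the trace invariant (no normalisation correction needed). Differentiating $\varphi_{(e^{-itH},\II_B)}(\varrho)$ at $t=0$ gives the tangent vector $\frac{d}{dt}\big|_{0}\,(e^{-itH}\otimes\II_B)\,\varrho\,(e^{itH}\otimes\II_B) = -i\,[H\otimes\II_B,\varrho]$. Since $R$ is constant along this curve by Theorem~\ref{th:invariant}, the chain rule gives $\dprod{\nabla R[\varrho]}{-i[H\otimes\II_B,\varrho]}=0$, which (dropping the scalar $-i$, or rather noting the commutator of hermitian operators is anti-hermitian so one works with $i[H\otimes\II_B,\varrho]$, a hermitian tangent vector) is exactly~\eqref{eq:H_orthogonal}.

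**For part (ii)**, I would similarly take a one-parameter subgroup on Bob's side, $V(t)=e^{tM}$ with $M$ an arbitrary complex generator of $\GL(2)$, and $U=\II_A$. Here the subtlety is that $\varphi_{(\II_A,V(t))}$ includes the normalisation denominator $\Tr[(\II_A\otimes V(t))\varrho(\II_A\otimes V(t)^\dagger)]$, which must be differentiated too via the quotient rule. Writing $\Phi(t)=(\II_A\otimes e^{tM})\,\varrho\,(\II_A\otimes e^{tM^\dagger})$ and $N(t)=\Tr\Phi(t)$, so that $\varphi_{(\II_A,V(t))}(\varrho)=\Phi(t)/N(t)$, I differentiate at $t=0$: using $N(0)=1$, $\Phi'(0)=(\II_A\otimes M)\varrho-\varrho(\II_A\otimes M^\dagger)$ (noting $(e^{tM})^\dagger=e^{tM^\dagger}$ so the right factor contributes $M^\dagger$), and $N'(0)=\Tr[\Phi'(0)]=\Tr[\varrho(\II_A\otimes M)]-\Tr[\varrho(\II_A\otimes M^\dagger)]=\Tr(M\varrho_B)-\Tr(M^\dagger\varrho_B)$ after partial-tracing out Alice. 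The quotient rule then gives the tangent vector $\Phi'(0)-\varrho\,N'(0)$, which is precisely the operator $K$ in the statement. Invariance of $R$ forces $\dprod{\nabla R[\varrho]}{K}=0$, establishing~\eqref{eq:M_orthogonal}.

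**The main obstacle** is bookkeeping in part (ii): one must carefully handle that $M$ is not hermitian (so the adjoint introduces $M^\dagger$ in both the conjugated state and the trace-correction term), and verify that the normalisation derivative $N'(0)$ reduces cleanly to $\Tr(M\varrho_B-\varrho_B M^\dagger)$ via $\Tr_A$. One should also check that $R$ is indeed differentiable in these directions at the point $\varrho$ in question — but since the lemma is invoked only where the gradient is \emph{assumed} to exist (as flagged in the surrounding text for $T$-states and similar smooth points), this regularity is granted, and it suffices to differentiate the constancy relation $R[\varphi_{(U(t),V(t))}(\varrho)]=R[\varrho]$ along each generator.
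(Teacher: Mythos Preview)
Your approach is essentially identical to the paper's: both differentiate the invariance relation $R[\varphi_{(U(t),V(t))}(\varrho)]=R[\varrho]$ along one-parameter subgroups of $\U(2)\times\GL(2)$ and read off the tangent directions. Part (i) is correct as written.

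There is a small computational slip in part (ii). With your parametrisation $V(t)=e^{tM}$ one has $(e^{tM})^{\dagger}=e^{tM^{\dagger}}$, whose derivative at $t=0$ is $+M^{\dagger}$; hence
\[
\Phi'(0)=(\II_A\otimes M)\varrho+\varrho(\II_A\otimes M^{\dagger}),
\]
with a plus sign, not the minus sign you wrote. The paper uses $V(t)=e^{-itM}$ instead, for which the derivative of $V(t)^{\dagger}=e^{itM^{\dagger}}$ is $+iM^{\dagger}$ and the tangent vector is $-iK$ exactly, giving~\eqref{eq:M_orthogonal} directly. Your route is easily repaired: either adopt the paper's parametrisation, or observe that since $M$ ranges over all of $\mathfrak{gl}(2,\CC)$, replacing $M$ by $iM$ in the (correct) plus-sign identity yields precisely $iK$, and hence $\dprod{\nabla R[\varrho]}{K}=0$.
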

\begin{proof}
The equalities are obtained by considering the infinitesimal action (i.e., the Lie algebra action) associated to the action of $\U(2) \times \GL(2)$. 

(i) Recall that the Lie algebra of $\U(2)$ are traceless, skewed hermitian operators $\mathfrak{su}(2)$. For any traceless hermitian operator $H \in i \mathfrak{su}(2)$, $U(t)=e^{-itH}$ with $t$ in some neighbourhood of $0$ is an element of $\U(2)$ near the identity. Since $R$ is invariant under $\U(2)$, we have $\frac{\d}{\d t} R[U(t) \otimes \II_B \varrho U^{\dagger}(t) \otimes \II_B]=0$. Computing the derivative explicitly results in~\eqref{eq:H_orthogonal}. 

(ii) Similarly, for any operator $M \in \mathfrak{gl}(2,\CC)$, which consists of all complex matrices, we have $V(t)=e^{-itM}$ is an element of $\GL(2)$ near the identity. Since $R$ is invariant under $\GL(2)$, we have $\frac{\d}{\d t} R[\II_A \otimes V(t) \varrho \II_A \otimes V^{\dagger} (t)/\Tr(V(t) \varrho_B V^{\dagger}(t))]=0$. Computing the derivative explicitly results in~\eqref{eq:M_orthogonal}.
\end{proof}

Now suppose we can compute the derivatives of $R$ with respect to the canonical parameters $\a$ and $\s$, or equivalently, we know $\dprod{\nabla R [\varrho]}{\sigma_i^A \otimes \II_B}$ and $\dprod{\nabla R [\varrho]}{\sigma_i^A \otimes \sigma_i^B}$ for $i=1,2,3$. We then should be able to incorporate the invariant directions in Lemma~\ref{lem:invariant_directions} to find $\nabla R[\varrho]$ explicitly. For $T$-states, all these can be computed explicitly.
\begin{lemma}
For a non-degenerate $T$-state $(0,T)$ with $T=\operatorname{diag} (\s)$, we have:
\begin{align}
\dprod{\nabla R [(0,T)]}{\sigma_i^A \otimes \II_B} & = 0, \label{eq:nabla_tstates_a} \\
\dprod{\nabla R [(0,T)]}{\sigma_i^A \otimes \sigma_i^B} & = F_i(\s),
\label{eq:nabla_tstates}
\end{align} 
for $i=1,2,3$, where
\begin{equation}
F_i (\s) = 2 \pi  \frac{\partial (\abs{s_1s_2s_3} N_T)}{\partial s_i}, 
\end{equation}
with $N_T$ defined in equation~\eqref{eq:nt_tstates}.
\label{lem:nabla_tstates}
\end{lemma}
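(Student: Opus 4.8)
The plan is to compute the two directional derivatives of $R$ at the $T$-state $(0,T)$ directly from the analytic formula $R[(0,T)] = 2\pi N_T \abs{\det(T)}$ established in the section on $T$-states, equation~\eqref{eq:t_states}. The crucial point is that both requested directions, $\sigma_i^A \otimes \II_B$ and $\sigma_i^A \otimes \sigma_i^B$, are infinitesimal changes that keep the state in the canonical form (up to signs that do not matter): perturbing along $\sigma_i^A \otimes \II_B$ changes only Alice's Bloch vector $\a$, and perturbing along $\sigma_i^A \otimes \sigma_i^B$ changes only the diagonal entry $s_i$ of the correlation matrix. Because the canonical parameters vary continuously as functions of $\varrho$ (as noted in Section~\ref{sec:canonical}), I can read off these derivatives as ordinary partial derivatives of the formula for $R$.

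First I would verify equation~\eqref{eq:nabla_tstates_a}. Since $\dprod{\nabla R[(0,T)]}{\sigma_i^A \otimes \II_B}$ equals the derivative of $R$ as $\a$ moves away from $0$ (with $T$ fixed), I would invoke the invariance under partial time-reversal, which gives $R[(\a,T)] = R[(-\a,T)]$ (first property listed after equation~\eqref{eq:canonical_theta}). Thus $R[(\a,T)]$ is an even function of $\a$, so its gradient in $\a$ vanishes at $\a = 0$; equivalently, the directional derivative along each $\sigma_i^A \otimes \II_B$ is zero. This also matches the structure of the analytic bounds in Theorem~\ref{th:analytical_bounds}, where the correction term $\norm{T^{-1}\a}$ is first order in $\norm{\a}$ and hence has vanishing derivative at $\a=0$; one should confirm the even-ness argument is the clean way to conclude.

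Next I would compute equation~\eqref{eq:nabla_tstates}. The direction $\sigma_i^A \otimes \sigma_i^B$ moves $s_i \mapsto s_i + t$ while leaving $\a = 0$ and the other $s_j$ fixed, so the state stays a $T$-state and $R$ along this line is exactly $t \mapsto 2\pi N_{T(t)} \abs{s_1 s_2 s_3}(t)$ with $T(t)$ the perturbed diagonal matrix. Here I must be careful with the normalisation of the Pauli-basis coordinates (the factor relating $\dprod{\cdot}{\sigma_i^A \otimes \sigma_i^B}$ to $\partial/\partial s_i$), but up to that bookkeeping the derivative is precisely $2\pi\, \partial_{s_i}(\abs{s_1 s_2 s_3}\, N_T)$, which is the claimed $F_i(\s)$. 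The one genuine subtlety is that $N_T$ depends on $T$ through the surface integral $N_T^{-1} = \int \d S(\n)\,[\n^T T^{-2}\n]^{-2}$, so $\partial_{s_i} N_T$ requires differentiating under the integral sign; I would justify this by the non-degeneracy assumption on $T$, which keeps $\n^T T^{-2}\n$ bounded away from zero uniformly on the sphere, guaranteeing the integrand and its $s_i$-derivative are continuous and bounded.

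I expect the main obstacle to be a conceptual rather than computational one: justifying that the directional derivative of $R$ in the full bipartite operator space genuinely coincides with the partial derivative of the closed-form $T$-state expression. This requires knowing that $R$ is differentiable at $(0,T)$ (assumed in the statement) and that the perturbations along $\sigma_i^A\otimes\II_B$ and $\sigma_i^A\otimes\sigma_i^B$ map into the canonical-form submanifold in the leading order, so that the restriction of $R$ to that submanifold has the same gradient components as the analytic formula. Once this identification is in place, equation~\eqref{eq:nabla_tstates_a} follows from the even-ness in $\a$ and equation~\eqref{eq:nabla_tstates} from differentiating $2\pi N_T\abs{\det T}$, with the differentiation-under-the-integral step being the only place where non-degeneracy is essential.
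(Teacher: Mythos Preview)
Your proposal is correct and follows essentially the same route as the paper: part~(i) uses the time-reversal symmetry $R[(\a,T)]=R[(-\a,T)]$ to conclude that the derivative in $\a$ vanishes at $\a=0$, and part~(ii) simply differentiates the closed formula $R[(0,T)]=2\pi N_T\abs{\det T}$ with respect to $s_i$. The paper's proof is terser and does not spell out the differentiation-under-the-integral or normalisation bookkeeping you mention, but those are exactly the details one would fill in, so there is no substantive difference in approach.
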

\begin{proof}
(i) The first identity expresses the fact that at $T$-states, the derivative of the critical radius with respect to Alice's reduced state vanishes. Indeed, we know that for states in the canonical form $\varrho=(\a,T)$, due to the time-reversal symmetry, we have $R[(\a,T)]=R[(-\a,T)]$. Thus the derivative of $R$ with respect to $\a$ must vanish at $\a=0$.

(ii) The second identity is obtained by directly differentiating the critical radius of $T$ states in equation~\eqref{eq:t_states} with respect to $\s$.
 
\end{proof}

Let us reconstruct the gradient at the $T$-states explicitly. One can compute the invariance directions dictated by Lemma~\ref{lem:invariant_directions} directly. This computation is further simplified by noting that $\rho_B=\frac{\II_B}{2}$ for $T$-states.
To find out these directions, in equation~\eqref{eq:H_orthogonal}, we choose $H=\sigma_k^A$ with $k=1,2,3$ and in equation~\eqref{eq:M_orthogonal}, we choose $M=\sigma_k^B$ and $M=i \sigma_k^B$ with $k=1,2,3$. This gives us $9$ directions in which the gradient vanishes. Further more, when \emph{incorporating} the fact that $\dprod{\nabla R [(\a,T)]}{\sigma_i^A \otimes \II_B} = 0 $ for all $i$, equation~\eqref{eq:nabla_tstates_a}, we come to the conclusion that $\nabla R [(\a,T)]$ vanishes in all directions $\sigma_i^A \otimes \sigma_j^B$ for $i \ne j$. Therefore we have, for $T$-states,
\begin{equation}
\nabla R [(0,T)]= \frac{1}{16}\sum_{i=1}^{3} F_i (\s) \sigma_i^A \otimes \sigma_i^B.
\end{equation}
Here the prefactor $\frac{1}{16}$ is due to the normalisation for the vectors $\sigma_i^A \otimes \sigma_i^B$. 
Note that the expression is symmetric in two parties, as a result, the gradients for the critical radius of steering from $A$ to $B$ and from $B$ to $A$ share the same gradients at $T$-states. This is rather surprising given the asymmetry in the definition of quantum steering with respect to the two parties. This surprising fact is again deeply rooted in the hidden symmetry of the critical radius under time-reversal transformation, which results in the first condition in~\eqref{eq:nabla_tstates_a}. 

Under the light of the relationship between gradients and supporting hyperplanes of level sets, these gradients of the critical radius certainly result in optimal steering inequalities. Despite the fact that we can actually compute the critical radius $R$ and gives various bounds for it, these steering inequalities may still be useful for proving steerability in experiments when the full tomography of the state is not available. 

\section{Details of the examples}
\subsection{Random cross-sections}
To construct a $2$D random cross-section, we choose two random states and 
construct the $2$D plane that goes through the maximally mixed state,
$\frac{\II_A}{2} \otimes \frac{\II_B}{2}$, and these 
two random states. The boundary of the set of unsteerable states was obtained
by solving the equation $R[\varrho]=1$ along $200$ rays going through
$\frac{\II_A}{2} \otimes \frac{\II_B}{2}$.  The equation was solved numerically employing the standard
bisection method. The polytopes that are used to approximate the Bloch sphere and the detailed implementation are described in Section~\ref{sec:EPR_package}. 


To illustrate the results of our
calculations, we selected, in an arbitrary manner, three examples of $2$D random cross-sections
as presented in Figure~\ref{fig:sm_cross_sections} where the steerability of
states in the narrow gray regions is uncertain due to the numerical accuracy. \new{Here the computation was performed using the $162$-vertex polytope and the sets of unsteerable states are extended beyond proper states. The first two examples in this figure have been presented in Figure 5 of the main text, where the $252$-vertex polytope was used, and only proper states were considered.}

\begin{figure}
\begin{center}
\includegraphics[width=0.21\textwidth]{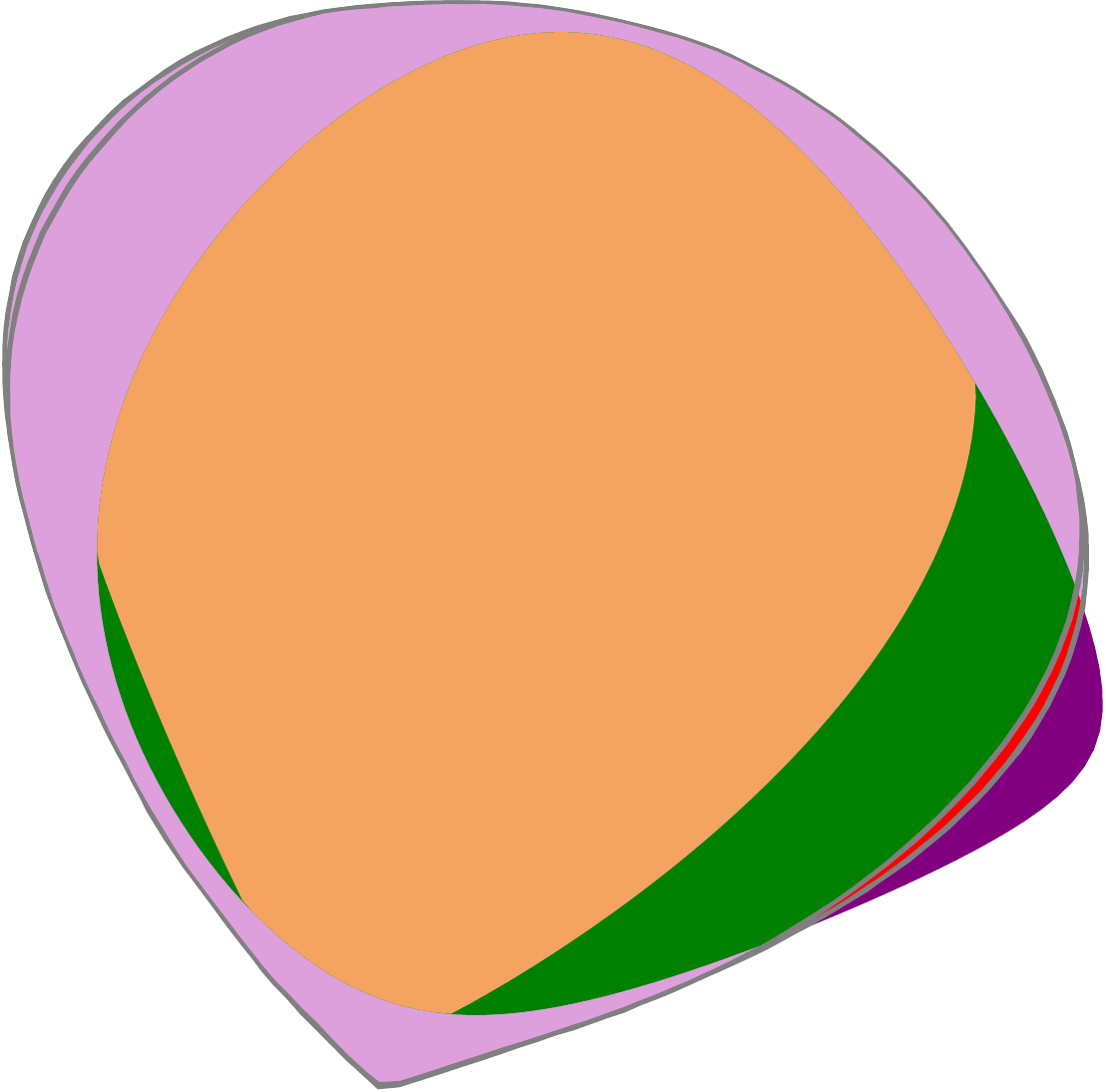}
\hspace{0.3cm}
\includegraphics[width=0.21\textwidth]{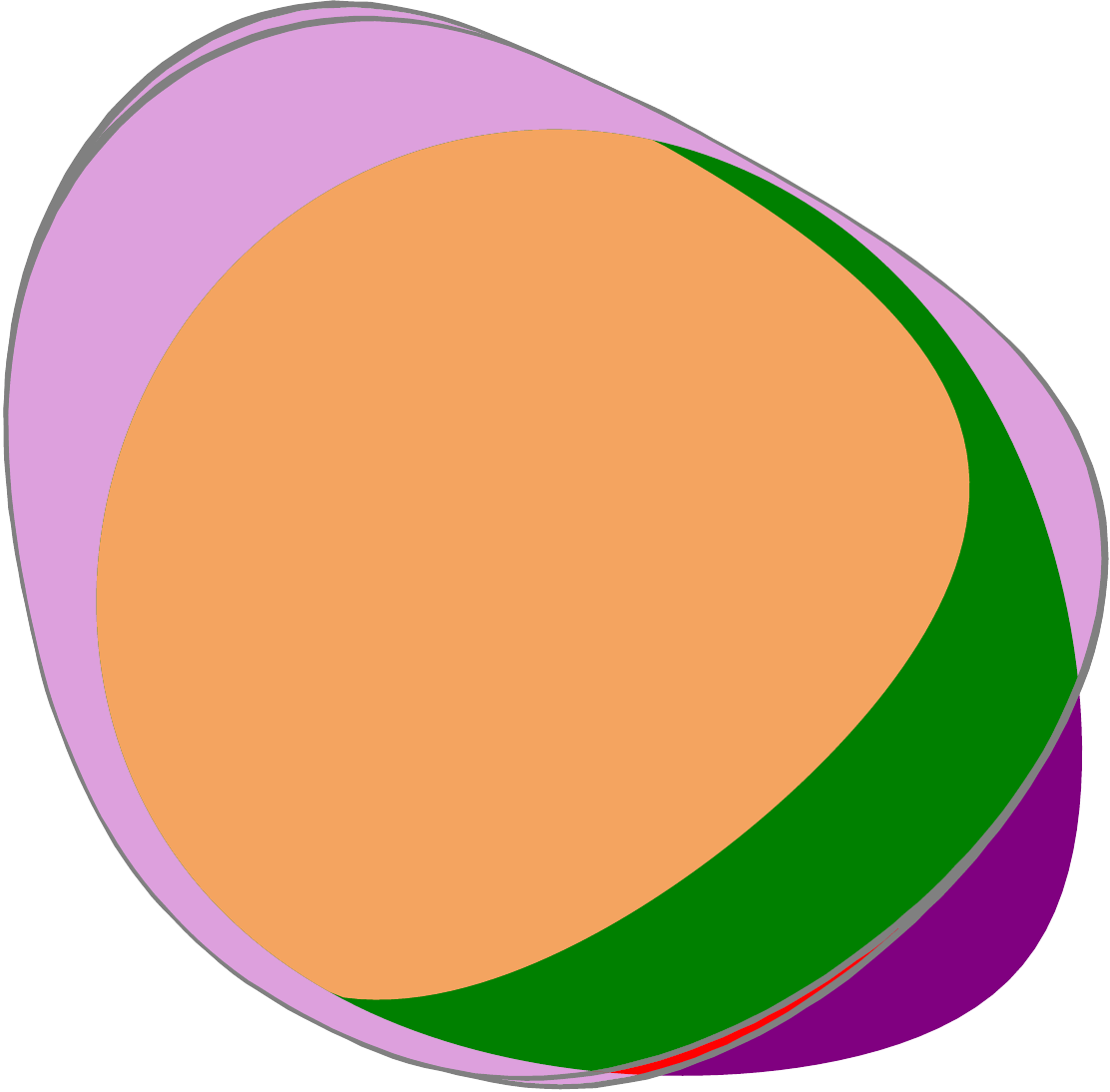}
\hspace{0.3cm}
\includegraphics[width=0.21\textwidth]{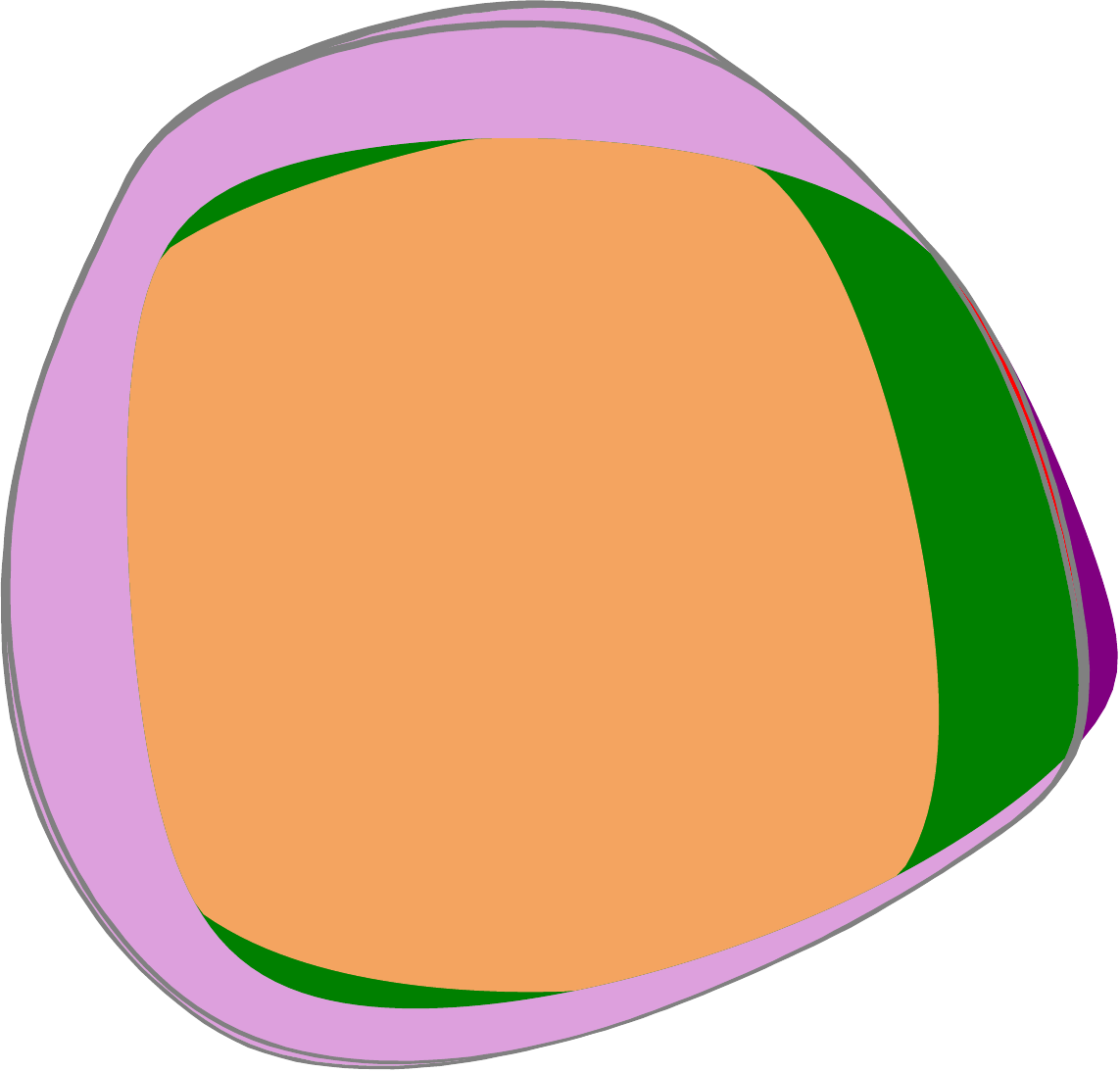}
\includegraphics[width=0.21\textwidth]{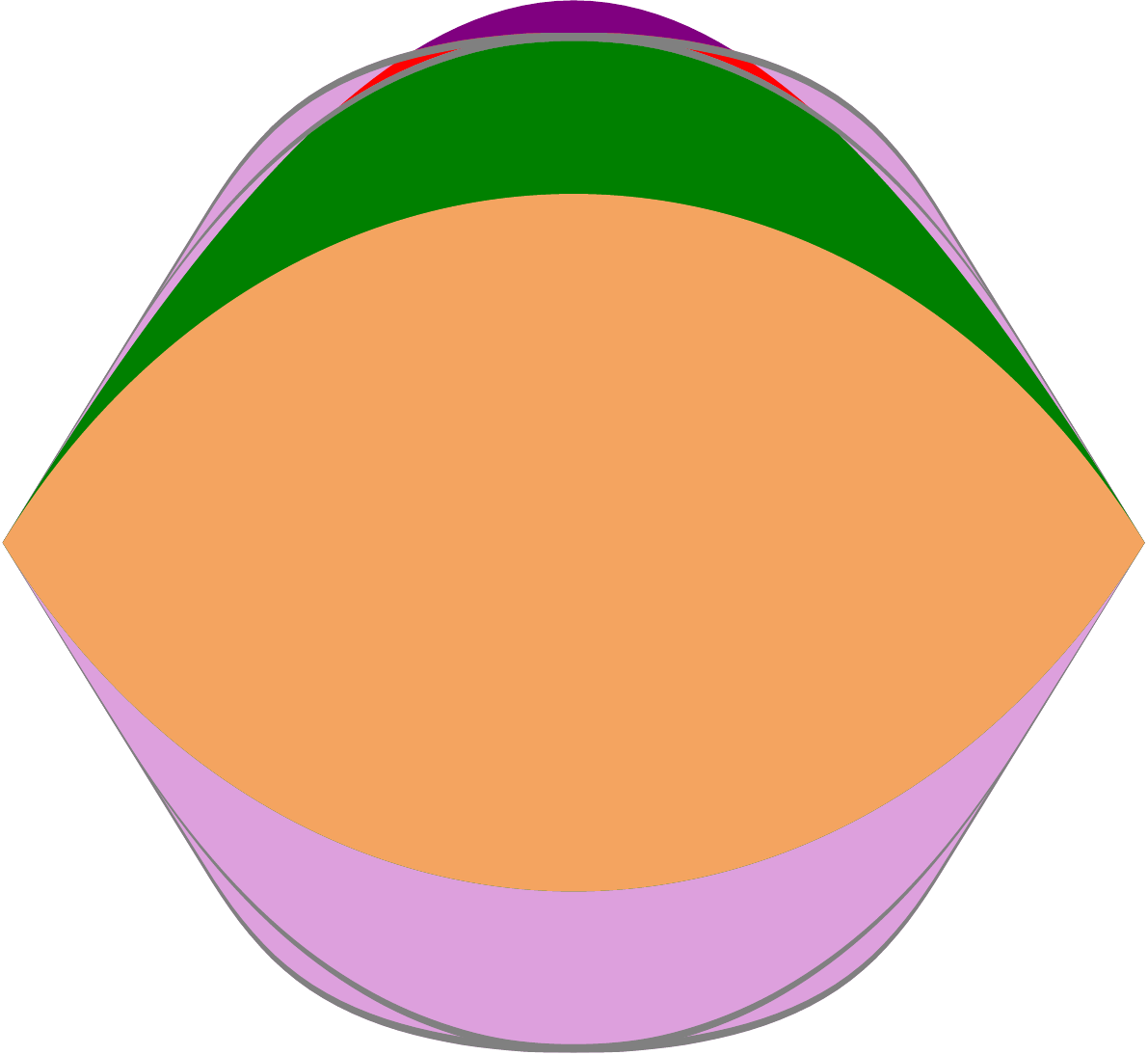}
\\{\ }\\
\includegraphics[width=0.5\textwidth]{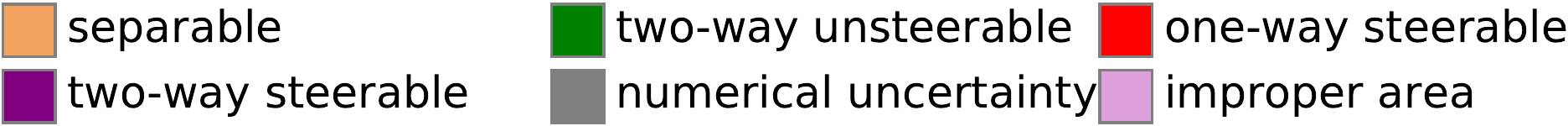}
\end{center}
\caption{ Top row and lower left: random $2$D cross-sections of the set 
of two-way steerable states, one-way steerable states, two-way unsteerable states, 
and separable states. Differently from the main text, here the set of
states with $R[\varrho] \le 1$ is extended beyond the set of proper states (to
include improper states). Lower right: a $2$D symmetric cross-sections showing 
the set of unsteerable states, including improper ones, that is symmetric under
the the time-reversal transformations implemented by the reflections and
inversions as described in the text.}
\label{fig:sm_cross_sections}
\end{figure}
\subsection{Symmetric cross-sections}
\new{To illustrate the symmetry of the critical radius under the local time-reversal transformations, we choose a cross-section cut by a $2$D plane which is invariant under the local time-reversal transformations.} Such a symmetric cross-section is illustrated with states in the canonical form for steering from $A$ to $B$. Two random states are chosen to be $(\a,0)$ and $(0,\operatorname{diag}(\s))$. All states in the cross-section are of the form $x(\a,0)+y(0,\operatorname{diag} (\s))=(x \a, y \operatorname{diag} (\s))$. On this plane, the time-reversal transformation on Alice's side (upto local unitary transformations) is implemented by inversion of $(x,y)$, while the time-reversal transformation on Bob's side is implemented by inversion of $y$.

\new{
Note that this cross-section contains the whole scaling lines, namely,
$R[(\a,\operatorname{\s})]=\lambda R[(\lambda \a, \lambda \operatorname{\s})]$.
This scaling relation of the critical radius provides a powerful
tool for determining the boundary of the set of unsteerable states. One no longer needs to solve
equation $R[\varrho]=1$ by the bisection
method. Instead, one simply computes the upper bound and lower bound for
$R[(\a,\operatorname{\s})]$ on a closed loop---here chosen to be the unit circle---and then uses the scaling relation to locate the boundary of the
set of unsteerable states.

For steering from $B$ to $A$, we note that the special structure
of the canonical form for steering from $A$ to $B$ allows for a slightly
different scaling relation for steering from $B$ to $A$, namely $R^{B \to A}
[(\a,\operatorname{diag} (\s))]= \lambda R^{B \to A} [(\a, \lambda
\operatorname{diag} (\s))]$ with $R^{B \to A}[\varrho]$ being the critical
radius for steering from $B$ to $A$. This scaling can also be employed to locate
the boundary of unsteerable states starting from the values of
$R$ on a closed loop as for the case of steering $A$ to $B$. Here the loop was
chosen to be the boundary of the set of separable states. The upper
bound and lower bound for $R^{B \to A}$ of states on this boundary were
determined by bringing the states to its canonical form for steering from $B$ to
$A$ and applying the same computation procedure as for steering from $A$ to $B$. An example of symmetric 2D cross-sections is also shown in the
bottom-right panel of Figure~\ref{fig:sm_cross_sections}.
}


\subsection{A family of one-way unsteerable states}
In this section we consider the state
\begin{equation}
\varrho= \alpha \ketbra{\theta}{\theta} + (1-\alpha) \varrho_A \otimes \frac{\II_B}{2}, 
\label{eq:theta_state}
\end{equation}
where $\ket{\theta}= \cos \frac{\theta}{2} \ket{00} + \sin \frac{\theta}{2}
\ket{11}$ with  $0\le \theta \le \frac{\pi}{4}$ and $0 \le \alpha \le 1$. This
state is important for demonstrating the one-way steering
phenomenon~\cite{Bowles2016a}. 
As $\ket{\theta}$ is pure, it is easy to  show via the scaling relation
(\ref{eq:scaling}) that the state is steerable from $B$ to $A$ for $\alpha >
\frac{1}{2}$ and $\theta>0$. However, determining the boundary of unsteerable
states from $A$ to $B$ has  been proven to be difficult~\cite{brunnerlhsrecent}.
Here we show how this boundary can be obtained with high accuracy in
our approach. 

\begin{figure}[t!]
\includegraphics[width=0.45\textwidth]{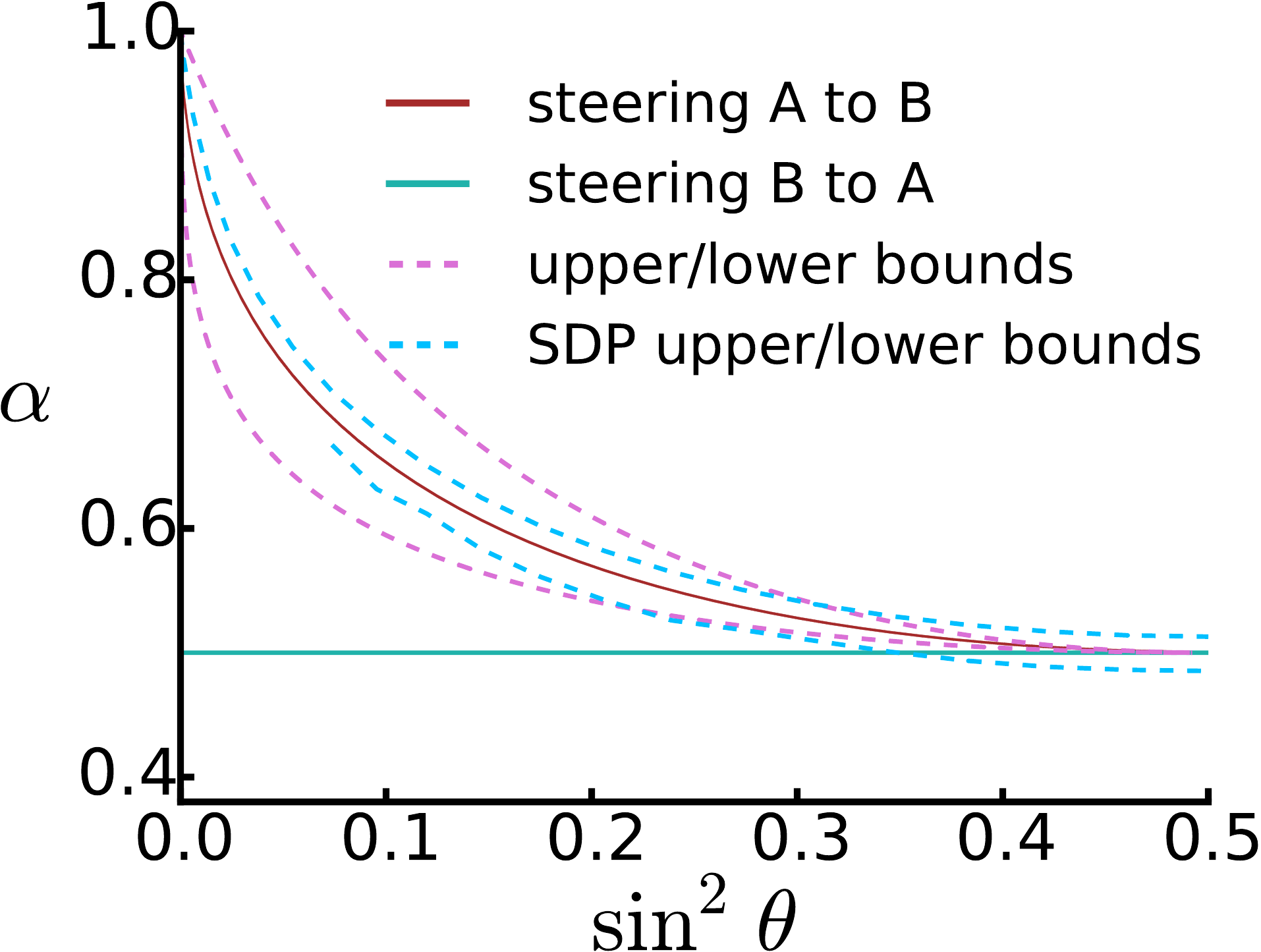}
\caption{{{Boundary of the set of one-way steerable states for the
family of states described in equation~\eqref{eq:theta_state}}. The thickness of
the line presenting the boundary between steerable/unsteerable states from $A$
to $B$ indicates the uncertainty due to numerical accuracy. The analytical upper
bound is obtained using equation~\eqref{eq:analytical_bounds}}. The analytical
lower bound is obtained using equation~\eqref{eq:uniform_ansatz_bound}, which gives the same lower bound as Figure 3 in Ref.~\cite{Bowles2016a}. The SDP
data are provided by the authors of Ref.~\cite{brunnerlhsrecent}.}
\label{fig:theta_state}
\end{figure}

For states of the form~\eqref{eq:theta_state}, the boundary for unsteerable
states is also obtained by solving the equation $R[\varrho]=1$
numerically using the bisection method, similar to the
case of random cross-sections. Note that in this case the state is axially symmetric around the $z$-axis. We therefore can use the symmetry to reduce the size of the linear program  and conveniently work with polytopes of $1032$ vertices; see Section~\ref{sec:EPR_package}. 

\new{In Figure~\ref{fig:theta_state}, we present the obtained border between unsteerable/steerable states together with certain analytical bounds and the known data from SDP~\cite{brunnerlhsrecent} for states of the form~\eqref{eq:theta_state}. One observes that with $q=52$ and $p=25$, we can obtain rather accurate description of the border. Note that the regime of one-way unsteerable states looks significantly exaggerated in comparison to Figure~\ref{fig:sm_cross_sections}; however here the parametrisation does not faithfully represent the Hilbert--Schmidt metric of the state space.}  

\blue{
\section{Generalisation to higher dimensional systems}
\label{sec:generalisation}
In this section, we assume that Alice and Bob share a state $\varrho$ of dimension $d_A \times d_B$.
An $n$-POVM implemented by Alice is  $E=\oplus_{i=1}^{n} E_i$, where $0 \le E_i \le \II_A$, $\sum_{i=1}^{n} E_i=\II_A$. Following Ref.~\cite{chaujpa}, a bipartite state $\varrho$ is unsteerable (from $A$ to $B$) with respect to $n$-POVMs if and only if there exists a LHS ensemble $\mu$ such that
\begin{equation}
\int \d \mu (\sigma) \max_i \{\dprod{Z_i}{\sigma}\} \ge \sum_{i=1}^{n} \Tr [\varrho (E_i \otimes Z_i)],
\label{eq:general_steering_inequality}
\end{equation}
for all $Z= \oplus_{i=1}^{n} Z_i$ and all POVMs $E=\oplus_{i=1}^{n} E_i$. This inequality has a simple interpretation. {Upon making a measurement $E$ on her side, Alice decomposes Bob's state into $n$ conditional states. Bob then makes $n$ different measurements to determine the expectation values of $n$ arbitrary observables $Z_i$, each for a conditional state, and then average them out over all conditional states. If the conditional states are simulated from an LHS ensemble $\mu$, this average clearly cannot exceed the left hand side of~\eqref{eq:general_steering_inequality}, where each of the state in the LHS ensemble is associated to the operator $Z_i$ that has the maximal mean value.}   

We then define the \emph{inverse} fraction function $F^{-1}[\varrho,\mu,Z,E]$ to be
\begin{equation}
\frac{\sum_{i=1}^{n} \Tr [\varrho (E_i \otimes Z_i) ] - \frac{1}{d_A}  \sum_{i=1}^{n} \Tr(E_i) \Tr (\varrho_B Z_i)}{\int \d \mu (\sigma) \max_i \{\dprod{Z_i}{\sigma}\} - \frac{1}{d_A} \sum_{i=1}^{n}  \Tr(E_i) \Tr (\varrho_B Z_i)},
\label{eq:fraction_function_general}
\end{equation}
with the \emph{numerator-dominated convention}, meaning, if the numerator vanishes, the function vanishes regardless of the denominator. The reason we define the inverse of the fraction function, instead of the function itself, is because the numerator of the inverse fraction function can be negative, while the denominator is non-negative. That the denominator is non-negative ensures that inequality~\eqref{eq:general_steering_inequality} holds if and only if $F^{-1}[\varrho,\mu,Z,E] \le 1$. Moreover, the offset subtracted from both the numerator and the denominator was chosen to enforce the scaling of the critical radius; see Section~\ref{sec:general_scaling} below.

The inverse principal radius $r^{-1}_n [\varrho,\mu]$ is defined as
\begin{equation}
r^{-1}_n [\varrho,\mu]= \sup_{Z,E} F^{-1}[\varrho,\mu,Z,E].
\label{eq:general_principal_radius}
\end{equation}
In difference from PVMs, the fraction function for POVMs can be negative. Yet, one can easily show that $r^{-1}_n [\varrho,\mu] \ge 0$. Then one can also write
\begin{equation}
r^{-1}_n [\varrho,\mu]= \sup_{Z,E} \max\{ F^{-1}[\varrho,\mu,Z,E],0 \}.
\end{equation}

Similar to Lemma~\ref{th:principal_concavity}, we can easily show that the inverse critical radius $r^{-1}_n [\varrho,\mu]$ is convex in $\mu$, since so is $\max\{ F^{-1}[\varrho,\mu,Z,E],0 \}$. Also, similar to Lemma~\ref{th:principal_upper_semiconitinuity}, for a fixed $\varrho$, $r^{-1}_n [\varrho,\mu]$ is weakly lower-semicontinuous with respect to $\mu$. Therefore $r^{-1}[\varrho,\mu]$ attains the minimum value for some $\mu^\ast$ (an optimal LHS ensemble). We define the inverse \emph{critical radius} to be
\begin{equation}
R^{-1}_n[\varrho]= \min_{\mu} r^{-1}_n[\varrho,\mu], 
\label{eq:general_crictical_radius}
\end{equation}
where $\mu$ is subject to \emph{minimal requirement},
\begin{equation}
\int \d \mu (\sigma) \sigma = \varrho_B.
\end{equation}
Then the state $\varrho$ is unsteerable if and only if $R_n[\varrho] \ge 1$.

\subsection{Reducing to the formula for two-qubit states}
We first note that both the numerator and the denominator are invariant under transformation $Z_i \to Z_i - Y$ for arbitrary hermitian operator $Y$. Thus we can assume $\sum_{i=1}^{n} Z_i= 0$.  When restricted to $2$-POVMs, we can set $C=Z_1=-Z_2$. Further, for two-qubit systems, we can restrict from $2$-POVMs to PVMs, thus $E_1=Q$ with $E_2=\II_A-Q$ for some projection $Q$. We then have
\begin{equation}
r_2^{-1}[\varrho,\mu]= \sup_C \frac{2 \max_Q \Tr[(\varrho-\frac{\II_A}{2} \otimes \varrho_B) (Q \otimes C)]}{ \int \mu (\sigma) \abs{\dprod{C}{\sigma}}}. 
\label{eq:reduced_to_two_qubit}
\end{equation} 
Now note that $\Tr[(\varrho-\frac{\II_A}{2} \otimes \varrho_B) (Q \otimes C)]=\Tr\{\Tr_B[(\varrho-\frac{\II_A}{2} \otimes \varrho_B) (\II_A \otimes C)] Q \} $. Since $\Tr_B[(\varrho-\frac{\II_A}{2} \otimes \varrho_B) (\II_A \otimes C)]$ is a traceless operator, we have
\begin{align}
\max_Q\Tr\{\Tr_B[(\varrho-\frac{\II_A}{2} \otimes \varrho_B) (\II_A \otimes C)] Q \}= \nonumber \\ \frac{1}{\sqrt{2}}\norm{\Tr_B[(\varrho-\frac{\II_A}{2} \otimes \varrho_B) (\II_A \otimes C)]}.
\end{align}
This identifies~\eqref{eq:reduced_to_two_qubit} with the previous definition of the principal radius for two-qubit states~\eqref{eq:simple_r}.

\subsection{Remarks on other properties}
Many properties of the critical radius can be obtained easily by adapting the proofs for $2$-POVMs and the two-qubit system. This includes the scaling and the symmetry of the critical radius. As examples, we repeat these two statements and proofs.   

\subsubsection{Scaling of the critical radius}
\label{sec:general_scaling}
\begin{theorem}[Scaling of the critical radius]
For any state $\varrho$ and any $\lambda \ge 0$, we have
\begin{equation}
R^{-1}_n[\varrho]= \frac{1}{\lambda} R^{-1}_n[\lambda \varrho + (1- \lambda) \frac{\II_A}{d_A} \otimes \varrho_B].
\label{eq:scaling_general}
\end{equation} 
\label{th:scaling_general}
\end{theorem}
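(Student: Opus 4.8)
The plan is to mirror the two-qubit argument of Theorem~\ref{th:scaling}: I will show that the inverse fraction function itself scales by $\lambda$ under the given substitution, and then propagate this scaling through the supremum over $(Z,E)$ and the minimum over $\mu$. Write $\varrho' = \lambda\varrho + (1-\lambda)\frac{\II_A}{d_A}\otimes\varrho_B$ and $\bar\varrho = \varrho - \frac{\II_A}{d_A}\otimes\varrho_B$. The first step is to record the two facts that make everything work. First, the reduced state is preserved, $\Tr_A[\varrho'] = \lambda\varrho_B + (1-\lambda)\varrho_B = \varrho_B$, so $\varrho$ and $\varrho'$ impose exactly the same minimal requirement on $\mu$; the optimisation domain in \eqref{eq:general_crictical_radius} is therefore identical for both states. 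Second, a direct computation gives $\bar{\varrho'} = \varrho' - \frac{\II_A}{d_A}\otimes\varrho_B = \lambda\bar\varrho$.

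The crucial observation---and the reason the offset was inserted into \eqref{eq:fraction_function_general}---is that the numerator of the inverse fraction function is exactly $\sum_i \Tr[\bar\varrho\,(E_i\otimes Z_i)]$. Indeed, $\frac{1}{d_A}\Tr(E_i)\Tr(\varrho_B Z_i) = \Tr[(\frac{\II_A}{d_A}\otimes\varrho_B)(E_i\otimes Z_i)]$, so subtracting the offset converts $\sum_i\Tr[\varrho\,(E_i\otimes Z_i)]$ into $\sum_i\Tr[\bar\varrho\,(E_i\otimes Z_i)]$. Using $\bar{\varrho'}=\lambda\bar\varrho$, the numerator for $\varrho'$ equals $\lambda$ times the numerator for $\varrho$. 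Meanwhile the denominator of \eqref{eq:fraction_function_general} depends on the state only through $\varrho_B$, which is unchanged, so it is literally identical for $\varrho$ and $\varrho'$. Hence $F^{-1}[\varrho',\mu,Z,E] = \lambda\, F^{-1}[\varrho,\mu,Z,E]$.

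I would then take the supremum over $(Z,E)$ in \eqref{eq:general_principal_radius} and the minimum over $\mu$ in \eqref{eq:general_crictical_radius}. Since $\lambda\ge 0$, scalar multiplication commutes with both $\sup$ and $\min$, giving $r^{-1}_n[\varrho',\mu] = \lambda\, r^{-1}_n[\varrho,\mu]$ and hence $R^{-1}_n[\varrho'] = \lambda R^{-1}_n[\varrho]$, which is the claimed relation \eqref{eq:scaling_general} (read, for $\lambda>0$, as the reciprocal identity). The only point requiring care---the main, though minor, obstacle---is the interaction with the numerator-dominated convention, under which $F^{-1}$ is set to zero whenever its numerator vanishes. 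I would note that multiplication by $\lambda\ge 0$ preserves the vanishing of the numerator, so the scaling $F^{-1}[\varrho',\cdot]=\lambda F^{-1}[\varrho,\cdot]$ holds on the convention branch as well ($\lambda\cdot 0 = 0$); in particular the degenerate case $\lambda=0$ gives $\varrho'=\frac{\II_A}{d_A}\otimes\varrho_B$ with an identically vanishing numerator, consistently yielding $R^{-1}_n[\varrho']=0$.
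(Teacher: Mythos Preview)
Your proof is correct and follows essentially the same route as the paper: both rewrite the numerator of $F^{-1}$ as $\sum_i\Tr[\bar\varrho\,(E_i\otimes Z_i)]$, observe it scales by $\lambda$ under $\varrho\mapsto\varrho'$ while the denominator depends only on $\varrho_B$ and is unchanged, and then pass through the $\sup$ and $\min$. Your version is in fact more explicit than the paper's, which omits the check that the minimal-requirement domain is preserved and the handling of the numerator-dominated convention.
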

\begin{proof}
The proof is very simple. We first note that the numerator in the definition of $r^{-1}_n[\varrho,\mu]$ can be rewritten as $\sum_{i=1}^n \Tr [\varrho E_i \otimes Z_i] - \frac{1}{d_A} \sum_{i=1}^n E_i = \sum_{i=1}^n \Tr [(\varrho - \frac{\II_A}{d_A} \otimes \varrho_B) E_i \otimes Z_i]$. Then upon transforming $\varrho \to \lambda \varrho + (1- \lambda) \frac{\II_A}{d_A} \otimes \varrho_B$, this numerator gets a factor of $\lambda$ while the denominator is invariant.
\end{proof}


\subsubsection{Continuous symmetry of the critical radius}
The Bloch hyperplane $\P$ is the linear manifold of hermitian trace-$1$ operators acting on $\CC^{d_A} \otimes \CC^{d_B}$. For $U \in \U(d_A) $, $V \in \GL(d_B)$, consider the affine transformation from the Bloch hyperplane of the joint system into itself $\varphi_{(U,V)}:\P \to \P$, defined by 
\begin{equation}
\varphi_{(U,V)} (X) = \frac{(U \otimes V) X (U^\dagger \otimes V^\dagger)}{\Tr [(U \otimes V) X (U^\dagger \otimes V^\dagger)]}.
\end{equation}
for $X \in \P$. This is a group action of $\U(d_A) \times \GL(d_B)$ on $\P$. Note that $\varphi_{(U,V)}$ conserves the positivity, thus also maps the set of (bipartite) proper states into itself. 

\begin{theorem}[Continuous symmetry of the critical radius]
For any state $\varrho$ and $U \in \operatorname{U} (d_A)$, $V \in \operatorname{GL} (d_B)$, we have $R_n[\varrho] = R_n [\varphi_{(U,V)} \varrho]$. 
\label{th:invariant_general}
\end{theorem}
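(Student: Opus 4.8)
The plan is to lift the two-qubit argument (Lemma~\ref{lem:principal_invariance} together with Theorem~\ref{th:invariant}) essentially verbatim to the $n$-POVM, $d_A\times d_B$ setting, now working with the inverse principal radius $r^{-1}_n[\varrho,\mu]$ of~\eqref{eq:general_principal_radius} in place of $r[\varrho,\mu]$. Writing $\tilde\varrho=\varphi_{(U,V)}(\varrho)$ and $N=\Tr(V\varrho_B V^{\dagger})$, I would first record that $\tilde\varrho_B = V\varrho_BV^{\dagger}/N$, and then introduce exactly the same transported measure $\tilde\mu$ as in Lemma~\ref{lem:principal_invariance}, namely $\int \d\tilde\mu(\sigma)f(\sigma)=\tfrac1N\int \d\mu\circ\varphi_{V^{-1}}(\sigma)\,f(\sigma)/\Tr[V^{-1}\sigma(V^{-1})^{\dagger}]$. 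The substitution $\sigma=\varphi_V(\tau)$, combined with the identity $\Tr[V^{-1}\varphi_V(\tau)(V^{-1})^{\dagger}]=1/\Tr(V\tau V^{\dagger})$ and the minimal requirement $\int \d\mu(\tau)\tau=\varrho_B$, then yields $\int \d\tilde\mu(\sigma)\sigma = V\varrho_BV^{\dagger}/N=\tilde\varrho_B$, so that $\tilde\mu$ is a genuine probability measure (the case $f\equiv 1$) satisfying the minimal requirement for $\tilde\varrho$.

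The heart of the argument is to establish $r^{-1}_n[\tilde\varrho,\tilde\mu]=r^{-1}_n[\varrho,\mu]$ by matching the inverse fraction function term by term under the change of optimisation variables $Z_i'=V^{\dagger}Z_iV$ and $E_i'=U^{\dagger}E_iU$. For the numerator of~\eqref{eq:fraction_function_general} I would use $\Tr[\tilde\varrho(E_i\otimes Z_i)]=\tfrac1N\Tr[\varrho(E_i'\otimes Z_i')]$ together with $\Tr(E_i)=\Tr(E_i')$ and $\Tr(\tilde\varrho_B Z_i)=\tfrac1N\Tr(\varrho_B Z_i')$, so that both the main term and the offset acquire a common factor $1/N$. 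For the denominator I would push the same change of variables $\sigma=\varphi_V(\tau)$ through the integral, using $\dprod{Z_i}{\varphi_V(\tau)}=\dprod{Z_i'}{\tau}/\Tr(V\tau V^{\dagger})$ and hence $\max_i\{\dprod{Z_i}{\varphi_V(\tau)}\}=\max_i\{\dprod{Z_i'}{\tau}\}/\Tr(V\tau V^{\dagger})$, the positive scalar $\Tr(V\tau V^{\dagger})$ commuting with the maximum; again a factor $1/N$ is produced and the offset is the same as in the numerator. Since the overall factor $1/N>0$ cancels between numerator and denominator and respects the numerator-dominated convention (the transformed numerator vanishes precisely when the original does), one obtains $F^{-1}[\tilde\varrho,\tilde\mu,Z,E]=F^{-1}[\varrho,\mu,Z',E']$. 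Observing that $E\mapsto U^{\dagger}EU$ is a bijection of $n$-POVMs (unitary conjugation preserves $0\le E_i\le\II_A$ and $\sum_iE_i=\II_A$) and that $Z_i\mapsto V^{\dagger}Z_iV$ is a bijection of hermitian tuples ($V$ invertible), taking the supremum over $(Z,E)$ gives $r^{-1}_n[\tilde\varrho,\tilde\mu]=r^{-1}_n[\varrho,\mu]$.

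Finally I would conclude exactly as in Theorem~\ref{th:invariant}. Choosing an optimal ensemble $\mu^{\ast}$ for $\varrho$ with $R^{-1}_n[\varrho]=r^{-1}_n[\varrho,\mu^{\ast}]$, the transported $\tilde{\mu^{\ast}}$ is admissible for $\tilde\varrho$ and satisfies $r^{-1}_n[\tilde\varrho,\tilde{\mu^{\ast}}]=R^{-1}_n[\varrho]$, whence $R^{-1}_n[\tilde\varrho]\le R^{-1}_n[\varrho]$; applying the same construction to the inverse transformation $\varphi_{(U^{\dagger},V^{-1})}$ gives the reverse inequality, so $R^{-1}_n[\tilde\varrho]=R^{-1}_n[\varrho]$, equivalently $R_n[\varrho]=R_n[\tilde\varrho]$. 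I expect the only genuinely delicate point to be the denominator computation: one must verify that the $\max_i$ can be extracted from the positive Jacobian-type factor $\Tr(V\tau V^{\dagger})$ and that the transported object is truly a normalised measure. Everything else is a faithful transcription of the two-qubit calculation, with the single operator $C$ replaced by the tuple $\{Z_i\}$ and the projector replaced by the POVM elements $\{E_i\}$.
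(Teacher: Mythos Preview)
Your proposal is correct and follows essentially the same route as the paper: the paper also reduces the theorem to a direct generalisation of Lemma~\ref{lem:principal_invariance} (stated as Lemma~\ref{lem:general_principal_invariance}), proving the invariance of the principal radius by the very same change of variables $\tilde Z_i=V^{\dagger}Z_iV$, $\tilde E_i=U^{\dagger}E_iU$ in the fraction function and then concluding as in Theorem~\ref{th:invariant}. If anything, your write-up is slightly more explicit than the paper's in tracking the common factor $1/N$, noting the compatibility with the numerator-dominated convention, and spelling out why $(Z,E)\mapsto(\tilde Z,\tilde E)$ is a bijection of the optimisation domain.
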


The proof of this theorem then goes very similarly to the proof of Theorem~\ref{th:invariant}, provided the following lemma is used instead of Lemma~\ref{lem:principal_invariance}. 

\begin{lemma}
Consider a given state $\varrho$, a given probability measure (LHS ensemble) $\mu$ satisfying the minimal requirement $\int \d \mu (\sigma) \sigma = \varrho_B$. For $U \in \U(d_A)$ and $V \in \operatorname{GL} (d_B)$, we denote $\tilde{\varrho}=\varphi_{(U,V)} (\varrho)$. Note that there exists a unique probability measure $\tilde{\mu}$ on $\B_B$ defined by  
\begin{equation}
\int \d \tilde{\mu} (\sigma) f(\sigma) = \frac{1}{\Tr (V \varrho_B V^{\dagger})} \int  \frac{\d \mu \circ \varphi_{V^{-1}} (\sigma) }{ \Tr [V^{-1} \sigma (V^{-1})^{\dagger}]} f(\sigma)
\end{equation}
for all continuous functions $f$. Then $\tilde{\mu}$ satisfies the minimal requirement for $\tilde{\varrho}$ and $r_n [\varrho,\mu] = r_n [\tilde{\varrho},\tilde{\mu}]$.
\label{lem:general_principal_invariance}
\end{lemma}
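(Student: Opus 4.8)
The plan is to mirror the structure of the proof of Lemma~\ref{lem:principal_invariance}, carrying out the same two steps but now tracking the POVM elements $E_i$ and the operators $Z_i$ that appear in the inverse fraction function~\eqref{eq:fraction_function_general}. Throughout I abbreviate $\mathcal{N}=\Tr(V\varrho_B V^{\dagger})$, and note the elementary identity $\mathcal{N}=\Tr[(U\otimes V)\varrho(U^{\dagger}\otimes V^{\dagger})]$, so that $\tilde{\varrho}=\mathcal{N}^{-1}(U\otimes V)\varrho(U^{\dagger}\otimes V^{\dagger})$ and $\tilde{\varrho}_B=\mathcal{N}^{-1}V\varrho_B V^{\dagger}$. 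The key bookkeeping device is the pair of bijections $E_i\mapsto E_i'=U^{\dagger}E_i U$ on POVMs (a bijection since the unitary $U$ preserves both $0\le E_i\le\II_A$ and $\sum_i E_i=\II_A$) and $Z_i\mapsto Z_i'=V^{\dagger}Z_i V$ on tuples of hermitian operators (a bijection since $V\in\GL(d_B)$).

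For step (i), the proof that $\tilde{\mu}$ meets the minimal requirement $\int\d\tilde{\mu}(\sigma)\sigma=\tilde{\varrho}_B$ is verbatim the qubit argument. I would change variables $\sigma=\varphi_V(\tau)$ in the defining integral for $\tilde{\mu}$, use the identity $\Tr[V^{-1}\varphi_V(\tau)(V^{-1})^{\dagger}]=\Tr(V\tau V^{\dagger})^{-1}$ (valid since $\Tr\tau=1$) to collapse the integrand to $V\tau V^{\dagger}$, and then invoke the minimal requirement $\int\d\mu(\tau)\tau=\varrho_B$ for $\mu$. This yields $\int\d\tilde{\mu}(\sigma)\sigma=\mathcal{N}^{-1}V\varrho_B V^{\dagger}=\tilde{\varrho}_B$; nothing in this step uses the qubit dimension.

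For step (ii), the goal is to establish $F^{-1}[\tilde{\varrho},\tilde{\mu},Z,E]=F^{-1}[\varrho,\mu,Z',E']$, from which $r_n[\varrho,\mu]=r_n[\tilde{\varrho},\tilde{\mu}]$ follows on taking suprema through the bijection. I would treat numerator and denominator separately and show each acquires the same overall factor $\mathcal{N}^{-1}$. Rewriting the numerator as $\sum_i\Tr[\bar{\tilde{\varrho}}(E_i\otimes Z_i)]$ with $\bar{\tilde{\varrho}}=\mathcal{N}^{-1}(U\otimes V)\bar{\varrho}(U^{\dagger}\otimes V^{\dagger})$ (using $U\II_A U^{\dagger}=\II_A$ exactly as in the proof of Theorem~\ref{th:scaling_general}) and cycling the conjugations through the trace converts it into $\mathcal{N}^{-1}\sum_i\Tr[\bar{\varrho}(E_i'\otimes Z_i')]$, i.e.\ $\mathcal{N}^{-1}$ times the numerator of $F^{-1}[\varrho,\mu,Z',E']$. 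For the denominator, the subtracted offset $\frac{1}{d_A}\sum_i\Tr(E_i)\Tr(\tilde{\varrho}_B Z_i)$ picks up $\mathcal{N}^{-1}$ and becomes the corresponding offset for $(Z',E')$, since $\Tr(E_i)=\Tr(E_i')$ and $\Tr(V\varrho_B V^{\dagger}Z_i)=\Tr(\varrho_B Z_i')$; while for the integral term I would again change variables $\sigma=\varphi_V(\tau)$ and use $\dprod{Z_i}{\varphi_V(\tau)}=\Tr(V\tau V^{\dagger})^{-1}\dprod{Z_i'}{\tau}$ to obtain $\int\d\tilde{\mu}(\sigma)\max_i\dprod{Z_i}{\sigma}=\mathcal{N}^{-1}\int\d\mu(\tau)\max_i\dprod{Z_i'}{\tau}$. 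The crucial point is that the positive scalar $\Tr(V\tau V^{\dagger})$ factors out of the maximum unchanged, so the $\max_i$ structure survives intact.

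The main obstacle — and the one place to be genuinely careful — is confirming that the $\max_i$ term and the POVM-offset term transform with exactly the same prefactor $\mathcal{N}^{-1}$ as the numerator, so that the factor cancels cleanly in the ratio. One must also check that the numerator-dominated convention is respected: since numerator and denominator scale by the common positive constant $\mathcal{N}^{-1}$, the numerator of $F^{-1}[\tilde{\varrho},\tilde{\mu},Z,E]$ vanishes precisely when that of $F^{-1}[\varrho,\mu,Z',E']$ does, so both are assigned the value $0$ consistently. Granting these, the equality of the inverse fraction functions holds pointwise across the bijection, the suprema in~\eqref{eq:general_principal_radius} agree, and the lemma follows; the proof of Theorem~\ref{th:invariant_general} then proceeds exactly as that of Theorem~\ref{th:invariant}.
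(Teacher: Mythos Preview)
Your proposal is correct and follows essentially the same approach as the paper: both argue by the same change of variable $\sigma=\varphi_V(\tau)$ for the integral term, the same bijections $E_i\mapsto U^{\dagger}E_iU$ and $Z_i\mapsto V^{\dagger}Z_iV$, and the observation that numerator and denominator each pick up the common factor $\mathcal{N}^{-1}$ which then cancels. Your version is slightly more thorough in explicitly verifying the numerator-dominated convention and the bijectivity of the POVM map, points the paper leaves implicit.
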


\begin{proof}
(i) The proof that  $\tilde{\mu}$ satisfies the minimal requirement for $\tilde{\varrho}$ goes exactly as the proof of Lemma~\ref{lem:principal_invariance}.

(ii) Now we prove that $r_n[\varrho,\mu] = r_n[\tilde{\varrho},\tilde{\mu}]$. Using the definition~\eqref{eq:general_principal_radius}, we have $r^{-1}_n[\tilde{\varrho},\tilde{\mu}]$ as
\begin{equation}
\sup_{Z,E} \frac{\sum_{i=1}^{n} \Tr [\tilde{\varrho} (E_i \otimes Z_i) ] - \frac{1}{d_A} \sum_{i=1}^{n}  \Tr(E_i) \Tr (\tilde{\varrho}_B Z_i)}{\int \d \tilde{\mu} (\sigma) \max_i \{\dprod{Z_i}{P}\} - \frac{1}{d_A} \sum_{i=1}^{n} \Tr(E_i)  \Tr (\tilde{\varrho}_B Z_i)}.
\label{eq:def_of_r_repeated}
\end{equation}
Now using the definition of $\tilde{\mu}$, we find $\int \d \tilde{\mu} (\sigma) \max_i \{\dprod{Z_i}{\sigma}\}$ to be
\begin{align}
\frac{1}{\Tr (V \varrho_B V^{\dagger})} \int  \frac{\d \mu \circ \varphi_{V^{-1}} (\sigma) }{ \Tr (V^{-1} \sigma (V^{-1})^{\dagger})} \max_i \{\dprod{Z_i}{\sigma}\}.
\end{align}
Upon making the transformation of variable $\sigma= \varphi_V(\tau)$, this becomes
\begin{align}
\frac{1}{\Tr (V \varrho_B V^{\dagger})} \int  \d \mu (\tau) \max_i \Tr (V^\dagger Z_i V \tau).
\end{align}
The denominator of the expression under the supremum in~\eqref{eq:def_of_r_repeated} can then be written as 
\begin{equation}
\int \d \mu (\sigma) \max_i \langle \tilde{Z}_i, \sigma \rangle - \frac{1}{d_A}\sum_{i=1}^{d_A}  \Tr(E_i) \Tr (\varrho_B \tilde{Z}_i),
\end{equation}
where $\tilde{Z}_i=V^\dagger Z_i V$.
Now using the definition of $\tilde{\varrho}$, the numerator can be written as
\begin{equation}
\sum_{i=1}^{n} \Tr [\varrho (\tilde{E}_i \otimes  \tilde{Z_i}) ] - \frac{1}{d_A}  \sum_{i=1}^{d_A} \Tr (\tilde{E}_i) \Tr (\varrho_B \tilde{Z}_i).
\end{equation}
where  $\tilde{E}_i = U^\dagger E_i U$. So the principal radius~\eqref{eq:def_of_r_repeated} can be written as
\begin{equation}
\sup_{\tilde{Z},\tilde{E}} \frac{\sum_{i=1}^{n} \Tr [\varrho (\tilde{E}_i \otimes  \tilde{Z_i}) ] - \frac{1}{d_A}  \sum_{i=1}^{d_A} \Tr (\tilde{E}_i) \Tr (\varrho_B \tilde{Z}_i)}{\int \d \mu (\sigma) \max_i \langle \tilde{Z}_i, \sigma \rangle - \frac{1}{d_A}\sum_{i=1}^{d_A}  \Tr(\tilde{E}_i) \Tr (\varrho_B \tilde{Z}_i)},
\end{equation}
where we have used $\Tr(E_i)=\Tr(\tilde{E}_i)$.
Since the set of $(\tilde{Z},\tilde{E})$ are the same as that of $(Z,E)$, this expression in fact coincides with the definition of $r^{-1}_n[\varrho,\mu]$. 
\end{proof}

\begin{figure}[t!]
\includegraphics[width=0.5\textwidth]{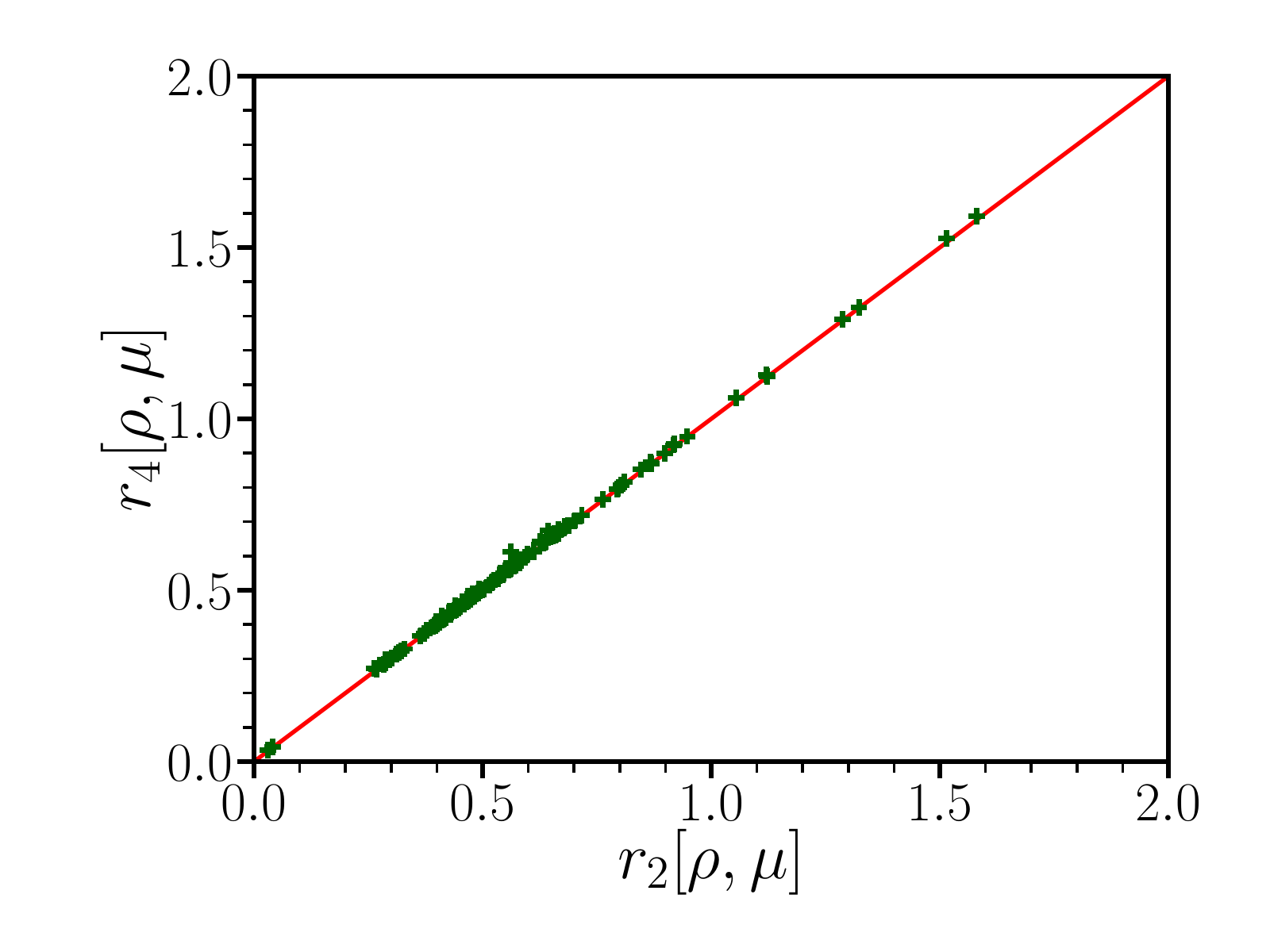}
\caption{The principal radii of random states and random LHS ensembles. \new{While the values of $r_2[\varrho,\mu]$ are computed exactly, $r_4[\varrho,\mu]$ is estimated by a simulated annealing algorithm. If POVMs and PVMs were inequivalent, one would expect $r_4[\varrho,\mu]$ to be strictly smaller $r_2[\varrho,\mu]$ for certain $\varrho$ and $\mu$. Surprisingly, upto the numerical accuracy, we find that the estimated $r_4[\varrho,\mu]$ is equal to $r_2[\varrho,\mu]$, which implies that POVMs and PVMs are equivalent in a strong sense.}}
\label{fig:POVM}
\end{figure}

\section{On the relation between PVMs and POVMs}
Armed with the newly defined concepts, we now discuss the question of the equivalence of different classes of measurements in quantum steering, in particular of PVMs and POVMs. Because POVMs of $n$ outcomes constitute a subset of POVMs of $n+1$ outcomes, we have a decreasing chain $r_2 [\varrho,\mu] \ge r_3   [\varrho,\mu] \ge \cdots$. As a consequence, the critical radii also form a decreasing chain $R_2[\varrho] \ge R_3[\varrho]\ge \cdots$. Since the extreme POVMs have at most $d_A^2$ non-empty outcomes, both of these two chains turn into equalities at $n=d_A^2$. We denote $R_{\mathrm{POVM}}[\varrho]=R_{d_A^2}[\varrho]$. Where does the critical radius for PVMs, here denoted  $R_{\mathrm{PVM}}[\varrho]$, fit into this chain? There has been a suspicion that $R_{d_A^2} [\varrho]= R_{\mathrm{PVM}}[\varrho]$, or POVMs and PVMs are equivalent in quantum steering. Until now, there has been no concrete evidence whether this conjecture is true except for certain special states~\cite{chaujpa,
Werner2014a}. 

Here restricted to two-qubit states, we investigate a stronger hypothesis: $r_{\mathrm{POVM}} [\varrho,\mu]= r_{\mathrm{PVM}}[\varrho,\mu]$ \emph{for all} $\mu$, which certainly implies that $R_{\mathrm{POVM}} [\varrho]= R_{\mathrm{PVM}}[\varrho]$. For steering in two-qubit systems, since PVMs are equivalent to $2$-POVMs (see, e.g., Ref.~\cite{chaupra}), the above hypothesis amounts to ask if $r_2[\varrho,\mu]=r_4[\varrho,\mu]$. We test this hypothesis by sampling random states, constructing random LHS ensembles for each state. We then compute $r_2[\varrho,\mu]$ exactly. The computation of $r_4[\varrho,\mu]$ is performed by the simulated annealing algorithm (see below). Although the algorithm in principle only provides an upper bound of $r_4[\varrho,\mu]$, repeated runs indicate that it is close to the exact value of $r_4[\varrho,\mu]$. To our surprise, we find that in any single case, the obtained upper bound of $r_4[\varrho,\mu]$ approaches $r_2[\varrho,\mu]$ from above; see Figure~\ref{fig:POVM}. This 
strongly 
supports the hypothesis that $r_4[\varrho,\mu]=r_2[\varrho,\mu]$ at least for generic states $\rho$ and generic LHS ensembles $\mu$. This in turn supports the conjecture that for two-qubit systems, 
POVMs are equivalent to PVMs. 



\begin{remark}
Let us make some remarks on the computation of the principal radius.
From the previous section, it is clear that in actual computation, we are principally interested in the case $n=d_A^2$. While it is not obvious from the first look, the optimisation in the computation of $r_n[\varrho,\mu]$ can be limited to some simple subset of POVMs, namely rank-$1$ POVMs. We first note that $r^{-1}_n[\varrho,\mu]$ can be written as
\begin{equation}
\inf \left\{y: y \ge 0, y \ge F^{-1}[\varrho,\mu,Z,E] \right\}.
\end{equation}
Then since the denominator of $F^{-1}[\varrho,\mu,Z,E]$ is positive, we can write $r_n[\varrho,\mu]$ as
\begin{equation}
\sup \{x: x \ge 0, \int \d \mu (\sigma) \max_i \dprod{Z_i}{\sigma} \ge \sum_{i=1}^{n} \Tr [\varrho_x E_i \otimes Z_i]\},
\end{equation}
where $\varrho_x= x \varrho + (1-x) ({\II_A}/{d_A}) \otimes \varrho_B$. The second inequality is required to hold for all POVMs $E$ and arbitrary composite operators $Z$. Note that this inequality is precisely the condition for $\varrho_x$ to be unsteerable with LHS ensemble $\mu$, c.~f. equation~\eqref{eq:general_steering_inequality}. Then we know that it holds for all POVMs $E$ if it holds for all rank-$1$ POVMs; see, e.g., Ref.~\cite{Barret2002a}. 

Now the optimisation~\eqref{eq:general_principal_radius} to compute $r_n[\varrho,\mu]$ when limiting $E$ to rank-$1$ POVMs is completely similar to the computation of the gap function in Ref.~\cite{chaujpa}. We refer to Ref.~\cite{chaujpa} for the detailed description of the simulated annealing algorithm. 
\end{remark}
}

\end{document}